\newcommand{\g}{\mathfrak g}
\newcommand{\h}{\mathfrak h}
\DeclareMathOperator{\Hom}{Hom}
\DeclareMathOperator{\GL}{GL}
\DeclareMathOperator{\Aut}{Aut}
\DeclareMathOperator{\Ad}{Ad}
\DeclareMathOperator{\ad}{ad}
\newtheorem{Thm}{Theorem}[section]
\newtheorem{Pro}[Thm]{Proposition}
\newtheorem{Lem}[Thm]{Lemma}
\newtheorem{Cor}[Thm]{Corollary}
\newtheorem{Def-Pro}[Thm]{Definition-Proposition}
\newtheorem{Def}[Thm]{Definition}
\theoremstyle{definition}
\newtheorem{Ex}[Thm]{Example}
\newtheorem{Rm}[Thm]{Remark}
\begin{document}

\title{Poisson homogeneous spaces of Poisson 2-groups}

\author{Honglei Lang}
\address{College of Science, China Agricultural University, Beijing 100083, China}
\email{hllang@cau.edu.cn}
\author{Zhangju Liu}
\address{School of Mathematical and Statistical Sciences, Henan University, Kaifeng 475004, China}
\email{zhangju@henu.edu.cn}

\footnotetext{\emph{Keywords:} Lie 2-group action, Poisson 2-group, homogeneous space}

\begin{abstract}
Drinfeld classified Poisson homogeneous spaces of a Poisson Lie group in terms of Dirac structures of the Lie bialgebra. In this paper, we study homogeneous spaces of a 2-group and develop Drinfeld theorem in the Poisson 2-group context.\end{abstract}

\maketitle
\tableofcontents

\section{Introduction}

Poisson Lie groups were introduced by Drinfeld  as the semi-classical limit of quantum groups. A Poisson Lie group is a Lie group with a Poisson structure such that the group multiplication is a Poisson map. For the Lie algebra of the Lie group, its dual space is also endowed with a Lie algebra from the Poisson structure. This pair composes a Lie bialgebra. Drinfeld proved that there is a one-one correspondence between connected, simply connected Poisson Lie groups and Lie bialgebras. 


A Poisson homogeneous space of a Poisson Lie group is a homogeneous space with a Poisson structure such that the action is a Poisson map. 
 Every Poisson homogeneous space admits an integration to a symplectic groupoid \cite{BIL}. The classification of Poisson homogeneous spaces was given by Drinfeld \cite{D2} in terms of  Dirac structures (Lagrangian subalgebras) of the corresponding Lie bialgebra. 

Drinfeld's classification theorem on homogeneous spaces  was generalized from Poisson Lie groups to Poisson groupoids  in \cite{LWX}. A Poisson groupoid is a groupoid $\Gamma$ with a Poisson structure $\Pi$ such that the graph of groupoid multiplication 
$\{(g,h,gh)|(g,h)\in \Gamma^{(2)}\}\subset \Gamma\times \Gamma\times \overline{\Gamma}$ is coisotropic, where $\overline{\Gamma}$ denotes $\Gamma$ with the opposite Poisson structure $-\Pi$. 
The infinitesimals of Poisson groupoids are Lie bialgebroids. 
In fact, there is a bijection between Poisson Lie groupoid $(\Gamma,\Pi)$ and Lie bialgebroid $(A,A^*)$, where $\Gamma\rightrightarrows M$ is an $s$-connected and simply connected Lie groupoid integrating the Lie algebroid $A$. The double $A\oplus A^*$ of a Lie bialgebroid is a Courant algebroid. In \cite{LWX}, a Dirac structure $L$ of a Lie bialgebroid $(A,A^*)$ was introduced as a subbundle of $A\oplus A^*$ which is maximally isotropic and integrable with respect to the natural pairing and Courant bracket. And it was proved that there is a bijection between Poisson homogeneous spaces of Poisson groupoids and regular Dirac structures of Lie bialgebroids. Here a homogeneous space of a Lie groupoid $\Gamma$ is of the form $\Gamma/\!\!/\Gamma_1$, where $\Gamma_1$ is a wide (containing all the identities) subgroupoid of $\Gamma$. See \cite{JL} for homogeneous spaces of Dirac groupoids. 

To describe symmetries of symmetries, the concept of (strict) Lie 2-groups was introduced, which is a Lie groupoid such that the total and base manifolds are Lie groups and the groupoid structure maps are Lie group homomorphisms; see \cite{Baez2}. Lie 2-groups can be identified with Lie group crossed modules. A Lie group crossed module $\mathbb{G}$ is a pair of Lie groups $G_0$ and $G_1$ together with a Lie group homomorphism $G_1\xrightarrow{\Phi}G_0$ and an action of $G_0$ on $G_1$ by automorphisms satisfying some compatibility conditions. 
 A Poisson 2-group \cite{CSX2} is a Lie 2-group with a Poisson structure on the total space such that  it is {simultaneously} a Poisson groupoid and a Poisson Lie group.  

Lie 2-algebras are categorification of Lie algebras \cite{Baez}. (Strict) Lie 2-algebras can be identified with Lie algebra crossed modules, which are the infinitesimal of Lie group crossed modules. Likewise, a Lie 2-bialgebra can be viewed as a pair of Lie algebra crossed modules $(\g_1\xrightarrow{\phi}\g_0)$ and $(\g^*_{0}\xrightarrow{-\phi^*}\g_1^*)$ such that $(\g_0\ltimes \g_1,\g_1^*\ltimes \g_0^*)$ is a Lie bialgebra \cite{CSX}. See also \cite{BSZ}.
In \cite{CSX2}, the authors proved that there is a one-to-one correspondence between connected and simply connected Poisson 2-groups and Lie 2-bialgebras. 

The goal of this paper is to prove the 2-categorical analogue of the Drinfeld classification theorem for Poisson homogeneous spaces. A Poisson homogeneous space of a Poisson 2-group is expected to be a homogeneous space of the Poisson Lie group and Poisson groupoid at the same time. However, note that a closed Lie 2-subgroup generally does not share the same base manifold with the original Lie 2-group. This leads to the inapplicability of the framework  in \cite{LWX} for homogeneous spaces of Poisson groupoids. This problem is tackled by considering a quotient Poisson groupoid.  For a Lie 2-group  $\mathbb{G}: G_1\to G_0$ with a closed  2-subgroup $\mathbb{H}: H_1\to H_0$, the group quotient $\mathbb{G}/\mathbb{H}$ can also be viewed as a groupoid homogeneous space of the groupoid $\mathbb{G}/H_0$. Suppose further that $\mathbb{G}$ is a Poisson 2-group and $H_0\subset \mathbb{G}$ is a Poisson subgroup so that $\mathbb{G}/H_0$ is a quotient Poisson groupoid. Then we are able to introduce Poisson homogeneous spaces of Poisson 2-groups and prove that there is a one-one correspondence between such Poisson homogeneous spaces and $H_0$-invariant Dirac structures $L$ of the Lie bialgebra $(\g_1,\g_1^*)$ such that $L\cap \g_1=\h_1$. 

On the way to our main result, we develop the theory of  Lie 2-group actions on Lie groupoids, the tangent and cotangent Lie 2-groups of a 2-group and homogeneous spaces of Lie 2-groups. These topics are of  independent interest. 

Here is the outline of this paper. 
In Section 2, we interpret a Lie 2-group action on a Lie groupoid as a homomorphism of group crossed modules. Applications include representations of Lie 2-groups on 2-vector spaces. Focusing on the adjoint and coadjoint representations, we constructed the tangent and cotangent Lie 2-groups of a 2-group; Section 3 delves into a comprehensive study of homogeneous spaces of Lie 2-groups. A homogeneous space not only has  a quotient Lie groupoid structure, but also a fiber bundle structure. Finally, in Section 4, we establish a correspondence between Poisson homogeneous spaces of Poisson 2-groups and invariant Dirac structures of some Lie bialgebras.

\section{Lie 2-group actions and representations}

In \cite{ZZ}, an action of a (strict) Lie 2-algebra on a Lie algebroid is introduced as a homomorphism of Lie algebra crossed modules. It is then integrated to  an action of a Lie 2-group on a Lie groupoid by using  Mackenzie's double Lie algebroids and groupoids. 
Later in \cite{GZ}, a Lie 2-group action is defined directly as  a group action in the category of Lie groupoids. Here we shall show the automorphism group of a Lie groupoid together with its  bisection group gives rise to a natural  group crossed module and  a Lie 2-group action can be interpreted as a  homomorphism of group crossed modules. 

In particular, for a linear Lie 2-group action, we recover the representation of a Lie 2-group on a 2-vector space. The dual of a linear Lie 2-group action is also studied.  As applications, we consider the adjoint and coadjoint actions of a Lie 2-group on its Lie 2-algebra and  the dual. The tangent and cotangent Lie 2-groups  are thus obtained.

Let us first recall the notions of Lie group (algebra) crossed modules and Lie 2-groups following \cite{Baez, Baez2}.
A {\bf Lie group crossed module} is a pair of Lie groups $G_1$ and $G_0$ with a Lie group homomorphism $\Phi: G_1\to G_0$ and an action of $G_0$ on $G_1$ by automorphisms such that
\[\Phi(g_0\triangleright g_{1})=g_0\Phi(g_1)\g_0^{-1},\qquad \Phi(g_1)\triangleright g'_1=g_1g'_1g_1^{-1},\qquad \forall g_0\in G_0,g_1,g'_1\in G_1.\]
A {\bf Lie 2-group} is a Lie groupoid $\Gamma\rightrightarrows \Gamma_0$, such that $\Gamma$ and $\Gamma_0$ are both Lie groups and all the groupoid structure maps are group homomorphisms. 

There is a bijection between Lie group crossed modules and Lie 2-groups. Given a Lie group crossed module $\mathbb{G}: G_1\to G_0$, the associated Lie 2-group is  $G_0\ltimes G_1\rightrightarrows G_0$, or simply denoted by $G_0\ltimes G_1$. The group multiplication on $G_0\ltimes G_1$ is given by 
\begin{eqnarray}\label{gpstr}
(g_0,g_1)\diamond (l_0,l_1)=(g_0l_0,(l_0^{-1}\triangleright g_1)l_1).
\end{eqnarray}
The source, target maps of the Lie groupoid  are
$s(g_0,g_1)=g_0$ and $t(g_0,g_1)=g_0\Phi(g_1)$, 
and the groupoid multiplication is 
\begin{eqnarray}\label{gpoidstr}
(g_0,g_1)*(l_0,l_1)=(g_0,g_1l_1),\qquad \mathrm{if} \quad l_0=g_0\Phi(g_1).
\end{eqnarray} 

A {\bf Lie algebra crossed module} is a pair of Lie algebras $\g_0$ and $\g_1$ with a Lie algebra homomorphism $\phi:\g_1\to \g_0$ and an action of $\g_0$ on $\g_1$ by derivations such that
\[\phi(x\triangleright u)=[x,\phi(u)],\qquad [u,v]=\phi(u)\triangleright v,\qquad \forall x\in \g_0,u,v\in \g_1.\]
Lie algebra crossed modules are equivalent to strict Lie 2-algebras \cite{Baez}. 

Throughout this paper, we use Lie 2-groups (algebras) and Lie group (algebra) crossed modules interchangeably.

\begin{Def}\cite{GZ}
Let $\mathbb{G}:G_1\xrightarrow{\Phi} G_0$ be a Lie 2-group and $P\rightrightarrows P_0$ a Lie groupoid.  A {\bf Lie 2-group action} of $\mathbb{G}$ on $P$, or a {\bf $\mathbb{G}$-action},  is a Lie group action $\triangleright: \mathbb{G}\times P\to P$  such that it is also a Lie groupoid homomorphism,
where  $\mathbb{G}\times P\rightrightarrows G_0\times P_0$ is the direct product Lie groupoid of $\mathbb{G}$ and $P$.
\end{Def}
For a $\mathbb{G}$-action, we have 
\[\iota(g_0\triangleright p_0):=\iota(g_0)\triangleright  \iota(p_0),\quad s(g\triangleright p)=s(g)\triangleright s(p),\quad t(g\triangleright p)=t(g)\triangleright t(p),\qquad \forall g\in \mathbb{G},p\in P,\]
and 
\begin{eqnarray}\label{goidmor}
(g*_{\mathbb{G}} g')\triangleright (p*p')=(g\triangleright p)*(g'\triangleright p'),\qquad \forall (g,g')\in \mathbb{G}^{(2)},(p,p')\in P^{(2)}.
\end{eqnarray}
Here $\mathbb{G}^{(2)}:=\{(g,g')\in \mathbb{G}\times \mathbb{G}; t(g)=s(g')\}$ is the set  of composable pairs.

The left action of a Lie 2-group on itself by left group translation is obviously a Lie 2-group action on a Lie groupoid,  as the group multiplication is a groupoid homomorphism.

Suppose that a  Lie 2-group $\mathbb{G}:G_1\xrightarrow{\Phi} G_0$ acts on a Lie groupoid $P\rightrightarrows P_0$ freely and properly. 
Then  the quotient space $P/ \mathbb{G}\rightrightarrows P_0/G_0$ is a Lie groupoid such that 
the projection $P\to P/\mathbb{G}$ is a Lie groupoid homomorphism.

A Lie 2-group action can also be characterized as a homomorphism of group crossed modules. 
Let $P\rightrightarrows P_0$ be a Lie groupoid. A {\it bisection} of $P$  is  a splitting $\gamma: P_0\to P$ of the source map $s$ ($s\circ \gamma=\mathrm{id}$) such that  $\phi_\gamma:=t\circ \gamma: P_0\to P_0$ is  a diffeomorphism. All bisections form a group $\mathrm{Bis}(P)$ with the multiplication and inverse given by
\begin{eqnarray}\label{bispro}
\gamma\cdot \gamma'(x)=\gamma'(x)*\gamma(\phi_{\gamma'}(x)),\qquad \gamma^{-1}(x)=\gamma(\phi_\gamma^{-1}x)^{-1},\qquad \forall x\in P_0.
\end{eqnarray}
Note that $\phi_{\gamma\cdot \gamma'}=\phi_{\gamma}\circ \phi_{\gamma'}$.
\begin{Pro}\label{newcm}
Let $P \rightrightarrows P_0$ be a Lie groupoid and $\Aut(P)$ the group of Lie groupoid automorphisms. Then we have a  group crossed module
\[\Psi: \mathrm{Bis}(P)\to \Aut(P),\qquad \Psi(\gamma)(p)=\gamma(s(p))^{-1}* p*\gamma(t(p)),\]
where the action of $\Aut(P)$ on $\mathrm{Bis}(P)$ is 
\begin{eqnarray}\label{autact}
(D\triangleright \gamma)(x)=D(\gamma D_0^{-1}(x)),\qquad D_0=D|_{P_0}\in \mathrm{Diff}(P_0).
\end{eqnarray}
\end{Pro}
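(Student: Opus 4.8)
The plan is to verify directly the defining data of a Lie group crossed module in the sense recalled above, taking for granted (as already asserted) that $\mathrm{Bis}(P)$ is a group under \eqref{bispro} and that $\Aut(P)$ is a group under composition. Concretely I must check: (i) each $\Psi(\gamma)$ genuinely lies in $\Aut(P)$; (ii) $\Psi$ is a group homomorphism; (iii) formula \eqref{autact} defines an action of $\Aut(P)$ on $\mathrm{Bis}(P)$ by group automorphisms; and (iv) the two compatibility identities $\Psi(D\triangleright\gamma)=D\circ\Psi(\gamma)\circ D^{-1}$ and $\Psi(\gamma)\triangleright\gamma'=\gamma\cdot\gamma'\cdot\gamma^{-1}$. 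Every step is a bookkeeping computation whose only recurring device is that $\gamma(x)*\gamma(x)^{-1}=1_x$ and $\gamma(x)^{-1}*\gamma(x)=1_{\phi_\gamma(x)}$, so that adjacent $\gamma$-factors telescope once their base points agree.

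For (i) I would first read off from the three factors of $\gamma(s(p))^{-1}*p*\gamma(t(p))$ that $s(\Psi(\gamma)(p))=\phi_\gamma(s(p))$ and $t(\Psi(\gamma)(p))=\phi_\gamma(t(p))$; in particular $\Psi(\gamma)|_{P_0}=\phi_\gamma$, a diffeomorphism. Multiplicativity $\Psi(\gamma)(p*q)=\Psi(\gamma)(p)*\Psi(\gamma)(q)$ then holds because, when $t(p)=s(q)$, the inner pair $\gamma(t(p))*\gamma(s(q))^{-1}$ collapses to the unit $1_{t(p)}$ and disappears. Smoothness is clear, and bijectivity follows once (ii) gives $\Psi(\gamma)\circ\Psi(\gamma^{-1})=\Psi(\gamma\cdot\gamma^{-1})=\id$, using that the identity bisection $x\mapsto 1_x$ maps to $\id$ under $\Psi$.

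For (ii) I expand $\Psi(\gamma\cdot\gamma')(p)$ using $(\gamma\cdot\gamma')(x)=\gamma'(x)*\gamma(\phi_{\gamma'}(x))$ and the rule $(g*h)^{-1}=h^{-1}*g^{-1}$; the factors $\gamma'(s(p))^{-1},\,p,\,\gamma'(t(p))$ then combine to $\Psi(\gamma')(p)$, flanked by $\gamma(\phi_{\gamma'}(s(p)))^{-1}$ and $\gamma(\phi_{\gamma'}(t(p)))$. Since $\Psi(\gamma')(p)$ has source $\phi_{\gamma'}(s(p))$ and target $\phi_{\gamma'}(t(p))$ by (i), this is exactly $\Psi(\gamma)(\Psi(\gamma')(p))$, so $\Psi(\gamma\cdot\gamma')=\Psi(\gamma)\circ\Psi(\gamma')$.

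For (iii) I verify that $D\triangleright\gamma$ is a bisection from $s((D\triangleright\gamma)(x))=D_0(D_0^{-1}(x))=x$ and $\phi_{D\triangleright\gamma}=D_0\circ\phi_\gamma\circ D_0^{-1}$; the axioms $\id\triangleright\gamma=\gamma$ and $(D\circ D')\triangleright\gamma=D\triangleright(D'\triangleright\gamma)$ are immediate from \eqref{autact}, and $D\triangleright(\gamma\cdot\gamma')=(D\triangleright\gamma)\cdot(D\triangleright\gamma')$ uses only the multiplicativity of $D$ together with $\phi_{D\triangleright\gamma'}=D_0\circ\phi_{\gamma'}\circ D_0^{-1}$. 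Finally, for (iv): substituting $(D\triangleright\gamma)(x)=D(\gamma(D_0^{-1}(x)))$ into $\Psi(D\triangleright\gamma)(p)$ and pulling $D$ out of each factor (it is a groupoid morphism, so it commutes with $*$ and with inversion) rewrites the whole expression as $D\bigl(\Psi(\gamma)(D^{-1}(p))\bigr)$, which is the first identity; and computing $(\Psi(\gamma)\triangleright\gamma')(x)=\Psi(\gamma)\bigl(\gamma'(\phi_\gamma^{-1}(x))\bigr)$ via $\Psi(\gamma)|_{P_0}=\phi_\gamma$ reproduces $\gamma(\phi_\gamma^{-1}(x))^{-1}*\gamma'(\phi_\gamma^{-1}(x))*\gamma(\phi_{\gamma'}(\phi_\gamma^{-1}(x)))$, which matches $(\gamma\cdot\gamma'\cdot\gamma^{-1})(x)$ after unwinding the two products of \eqref{bispro}. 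I expect the only genuine obstacle to be clerical: at each stage one must confirm every $*$-product is composable and keep the two group laws — composition in $\Aut(P)$ and \eqref{bispro} in $\mathrm{Bis}(P)$ — in the correct order.
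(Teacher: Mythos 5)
Your proposal is correct and follows essentially the same direct-verification route as the paper: expand $\Psi(\gamma\cdot\gamma')$ via \eqref{bispro}, use $\Psi(\gamma)|_{P_0}=\phi_\gamma$ for the Peiffer identity $\Psi(\gamma)\triangleright\gamma'=\gamma\cdot\gamma'\cdot\gamma^{-1}$, and pull $D$ through the three factors for the equivariance $\Psi(D\triangleright\gamma)=D\circ\Psi(\gamma)\circ D^{-1}$. The only difference is that you also spell out routine checks the paper leaves implicit (that $\Psi(\gamma)\in\Aut(P)$, that $D\triangleright\gamma$ is a bisection, and that the action is by group automorphisms), all of which you handle correctly.
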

\begin{proof}
First it is direct to see 
\[\big(D'\triangleright(D\triangleright \gamma)\big)(x)=D'((D\triangleright \gamma)({D'_0}^{-1}(x)))=D' D(\gamma(D_0^{-1}{D'_0}^{-1}(x))\big)=\big((D'\circ D)\triangleright \gamma\big)(x),\]
so $\triangleright$ is a group action. Then we check that $\Psi$ is a group homomorphism. For $p\in P$ with $x=s(p)$ and $y=t(p)$, we have
\begin{eqnarray*}
\Psi(\gamma\cdot \gamma')(p)&=&(\gamma\cdot \gamma')(x)^{-1}*p*(\gamma\cdot \gamma')(y)\\ &=&
\gamma(\phi_{\gamma'}(x))^{-1} *\gamma'(x)^{-1}*p*\gamma'(y)*\gamma(\phi_{\gamma'}(y))
\\ &=&\big(\Psi(\gamma)\circ \Psi(\gamma')\big)(p),
\end{eqnarray*}
which implies $\Psi(\gamma\cdot \gamma')=\Psi(\gamma)\circ \Psi(\gamma')$. It is left to show
\[\Psi(D\triangleright \gamma)=D\circ \Psi(\gamma)\circ D^{-1},\qquad \Psi(\gamma)\triangleright \gamma'=\gamma\cdot \gamma'\cdot \gamma^{-1}.\]
Actually, since $D$ is a Lie groupoid automorphism, we have
\begin{eqnarray*}
(D\circ \Psi(\gamma)\circ D^{-1})(p)&=& D\big(\gamma(D_0^{-1}x)^{-1}*D^{-1}(p)*\gamma(D_0^{-1} y)\big)\\ &=&D(\gamma(D_0^{-1}x))^{-1}*p*D(\gamma(D_0^{-1} y))\\ &=&
(D\triangleright \gamma)(x)^{-1}*p*(D\triangleright \gamma)(y)\\ &=&\Psi(D\triangleright \gamma)(p).
\end{eqnarray*}
Also, observing that $\Psi(\gamma)(x)=\phi_{\gamma}(x)$, we have
\begin{eqnarray*}
\big(\Psi(\gamma)\triangleright \gamma'\big)(x)&=&\Psi(\gamma)(\gamma' \phi_{\gamma}^{-1}(x))\\ &=&\gamma(\phi_{\gamma}^{-1}x)^{-1}*\gamma' (\phi_{\gamma}^{-1}(x))*\gamma(\phi_{\gamma'}\phi_{\gamma}^{-1}x)\\ &=&\gamma^{-1}(x)*(\gamma\cdot \gamma' )(\phi_{\gamma}^{-1}(x))
\\ &=&(\gamma\cdot \gamma'\cdot \gamma^{-1})(x).
\end{eqnarray*}
\end{proof}

\begin{Thm}\label{cmhomo}
Given a Lie 2-group  $\mathbb{G}$ and a Lie groupoid  $P$, there is a one-one correspondence between Lie 2-group actions $\triangleright$ of $\mathbb{G}$ on $P$ and homomorphisms $(F_0,F_1)$ of  group crossed modules:
\begin{equation*}
		\vcenter{\xymatrix{
			G_1 \ar[d]^{\Phi} \ar[r]^{F_1} &\mathrm{Bis}(P) \ar[d]^{\Psi}\\
			G_0\ar[r]^{F_0} &\Aut(P)
		}}.\end{equation*}
They are related by 
\begin{eqnarray}\label{F01}
F_0(g_0)(p)=g_0\triangleright p,\qquad 	F_1(g_1)(x)=g_1\triangleright x, 
	\qquad \forall x\in P_0,g_0\in G_0, g_1\in G_1,p\in P.
	\end{eqnarray}
\end{Thm}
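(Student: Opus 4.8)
The plan is to exhibit two mutually inverse constructions. Throughout write $\iota$ for the unit embedding $G_0\hookrightarrow\mathbb{G}$, $g_0\mapsto(g_0,e)$, and $\iota_P\colon P_0\to P$ for the units of $P$; recall from \eqref{gpstr} that every element of $\mathbb{G}=G_0\ltimes G_1$ factors as $(g_0,g_1)=\iota(g_0)\diamond(e,g_1)$, where $(e,g_1)$ is the arrow $e\to\Phi(g_1)$. Since all structure maps of a Lie $2$-group are group homomorphisms, $\iota$ is itself a group homomorphism, which will be used freely.

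For the forward direction, given a $\mathbb{G}$-action $\triangleright$ I define $F_0(g_0)\colon p\mapsto\iota(g_0)\triangleright p$ and $F_1(g_1)\colon x\mapsto(e,g_1)\triangleright\iota_P(x)$ as in \eqref{F01}. Because $\iota(g_0)$ is a unit of the groupoid $\mathbb{G}$, the groupoid-morphism identities for $\triangleright$ show $F_0(g_0)\in\Aut(P)$, and $F_0$ is a homomorphism since $\iota$ is one and $\triangleright$ is a group action. The identities $s(g\triangleright p)=s(g)\triangleright s(p)$ and $t(g\triangleright p)=t(g)\triangleright t(p)$ give $s\circ F_1(g_1)=\id_{P_0}$ and $\phi_{F_1(g_1)}=\big(\Phi(g_1)\triangleright(-)\big)$, so $F_1(g_1)\in\mathrm{Bis}(P)$. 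The computational engine for the remaining identities is the pair of decompositions obtained by applying the groupoid-morphism axiom \eqref{goidmor} with a unit inserted on one factor:
\[
g\triangleright p=\big(g\triangleright\iota_P(s(p))\big)*\big(\iota(t(g))\triangleright p\big)=\big(\iota(s(g))\triangleright p\big)*\big(g\triangleright\iota_P(t(p))\big).
\]
Taking $g=(e,g_1)$ in the second identity gives $(e,g_1)\triangleright p=p*F_1(g_1)(t(p))$, whence the homomorphism property $F_1(g_1g_1')=F_1(g_1)\cdot F_1(g_1')$ follows from \eqref{bispro} and the factorization $(e,g_1)\diamond(e,g_1')=(e,g_1g_1')$. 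Combining both decompositions yields $\iota(\Phi(g_1))\triangleright p=F_1(g_1)(s(p))^{-1}*p*F_1(g_1)(t(p))$, which is exactly the commuting square $\Psi\circ F_1=F_0\circ\Phi$. Finally, equivariance $F_1(g_0\triangleright g_1)=F_0(g_0)\triangleright F_1(g_1)$ reduces, via \eqref{autact} and $\iota_P(g_0\triangleright x)=\iota(g_0)\triangleright\iota_P(x)$, to the group identity $\iota(g_0)\diamond(e,g_1)\diamond\iota(g_0)^{-1}=(e,g_0\triangleright g_1)$ in $G_0\ltimes G_1$.

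For the converse I first record a \emph{universal} action attached to the crossed module of Proposition \ref{newcm}: the $2$-group $\Aut(P)\ltimes\mathrm{Bis}(P)$ acts on $P$ by $(D,\gamma)\triangleright p:=D\big(p*\gamma(t(p))\big)$. Checking that this is a Lie $2$-group action — a group action compatible with \eqref{gpstr} and \eqref{bispro}, and a groupoid morphism over $\Aut(P)\times P_0\to P_0$, $(D,x)\mapsto D_0(x)$ — is the heart of this direction; the group-action axiom unwinds to an identity between $D\big(D'(p*\gamma'(t(p)))*\gamma(\ldots)\big)$ on one side and the term carrying the bisection product $(D'^{-1}\triangleright\gamma)\cdot\gamma'$ on the other, and closes precisely because each $D\in\Aut(P)$ distributes over $*$. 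Given a crossed-module homomorphism $(F_0,F_1)$, the assignment $(g_0,g_1)\mapsto(F_0(g_0),F_1(g_1))$ is a homomorphism $\mathbb{G}\to\Aut(P)\ltimes\mathrm{Bis}(P)$ of Lie $2$-groups by functoriality of the crossed-module/$2$-group equivalence: equivariance gives compatibility with \eqref{gpstr}, and the commuting square gives compatibility with source and target. Composing with the universal action produces the $\mathbb{G}$-action $(g_0,g_1)\triangleright p=F_0(g_0)\big(p*F_1(g_1)(t(p))\big)$.

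That the two constructions are inverse is then a short check: starting from an action, the relation $(e,g_1)\triangleright p=p*F_1(g_1)(t(p))$ and the factorization $(g_0,g_1)=\iota(g_0)\diamond(e,g_1)$ recover $(g_0,g_1)\triangleright p=F_0(g_0)\big((e,g_1)\triangleright p\big)$; conversely, evaluating the reconstructed action on units and on the factor $g_0=e$ returns $F_0$ and $F_1$ as in \eqref{F01}. Smoothness of all maps is inherited from that of $\triangleright$ on $\mathbb{G}\times P$. I expect the main obstacle to be the converse, namely verifying that the universal formula $(D,\gamma)\triangleright p=D(p*\gamma(t(p)))$ genuinely defines a group action compatible with the semidirect product \eqref{gpstr} and the bisection multiplication \eqref{bispro}, since this is where the crossed-module relations of Proposition \ref{newcm} are all brought to bear; the forward direction, by contrast, is essentially bookkeeping once the two decomposition identities above are in hand.
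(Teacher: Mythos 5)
Your proposal is correct, and it reaches the same formulas as the paper --- \eqref{F01} in one direction and the reconstruction $(g_0,g_1)\triangleright p=F_0(g_0)\bigl(p*F_1(g_1)(t(p))\bigr)$ (the paper's \eqref{actF}) in the other --- but by a genuinely different organization, most visibly in the converse. The paper proves the converse by brute force: it takes \eqref{actF} for the given $(F_0,F_1)$ and verifies in two long computations that it is a group action and satisfies \eqref{goidmor}, invoking the crossed-module properties of $(F_0,F_1)$ (homomorphism, equivariance, commuting square) at various points inside those computations. You instead factor the construction through a universal object: you check once that $(D,\gamma)\triangleright p:=D\bigl(p*\gamma(t(p))\bigr)$ is a $2$-group action of $\Aut(P)\ltimes\mathrm{Bis}(P)$ --- this is essentially the paper's computation, but performed where the hypotheses on $(F_0,F_1)$ never enter, only the identities of Proposition \ref{newcm} and the fact that automorphisms distribute over $*$ --- and you then use the crossed-module homomorphism axioms solely to see that $(g_0,g_1)\mapsto(F_0(g_0),F_1(g_1))$ is a morphism of $2$-groups, so the universal action pulls back. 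This cleanly separates the two logical ingredients that the paper's proof interleaves, and it makes transparent why \eqref{actF} is the only possible formula. Your forward direction is likewise tidier: the two decompositions $g\triangleright p=\bigl(g\triangleright\iota_P(s(p))\bigr)*\bigl(\iota(t(g))\triangleright p\bigr)=\bigl(\iota(s(g))\triangleright p\bigr)*\bigl(g\triangleright\iota_P(t(p))\bigr)$ package the repeated applications of \eqref{goidmor} in the paper's verifications; for instance, equating them at $g=(e,g_1)$ yields $\Psi\circ F_1=F_0\circ\Phi$ in one line, where the paper inserts the auxiliary element $(\Phi(g_1),g_1^{-1})$ by hand, and your reduction of equivariance to the conjugation identity $\iota(g_0)\diamond(e,g_1)\diamond\iota(g_0)^{-1}=(e,g_0\triangleright g_1)$ is exactly the content of the paper's computation. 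One caution, which is not a gap relative to the paper: $\mathrm{Bis}(P)$ and $\Aut(P)$ are infinite-dimensional, so your ``universal Lie $2$-group action'' must be read at the algebraic level --- precisely the level at which the paper states Proposition \ref{newcm} --- and smoothness of the reconstructed action rests on whatever smoothness is built into the notion of a crossed-module homomorphism into $\mathrm{Bis}(P)\xrightarrow{\Psi}\Aut(P)$, a point the paper glosses over in the same way.
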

\begin{proof}
Given a Lie 2-group action $\triangleright$, we show that $(F_0,F_1)$ defined in \eqref{F01} is a crossed module homomorphism. First,  since the $\mathbb{G}$-action  is a Lie group action, we have
\[F_0(g_0 g'_0)(p)=(g_0g'_0)\triangleright p=g_0\triangleright (g'_0\triangleright p)=\big(F_0(g_0)\circ F_0(g'_0)\big) (p).\]
Also using the fact that $\phi_{F_1(g_1)}(x)=t(g_1\triangleright x)=\Phi(g_1) \triangleright x$, we have
\begin{eqnarray*}
\big(F_1(g_1)\cdot F_1(g'_1)\big)(x)&=&F_1(g'_1)(x)*F_1(g_1)(\Phi(g'_1)\triangleright x)\\ &=&
(g'_1\triangleright x)* (({g_1}\diamond \Phi(g'_1))\triangleright x)\\ &=&\big((e,g'_1)*_{\mathbb{G}}(\Phi(g'_1), \Phi({g'_1}^{-1})\triangleright {g_1})\big)\triangleright x\\ &=&(g_1g'_1)\triangleright x=F_1(g_1g'_1)(x),
\end{eqnarray*}
where we used \eqref{goidmor} in the third equation. So $F_0$ and $F_1$ are group homomorphisms. Then we show that the diagram is commutative, i.e. $F_0\circ \Phi=\Psi\circ F_1$. Indeed, by using \eqref{goidmor}, we have 
\begin{eqnarray*}
\Psi(F_1(g_1))(p)&=&(F_1(g_1)(x))^{-1}*p*F_1(g_1)(y)\qquad x=s(p), y=t(p)\\ &=&(g_1\triangleright x)^{-1}*p*(g_1\triangleright y)\\ &=&
((\Phi(g_1),g_1^{-1})\triangleright x)*p*(g_1\triangleright y)
\\ &=&(\Phi(g_1),e)\triangleright p=F_0(\Phi(g_1))(p).
\end{eqnarray*}
It is left to check
$F_1(g_0\triangleright g_1)=F_0(g_0)\triangleright F_1(g_1)$. 
Indeed, we have
\begin{eqnarray*}
\big(F_0(g_0)\triangleright F_1(g_1)\big)(x)&=&F_0(g_0)(F_1(g_1)(g_0^{-1}\triangleright x))=
F_0(g_0)((g_1\diamond g_0^{-1})\triangleright x)\\ &=&(g_0 \diamond g_1\diamond g_0^{-1})\triangleright x=(g_0\triangleright g_1)\triangleright x=F_1(g_0\triangleright g_1)(x).
\end{eqnarray*}
This shows that $(F_0,F_1)$ is a homomorphism of crossed modules. 

Conversely,  given a crossed module homomorphism $(F_0,F_1)$, define 
\begin{eqnarray}\label{actF}
\triangleright: \mathbb{G}\times P\to P,\qquad 
(g_0,g_1)\triangleright p:=F_0(g_0)\big(p*F_1(g_1)(y)\big),\quad y=t(p).
\end{eqnarray}
We claim  that $\triangleright$ is a Lie 2-group action. Namely, it is a Lie group action and also  a Lie groupoid homomorphism (only check \eqref{goidmor}). 
For $(g_0,g_1),(l_0,l_1)\in G_0\times G_1$, since $F_0,F_1$ are group homomorphisms and 
\[F_1(l_0^{-1}\triangleright g_1)=F_0(l_0^{-1})\triangleright F_1(g_1),\qquad \phi_{F_1(l_1)}(x)=\Psi(F_1(l_1))x=F_0(\Phi(l_1))x,\qquad \forall x\in P_0,\]
we have 
\begin{eqnarray*}
\big((g_0,g_1)\diamond (l_0,l_1)\big)\triangleright p
&=&
F_0(g_0l_0)\big(p*F_1((l_0^{-1}\triangleright g_1)l_1)(y)\big)\\ &=&F_0(g_0l_0)\big(p*(F_0(l_0^{-1})\triangleright F_1(g_1))\cdot F_1(l_1)(y)\big)\\ &=&F_0(g_0l_0)\big(p*F_1(l_1)(y)*(F_0(l_0^{-1})\triangleright F_1(g_1))(F_0(\Phi(l_1))y)\\ &=&F_0(g_0l_0)\big(p*F_1(l_1)(y)*F_0(l_0^{-1})(F_1(g_1)(F_0(l_0\Phi(l_1))y)\big)\\ &=&
F_0(g_0)\big((F_0(l_0)(p*F_1(l_1)y))*(F_1(g_1)F_0(l_0)F_0(\Phi (l_1))y)\big)\\ &=&(g_0,g_1)\triangleright \big(F_0(l_0) (p*F_1(l_1)y)\big)=(g_0,g_1)\triangleright \big((l_0,l_1)\triangleright p\big),\end{eqnarray*}
where we have also used \eqref{bispro},\eqref{autact} and the fact that $F_0(g_0)\in \Aut(P)$. Thus $\triangleright$ is a group action. Then,  for a composable pair $(p,p')\in P^{(2)}$, letting $y=t(p)=s(p'), z=t(p')$, we find
\begin{eqnarray*}
&&\big((g_0,g_1)\triangleright p\big)*\big((g_0\Phi(g_1),l_1)\triangleright p'\big)\\ &=&
F_0(g_0)\big(p*F_1(g_1)(y)\big)*F_0(g_0\Phi(g_1))\big(p'*F_1(l_1)z\big)\\ &=&
F_0(g_0)\big(p*F_1(g_1)(y)*\Psi(F_1(g_1))(p')*\Psi(F_1(g_1))(F_1(l_1)z)\big)\\ &=&
F_0(g_0)\big(p*p'* F_1(g_1)z* (F_1(g_1)z)^{-1}*F_1(l_1)z*F_1(g_1)(\phi_{F_1(l_1)}z)\big)\\ &=&
F_0(g_0)\big(p*p'* (F_1(g_1)\cdot F_1(l_1))z\big)
\\ &=&(g_0,g_1l_1)\triangleright (p*p')=\big((g_0,g_1)*_\mathbb{G} (g_0\Phi(g_1),l_1)\big)\triangleright (p*p')
,\end{eqnarray*}
which implies \eqref{goidmor}. Hence we obtain a Lie 2-group action from a crossed module homomorphism.

It suffices to show the above two processes are inverse to each other. Given an action $\triangleright$, we define a crossed module homomorphism $(F_0,F_1)$ by \eqref{F01}, which then gives rise to  a Lie 2-group action $\triangleright^F$ by \eqref{actF}. We claim $\triangleright^F=\triangleright$. In fact, we have 
\begin{eqnarray*}
(g_0,g_1)\triangleright^F p&=&F_0(g_0)\big(p*F_1(g_1)(y)\big)=g_0\triangleright \big(p*(g_1\triangleright y)\big)\\ &=&
g_0\triangleright \big((e,e)*_{\mathbb{G}}(e,g_1))\triangleright (p*y)\big)\\ &=&g_0\triangleright (g_1\triangleright p)=
(g_0,g_1)\triangleright p,
\end{eqnarray*}
where we have used the fact that $\triangleright$ is a Lie group action and also a Lie groupoid homomorphism. This completes the proof.
\end{proof}

Denote by $A$ the Lie algebroid of the Lie groupoid $P$.  
The  infinitesimal of the   group crossed module $\mathrm{Bis}(P)\xrightarrow{\Psi} \Aut(P)$ in Proposition \ref{newcm} is 
the Lie algebra crossed module
\[\psi: \Gamma(A)\to \mathrm{Der}(A),\qquad \psi(u)=[u,\cdot]_A,\]
where  $\mathrm{Der}(A)$ acts on $\Gamma(A)$ naturally.

 Given a $\mathbb{G}$-action on $P$,  
taking the differential of the homomorphism in Theorem \ref{cmhomo}, we obtain a homomorphism of Lie algebra crossed modules 
\begin{equation*}
		\vcenter{\xymatrix{
			\g_1 \ar[d]^{\phi} \ar[r]^{ } &\Gamma(A) \ar[d]^{\psi}\\
			\g_0\ar[r]^{ } &\mathrm{Der}(A)
		}}.
		\end{equation*}
This gives an action of a Lie 2-algebra $\g$ on a Lie algebroid $A$ as introduced in \cite{ZZ}.

\begin{Ex}
If the Lie groupoid $P$ is the pair groupoid $M\times M\rightrightarrows M$ for a manifold $M$,  the crossed module $\mathrm{Bis}(P)\xrightarrow{\Psi} \Aut(P)$ becomes $\mathrm{Diff}(M)\xrightarrow{\mathrm{Id}} \mathrm{Diff}(M)$, where  $\mathrm{Diff}(M)$ adjoint acts on itself.
A crossed module homomorphism from $\mathbb{G}$ to $\mathrm{Diff}(M)\xrightarrow{\mathrm{Id}} \mathrm{Diff}(M)$ can only be of the form
\begin{equation*}
		\vcenter{\xymatrix{
			G_1 \ar[d]^{\Phi} \ar[r]^{\sigma\circ \Phi} &\mathrm{Diff}(M) \ar[d]^{\mathrm{Id}}\\
			G_0\ar[r]^{\sigma} &\mathrm{Diff}(M)
		}},
		\end{equation*}
where $\sigma: G_0\to \mathrm{Diff}(M)$ is a Lie group action of $G_0$ on $M$. Therefore,  by \eqref{actF}, a Lie 2-group action of $\mathbb{G}$ on the pair groupoid $M\times M\rightrightarrows M$  is determined by such a $\sigma$ in the way:  \[(G_0\ltimes G_1)\times (M\times M)\to M\times M,\qquad (g_0,g_1)\triangleright(m_1,m_2)=\big(\sigma(g_0)m_1,\sigma(g_0\Phi(g_1))m_2\big).\]
\end{Ex}
	\begin{Ex}
When the Lie groupoid $P$ is a Lie group $H$, the group crossed module $\mathrm{Bis}(P)\xrightarrow{\Psi} \Aut(P)$ turns out to be the well-known Lie group crossed module $H\xrightarrow{\mathrm{AD}} \Aut(H)$ (after a minor adjustment).	
A Lie 2-group action of $\mathbb{G}$ on a Lie group $H$ is determined by a pair $(F_0,F_1)$:
\begin{equation*}
		\vcenter{\xymatrix{
			G_1 \ar[d]^{\Phi} \ar[r]^{F_1} &H \ar[d]^{\mathrm{AD}}\\
			G_0\ar[r]^{F_0} &\Aut(H)
		}},
		\end{equation*}
where $F_0$ and $F_1$ are Lie group homomorphisms satisfying that 
	\[F_0(\Phi(g_1))(h)=F_1(g_1)hF_1(g_1)^{-1},\qquad F_1(g_0\triangleright g_1)=F_0(g_0)(F_1(g_1)).\]The Lie 2-group action is by \eqref{actF} given by
	\[\triangleright: (G_0\ltimes G_1)\times H\to H,\qquad (g_0,g_1)\triangleright h=F_0(g_0)\big(hF_1(g_1)\big).\]
	\end{Ex}
\begin{Ex}\label{cmcm} 
Let the Lie groupoid $P$ be a Lie 2-group $\mathbb{H}: H_0\ltimes H_1\rightrightarrows H_0$. Consider bisections of $\mathbb{H}$ that are also group homomorphisms and groupoid automorphisms of $\mathbb{H}$ that are also crossed module automorphisms, denoted by $D(H_0,H_1)$ and $\Aut(\mathbb{H})$, respectively. Then we obtain a subcrossed module $D(H_0,H_1)\to \Aut(\mathbb{H})$ of $\mathrm{Bis}(P)\xrightarrow{\Psi} \Aut(P)$. This subcrossed module is called the actor crossed module of $\mathbb{H}$ and was introduced to define actions of a Lie 2-group on another Lie 2-group in \cite{N}.
\end{Ex}
Let us focus on the particular case of Example \ref{cmcm} when $\mathbb{H}$ is a 2-vector space $\mathbb{V}$. We shall see that linear Lie 2-group actions are exactly  representations of this Lie 2-group on  2-vector spaces. 

A 2-vector space $\mathbb{V}: V_1\xrightarrow{\phi} V_0$ gives an action Lie groupoid $\mathbb{V}: V_0\times V_1\rightrightarrows V_0$, whose source, target and multiplication are 
\begin{eqnarray}\label{linear2gp}
s(x,u)=x,\qquad t(x,u)=x+\phi(u),\qquad (x,u)*(x+\phi(u),v)=(x,u+v)
\end{eqnarray}
for $x\in V_0,u,v\in V_1$. With  the abelian group structure, it is a linear Lie 2-group.  A $\mathbb{G}$-action on $\mathbb{V}$ is {\bf linear} if $g\triangleright (\cdot): \mathbb{V}\to \mathbb{V}$ is a linear map for all $g\in \mathbb{G}$.

\begin{Pro}\label{sd2gp}Given a linear action $\triangleright$ of a Lie 2-group $\mathbb{G}$ on $\mathbb{V}$, then \begin{itemize}
\item[\rm (i)]  there is  a semi-direct product Lie 2-group $\mathbb{G}\ltimes \mathbb{V}\rightrightarrows G_0\ltimes V_0$, with the direct product groupoid structure and  the semi-direct product group structure defined by 
\begin{eqnarray*}
(g,\omega)*(l,\delta)&=&(g*_{\mathbb{G}} l, \omega*\delta),\qquad (g,l)\in \mathbb{G}^{(2)}, (\omega,\delta)\in \mathbb{V}^{(2)},\\
(g,\omega)\diamond (g',\omega')&=&(g\diamond g',\omega+g\triangleright \omega'),\qquad \forall g,g'\in \mathbb{G},\omega,\omega'\in \mathbb{V}.
\end{eqnarray*}
\item[\rm (ii)]
the Lie group crossed module associated to $\mathbb{G}\ltimes \mathbb{V}$ is 
$G_1\ltimes V_1\xrightarrow{\Phi\times \phi} G_0\ltimes V_0$, 
where the group structures and the action are defined by 
\begin{eqnarray*}
(g_1,u)\cdot (g'_1,u')&=&(g_1g'_1,u+g_1\triangleright u'),\\ 
(g_0,x)\cdot (g'_0,x')&=&(g_0g'_0, x+g_0 \triangleright x'),\\ 
(g_0,x)\triangleright  (g_1,u)&=&(g_0\triangleright g_1, g_0 \triangleright u-(g_0\triangleright g_1)\triangleright x+x),
\end{eqnarray*}
for $g_0,g'_0\in G_0,g_1,g'_1\in G_1,x,x'\in V_0$ and $u,u'\in V_1$.
\end{itemize}
\end{Pro}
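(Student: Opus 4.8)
The plan is to verify (i) directly against the definition of a Lie 2-group recalled above --- that $\diamond$ is a Lie group law on $\mathbb{G}\ltimes\mathbb{V}$, that $G_0\ltimes V_0$ is a Lie group, and that the direct product groupoid structure maps are group homomorphisms --- and then to obtain (ii) by reading off the crossed module associated to this Lie 2-group through the bijection between Lie 2-groups and crossed modules recalled above (cf. \eqref{gpstr}): base $=G_0\ltimes V_0$, arrow group $=$ the source-fibre over the group unit, boundary $=t$ restricted to that fibre, and action $=$ conjugation by units.

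For (i), I would first check that $\diamond$ is a group law. Associativity and the existence of inverses reduce to the two defining features of a representation on a $2$-vector space --- the linearity of each $g\triangleright(\cdot)\colon\mathbb{V}\to\mathbb{V}$ and the action identity $g\triangleright(g'\triangleright\omega)=(g\diamond g')\triangleright\omega$ --- exactly as for an ordinary semi-direct product, and the same input makes $G_0\ltimes V_0$ a group. Source and target are homomorphisms because $s_{\mathbb{V}},t_{\mathbb{V}}$ are linear by \eqref{linear2gp} and a $\mathbb{G}$-action satisfies $s(g\triangleright\omega)=s(g)\triangleright s(\omega)$ and $t(g\triangleright\omega)=t(g)\triangleright t(\omega)$; substituting these into $s\big((g,\omega)\diamond(g',\omega')\big)$ reproduces the semi-direct law of $G_0\ltimes V_0$. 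The crux is that the groupoid multiplication is a homomorphism, i.e. the interchange law
\[ \big((g,\omega)\diamond(g',\omega')\big)*\big((l,\delta)\diamond(l',\delta')\big)=\big((g,\omega)*(l,\delta)\big)\diamond\big((g',\omega')*(l',\delta')\big). \]
Componentwise, the $\mathbb{G}$-slot is the interchange law already holding in the Lie 2-group $\mathbb{G}$, while the $\mathbb{V}$-slot asks that $(\omega*\delta)+(g*_{\mathbb{G}}l)\triangleright(\omega'*\delta')$ equal $(\omega+g\triangleright\omega')*(\delta+l\triangleright\delta')$. Here I would first use that $\triangleright$ is a Lie groupoid homomorphism, \eqref{goidmor}, to rewrite $(g*_{\mathbb{G}}l)\triangleright(\omega'*\delta')=(g\triangleright\omega')*(l\triangleright\delta')$, and then apply the interchange law between $+$ and $*$ in the abelian Lie 2-group $\mathbb{V}$. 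With $s,t$ and $*$ homomorphisms, the unit section and the inversion are forced to be homomorphisms by uniqueness of identities and inverses in a groupoid, so $\mathbb{G}\ltimes\mathbb{V}$ is a Lie 2-group.

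For (ii), I would compute the crossed module of the Lie 2-group of (i). Its base is $G_0\ltimes V_0$; its arrow group is the source-fibre over the group unit $(e,0)$, namely $\{((e,g_1),(0,u))\}$, which I identify with $G_1\times V_1$ by $(g_1,u)$, and the boundary map is $t$ restricted there, sending $(g_1,u)$ to $(\Phi(g_1),\phi(u))$, i.e. $\Phi\times\phi$. Evaluating $\diamond$ on two such elements and projecting back to the fibre gives the first group law: the $\mathbb{G}$-part yields $g_1g_1'$ by \eqref{gpstr} and the $\mathbb{V}$-part yields $u+g_1\triangleright u'$, while the base law is the displayed semi-direct product $(g_0,x)(g_0',x')=(g_0g_0',x+g_0\triangleright x')$. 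The action of $G_0\ltimes V_0$ on $G_1\ltimes V_1$ is conjugation by the unit $\iota(g_0,x)=((g_0,e),(x,0))$, whose inverse is $((g_0^{-1},e),(-g_0^{-1}\triangleright x,0))$; thus $(g_0,x)\triangleright(g_1,u)$ is the projection of $\iota(g_0,x)\diamond\big((e,g_1),(0,u)\big)\diamond\iota(g_0,x)^{-1}$ onto the source-fibre.

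I expect this last conjugation to be the main obstacle, precisely because of the correction $-(g_0\triangleright g_1)\triangleright x+x$. In $\mathbb{V}$-terms the outer factor $\iota(g_0,x)$ contributes its $V_0$-component $x$ on the left while its inverse contributes $-g_0^{-1}\triangleright x$ on the right, and the $\diamond$-products carry $-g_0^{-1}\triangleright x$ successively through $g_1$ and $g_0$, producing a term $g_0\triangleright\big(g_1\triangleright(g_0^{-1}\triangleright(-x))\big)$. To rewrite this as $-(g_0\triangleright g_1)\triangleright x$ I would invoke the crossed-module-homomorphism relation $F_1(g_0\triangleright g_1)=F_0(g_0)\triangleright F_1(g_1)$ of Theorem~\ref{cmhomo}, whose explicit form \eqref{autact} is exactly the identity $g_0\triangleright\big(g_1\triangleright(g_0^{-1}\triangleright(\cdot))\big)=(g_0\triangleright g_1)\triangleright(\cdot)$, followed by linearity of the action. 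The subtlety to keep track of is that the surviving contributions $x$, $g_0\triangleright u$ and $-(g_0\triangleright g_1)\triangleright x$ --- and in particular the apparent type clash with $x\in V_0$ --- are all read inside $\mathbb{V}=V_0\times V_1$, with $V_0$ and $V_1$ embedded as the units and as the source-fibre over $0$ respectively; the $V_0$-component $x$ of the first term then cancels against the $V_0$-component of $(g_0\triangleright g_1)\triangleright x$, so that the total lands in $\{0\}\times V_1\cong V_1$ and equals $g_0\triangleright u-(g_0\triangleright g_1)\triangleright x+x$, completing the identification of the crossed module.
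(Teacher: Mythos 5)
Your proposal is correct and follows essentially the same route as the paper: for (i) the only nontrivial point is the interchange law, which you verify exactly as the paper does via \eqref{goidmor} together with the interchange in the abelian 2-group $\mathbb{V}$, and for (ii) you apply the standard $\ker s\xrightarrow{t}\Gamma_0$ crossed-module construction and compute the conjugation $(g_0,x)\diamond(g_1,u)\diamond(g_0^{-1},-g_0^{-1}\triangleright x)$, just as in the paper. Your extra bookkeeping (the type of $(g_0\triangleright g_1)\triangleright x$ inside $\mathbb{V}=V_0\times V_1$ and the identity $g_0\triangleright\big(g_1\triangleright(g_0^{-1}\triangleright\cdot)\big)=(g_0\triangleright g_1)\triangleright(\cdot)$, which also follows directly from $g_0\triangleright g_1=g_0\diamond g_1\diamond g_0^{-1}$ and the action being a group action) only makes explicit what the paper leaves implicit.
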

\begin{proof}
For (i), we only verify that the groupoid multiplication is a group homomorphism, i.e.
\begin{eqnarray}\label{sd}
\big((g,\omega)\diamond (g',\omega')\big)*\big((l,\delta)\diamond (l',\delta')\big)=\big((g,\omega)*(l,\delta)\big)\diamond \big((g',\omega')*(l',\delta')\big),
\end{eqnarray}
for composable pairs $(g,l), (g',l')\in \mathbb{G}^{(2)}$ and $(\omega,\delta),(\omega',\delta')\in \mathbb{V}^{(2)}$. By \eqref{goidmor} and the fact that $\mathbb{G}, \mathbb{V}$ are Lie 2-groups,  the left hand side of \eqref{sd} equals to 
\begin{eqnarray*}
(gg', \omega+g\triangleright \omega')*(ll',\delta+l\triangleright \delta')&=&\big((gg')*_{\mathbb{G}}(ll'),(\omega+g\triangleright \omega')*(\delta+l\triangleright \delta')\big)\\ &=&\big((g*_{\mathbb{G}}l)(g'*_{\mathbb{G}}l'), \omega*\delta+(g\triangleright \omega')*(l\triangleright \delta')\big)\\ &=&\big((g*_{\mathbb{G}}l)(g'*_{\mathbb{G}}l'), \omega*\delta+(g*_{\mathbb{G}}l)\triangleright (\omega'*\delta')\big)\\ &=&\big((g,\omega)*(l,\delta)\big)\diamond \big((g',\omega')*(l',\delta')\big),
\end{eqnarray*} 
which matches with the right hand side of  \eqref{sd}. 

For (ii),  from a Lie 2-group $\Gamma \rightrightarrows \Gamma_0$,  a crossed module $\ker s\xrightarrow{t} \Gamma_0$ can be constructed with the group action of $\Gamma_0$ on $\ker s$ defined by $g_0\triangleright g_1:=g_0\diamond g_1\diamond g_0^{-1}$, where $\diamond$ is the group multiplication in $\Gamma$. 

Applying this construction to our Lie 2-group $\mathbb{G}\ltimes \mathbb{V}$, we have $\ker s=G_1\ltimes V_1$ and $t|_{\ker s}=\Phi\times \phi$. The group multiplications on $G_i\ltimes V_i$ for $i=0,1$ is direct to obtain. 
For the action of $G_0\ltimes V_0$ on $G_1\ltimes V_1$, we have
\begin{eqnarray*}
(g_0,x)\triangleright  (g_1,u)&:=&(g_0,x)\diamond (g_1,u)\diamond (g_0^{-1}, -g_0^{-1}\triangleright x)=(g_0\diamond g_1,x+g_0\triangleright u)\diamond (g_0^{-1}, -g_0^{-1}\triangleright x)\\ &=&\big(g_0\triangleright  g_1, x+g_0\triangleright u-(g_0\triangleright  g_1)\triangleright x\big)\in G_1\times V_1,
\end{eqnarray*}
as desired. The rest is obvious.\end{proof}

Associated to a 2-vector space $\mathbb{V}: V_1\xrightarrow{\phi} V_0$, we have the general linear Lie 2-group  $\GL(\mathbb{V}): \GL_1(\mathbb{V})\xrightarrow{\partial} \GL_0(\mathbb{V})$, where 
\begin{eqnarray*}
\GL_1(\mathbb{V})&:=&\{\gamma\in \Hom(V_0,V_1); \mathrm{Id}+\phi\circ \gamma\in \GL(V_0)\},\\ 
\GL_0(\mathbb{V})&:=&\{(A_0,A_1)\in \GL(V_0)\oplus \GL(V_1); A_0\circ \phi=\phi\circ A_1\},\\
\partial(\gamma)&:=&\big(\mathrm{Id}+\phi\circ \gamma, \mathrm{Id}+\gamma\circ \phi\big).
\end{eqnarray*}
Here the group structures on $\GL_i(\mathbb{V})$ are \[\gamma\cdot \gamma'=\gamma+\gamma'+\gamma\circ \phi\circ \gamma',\qquad (A_0,A_1)\circ (A'_0,A'_1)=(A_0\circ A'_0,A_1\circ A'_1),\]
 and the action of $\GL_{0}(\mathbb{V})$ of $\GL_1(\mathbb{V})$ is given by 
\[ (A_0,A_1)\triangleright \gamma=A_1\circ \gamma\circ A_0^{-1}.\]
The Lie 2-algebra of $\GL(\mathbb{V})$ is $\mathrm{gl}(\mathbb{V})$. See \cite{SZ} for details.
A {\bf representation} of a Lie 2-group $\mathbb{G}$ on $\mathbb{V}$ is a crossed module homomorphism $\sigma: \mathbb{G}\to \mathbb{GL}(\mathbb{V})$.

\begin{Pro}\label{linearcm}
There is a one-one correspondence between  linear Lie 2-group actions of $\mathbb{G}$ on $\mathbb{V}$ and representations of  $\mathbb{G}$ on the 2-vector space $\mathbb{V}$.
\end{Pro}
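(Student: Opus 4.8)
The plan is to deduce this from Theorem \ref{cmhomo} applied to the Lie groupoid $P=\mathbb{V}$, together with an explicit identification of the \emph{linear part} of the actor crossed module $\mathrm{Bis}(\mathbb{V})\xrightarrow{\Psi}\Aut(\mathbb{V})$ with the general linear Lie 2-group $\GL(\mathbb{V}):\GL_1(\mathbb{V})\xrightarrow{\partial}\GL_0(\mathbb{V})$. By Theorem \ref{cmhomo}, $\mathbb{G}$-actions on $\mathbb{V}$ are in bijection with crossed module homomorphisms $(F_0,F_1)$ from $\mathbb{G}$ to $\mathrm{Bis}(\mathbb{V})\xrightarrow{\Psi}\Aut(\mathbb{V})$, related by \eqref{F01}. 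So it suffices to show two things: (a) the action is linear if and only if $F_0$ takes values in the linear automorphisms and $F_1$ in the linear bisections; and (b) these linear subobjects, with the structure inherited from \eqref{bispro}, \eqref{autact} and $\Psi$, coincide with $\GL(\mathbb{V})$.

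First I would identify the linear bisections and automorphisms of $\mathbb{V}$. A bisection of $\mathbb{V}:V_0\times V_1\rightrightarrows V_0$ is a section $\hat{\gamma}(x)=(x,\gamma(x))$ of the source map with $\phi_{\hat{\gamma}}=\mathrm{Id}+\phi\circ\gamma$ a diffeomorphism; it is linear precisely when $\gamma\in\Hom(V_0,V_1)$, in which case $\mathrm{Id}+\phi\circ\gamma\in\GL(V_0)$, which is exactly the defining condition of $\GL_1(\mathbb{V})$. A linear groupoid automorphism $D$ must preserve source, target, units and multiplication; unit preservation forces $D(x,u)=(A_0x,A_1u)$, target preservation forces $A_0\circ\phi=\phi\circ A_1$, and invertibility gives $(A_0,A_1)\in\GL(V_0)\oplus\GL(V_1)$ satisfying this relation, which is exactly $\GL_0(\mathbb{V})$. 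Hence linear bisections correspond to $\GL_1(\mathbb{V})$ and linear automorphisms to $\GL_0(\mathbb{V})$ as sets.

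Next I would match all the structure by direct computation using the groupoid inverse $(x,u)^{-1}=(x+\phi(u),-u)$ and the multiplication \eqref{linear2gp}. One checks that \eqref{bispro} yields $\hat{\gamma}\cdot\hat{\gamma}'$ corresponding to $\gamma+\gamma'+\gamma\circ\phi\circ\gamma'$, matching the product on $\GL_1(\mathbb{V})$; that $\Psi(\hat{\gamma})(x,u)=\big((\mathrm{Id}+\phi\circ\gamma)(x),(\mathrm{Id}+\gamma\circ\phi)(u)\big)$, so that $\Psi$ restricted to linear bisections is precisely $\partial$; and that \eqref{autact} gives $(A_0,A_1)\triangleright\hat{\gamma}$ corresponding to $A_1\circ\gamma\circ A_0^{-1}$, matching the $\GL_0(\mathbb{V})$-action on $\GL_1(\mathbb{V})$. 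For the linearity correspondence (a): from \eqref{F01} we have $F_0(g_0)=(g_0\triangleright\cdot)$ and $F_1(g_1)=(g_1\triangleright\cdot)|_{V_0}$, so if the action is linear then $F_0,F_1$ land in the linear parts; conversely, since by \eqref{actF} the expression $(g_0,g_1)\triangleright p=F_0(g_0)\big(p*F_1(g_1)(t(p))\big)$ is a composite of linear maps with the (linear) groupoid multiplication, linear $F_0,F_1$ produce a linear action. Combining (a) and (b), linear $\mathbb{G}$-actions on $\mathbb{V}$ correspond to crossed module homomorphisms $\mathbb{G}\to\GL(\mathbb{V})$, that is, to representations of $\mathbb{G}$ on $\mathbb{V}$.

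The main obstacle is the bookkeeping in the third step: carefully computing $\Psi$, the bisection product, and the automorphism action in the action-groupoid model of $\mathbb{V}$ --- in particular getting the groupoid inverse and the composability conditions right --- and confirming that these reproduce the somewhat non-obvious twisted group laws $\gamma\cdot\gamma'=\gamma+\gamma'+\gamma\circ\phi\circ\gamma'$ and $\partial(\gamma)=(\mathrm{Id}+\phi\circ\gamma,\mathrm{Id}+\gamma\circ\phi)$ defining $\GL(\mathbb{V})$. Once this identification is in place, the proposition is a formal consequence of Theorem \ref{cmhomo}.
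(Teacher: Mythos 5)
Your proposal is correct and follows essentially the same route as the paper's proof: identify $\GL(\mathbb{V}):\GL_1(\mathbb{V})\xrightarrow{\partial}\GL_0(\mathbb{V})$ with the subcrossed module of linear bisections and linear automorphisms inside $\mathrm{Bis}(\mathbb{V})\xrightarrow{\Psi}\Aut(\mathbb{V})$ by direct computation (the product $\gamma\cdot\gamma'=\gamma+\gamma'+\gamma\circ\phi\circ\gamma'$, the identity $\Psi=\partial$, and the action $(A_0,A_1)\triangleright\gamma=A_1\circ\gamma\circ A_0^{-1}$), then invoke Theorem \ref{cmhomo}. Your extra checks --- that unit/source/target preservation forces a linear automorphism to be block-diagonal $(A_0,A_1)$ with $A_0\circ\phi=\phi\circ A_1$, and that linearity of $(F_0,F_1)$ conversely yields linearity of the action via \eqref{actF} --- are details the paper leaves implicit, so they strengthen rather than alter the argument.
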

\begin{proof}
We first show that $\GL(\mathbb{V})$ is a subcrossed module of  $\mathrm{Bis}(\mathbb{V})\xrightarrow{\Psi} \Aut(\mathbb{V})$. It is obvious that  $\GL_0(\mathbb{V})$ is the set of linear groupoid automorphisms, which is a subgroup of $\Aut(\mathbb{V})$. Then by identifying $\gamma\in \GL_1(\mathbb{V})$ with the bisection $x\mapsto (x,\gamma(x))$, we see that $\GL_1(\mathbb{V})$ can be viewed as the set of linear bisections of $\mathbb{V}$. For two  linear bisections $\gamma$ and $\gamma'$, by Proposition \ref{newcm}, we have $\phi_{\gamma}=\mathrm{Id}+\phi\circ \gamma$ and 
\[\gamma\cdot \gamma'(x)=(x,\gamma'(x))*(\phi_{\gamma'}(x),\gamma(\phi_{\gamma'}(x))=(x,\gamma(x)+\gamma'(x)+(\gamma\circ \phi\circ \gamma')(x)),\]
which implies that $\gamma\cdot \gamma'=\gamma+\gamma'+\gamma\circ \phi\circ \gamma'$. Hence one obtains that $\GL_1(\mathbb{V})\subset \mathrm{Bis}(\mathbb{V})$ is a Lie subgroup.  Then by the fact that $(x,u)^{-1}=(x+\phi(u),-u)$, we compute
\begin{eqnarray*}
\Psi(\gamma)(x,u) &=&
(x+\phi\gamma(x),-\gamma(x))
*(x,u)*(x+\phi(u),\gamma(x+\phi(u))\\ &=&
((\mathrm{Id}+\phi\circ \gamma)x,(\mathrm{Id}+\gamma\circ \phi)u)=\partial(\gamma)(x,u),\end{eqnarray*}
which shows $\Psi=\partial$.
Then we have \[((A_0,A_1)\triangleright \gamma)(x)=((A_0,A_1)\circ\gamma\circ A_0^{-1})x=(x,A_1\gamma A_0^{-1}x).\]
Hence we see $\GL(\mathbb{V})$ is a subcrossed module. 

If  the Lie 2-group action is linear, the image of $(F_0,F_1)$ from $\mathbb{G}$ to $\mathrm{Bis}(\mathbb{V})\xrightarrow{\Psi} \Aut(\mathbb{V})$ in Theorem \ref{cmhomo} falls into the subcrossed module $\mathrm{GL}(\mathbb{V})$, which gives a representation. It is moreover a one-one correspondence.
\end{proof}

Explicitly, given a linear Lie 2-group action $\triangleright: \mathbb{G}\times \mathbb{V}\to \mathbb{V}$, by \eqref{F01}, a representation $(\sigma^0, \sigma^1): \mathbb{G}\to \GL(\mathbb{V})$ is given by
\begin{eqnarray}\label{sigma} 
\sigma^0_{g_0}(x,u):=(g_0\triangleright x,g_0\triangleright u),\qquad \sigma^1_{g_1}(x):=\mathrm{pr}_{V_1}(g_1\triangleright x)=g_1\triangleright x-x,
\end{eqnarray}
for all $x\in V_0,u\in V_1$. Conversely, by \eqref{actF}, for a representation $(\sigma^0, \sigma^1): \mathbb{G}\to \GL(\mathbb{V})$, a linear $\mathbb{G}$-action $\triangleright: \mathbb{G}\times \mathbb{V}\to \mathbb{V}$ is given by
\[(g_0,g_1)\triangleright (x,u)=\big(\sigma^0_{g_0}(x),\sigma^0_{g_0} (u+\sigma^1_{g_1}(x+\phi u))\big)=\big(\sigma^0_{g_0} x,\sigma^0_{g_0} \sigma^1_{g_1}x+\sigma^0_{g_0\Phi(g_1)} u\big).\]

 For instance, 
the Lie 2-group $\GL(\mathbb{V})$ represents naturally on $\mathbb{V}$ with  $(\mathrm{Id},\mathrm{Id}): \GL(\mathbb{V})\to \GL(\mathbb{V})$. The associated linear Lie 2-group action $\triangleright: \GL(\mathbb{V})\times \mathbb{V}\to \mathbb{V}$ is 
\[\qquad (A,\gamma)\triangleright (x,u)=\big(A_0x,A_1(u+\gamma x+ \gamma \phi u)\big),\] for all $A_0\in \mathrm{GL}(V_0), A_1\in \mathrm{GL}(V_1)$ and $\gamma\in \Hom(V_0,V_1)$.

We remark that the crossed module in Proposition \ref{sd2gp} is the semi-direct product crossed module induced from the representation \eqref{sigma}; see \cite{N}.

One approach to obtain a dual Lie 2-group action of  $\mathbb{G}$ on $\mathbb{V}^*$ is by taking the dual of a representation and then applying Theorem \ref{cmhomo}.

\begin{Pro}\label{dual}
Let  $\sigma=(\sigma^0,\sigma^1):\mathbb{G}\to \GL(\mathbb{V})$ be a representation of a Lie 2-group $\mathbb{G}$ on a 2-vector space $\mathbb{V}$.
\begin{itemize}
\item[\rm (i)]Taking the dual, we obtain a representation $\sigma^*$ of $\mathbb{G}$ on $\mathbb{V^*}$:\begin{equation*}
		\vcenter{\xymatrix{
			G_1 \ar[d]^{\Phi} \ar[r]^{\sigma^{*1}} &\GL_1(\mathbb{V}^*)\ar[d]^{\partial}\\
			G_0\ar[r]^{\sigma^{*0}} &\GL_0(\mathbb{V}^*)		}},
	\end{equation*}
where 
\[\sigma^{*0}_{g_0}:=(\sigma^0_{g_0^{-1}})^*,\qquad \sigma^{*1}_{g_1}:=(\sigma^1_{g_1^{-1}})^*.\]
This representation is called the {\bf dual representation} of  $\sigma$.
\item[\rm (ii)] We have a dual Lie 2-group action $\sigma^{2gp*}:\mathbb{G}\times \mathbb{V^*}\to \mathbb{V^*}$, which is given by 
\begin{eqnarray}\label{2gp dual}
\sigma^{2gp*}_{(g_0,g_1)}(\xi,\alpha)
=(\sigma_{g_0}^{*0}\xi, \sigma_{g_0}^{*0}\sigma_{g_1}^{*1}\xi+\sigma_{g_0\Phi(g_1)}^{*0}\alpha), \qquad \forall \xi\in \g_1^*, \alpha\in \g_0^*.
\end{eqnarray}
\end{itemize}\end{Pro}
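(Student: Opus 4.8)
The plan is to follow the route indicated just before the statement: for (i) verify directly that the dualized pair $(\sigma^{*0},\sigma^{*1})$ is a homomorphism of crossed modules $\mathbb{G}\to\GL(\mathbb{V}^*)$, and for (ii) feed this representation into Proposition \ref{linearcm} and read off the associated linear action from the explicit representation-to-action formula following that proposition. The guiding principle throughout is that the transpose $(\cdot)^*$ is contravariant, so it reverses every composition; the inverses $g_0^{-1},g_1^{-1}$ built into the definitions of $\sigma^{*0}_{g_0},\sigma^{*1}_{g_1}$ are precisely what is needed to undo this reversal and make $\sigma^{*0},\sigma^{*1}$ genuine (covariant) group homomorphisms.

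For (i) I would first pin down $\GL(\mathbb{V}^*)$ for the dual 2-vector space $\mathbb{V}^*\colon V_0^*\xrightarrow{\phi^*}V_1^*$, noting the key bookkeeping point that dualization swaps base and fibre: the base $V_0$ of $\mathbb{V}$ becomes the fibre $V_0^*$ of $\mathbb{V}^*$, and the fibre $V_1$ becomes the base $V_1^*$. Consequently the two components of an element of $\GL_0(\mathbb{V}^*)$ and of $\partial$ get interchanged relative to those of $\GL_0(\mathbb{V})$. Then I would check well-definedness: transposing the relation $A_0\phi=\phi A_1$ defining $\GL_0(\mathbb{V})$ gives the corresponding relation for $\mathbb{V}^*$, so $\sigma^{*0}_{g_0}\in\GL_0(\mathbb{V}^*)$; and $\mathrm{Id}+\phi^*\sigma^{*1}_{g_1}=(\mathrm{Id}+\sigma^1_{g_1^{-1}}\phi)^*$ is the transpose of an element of $\GL(V_1)$, hence invertible, so $\sigma^{*1}_{g_1}\in\GL_1(\mathbb{V}^*)$. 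The homomorphism property of $\sigma^{*0}$ is immediate from $(g_0g_0')^{-1}=g_0'^{-1}g_0^{-1}$ together with contravariance of $(\cdot)^*$; for $\sigma^{*1}$ the same cancellation works, one only has to transpose the twisted product $\gamma\cdot\gamma'=\gamma+\gamma'+\gamma\phi\gamma'$ and observe it reproduces the twisted product of $\GL_1(\mathbb{V}^*)$.

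It then remains to verify the two crossed-module axioms. Commutativity of the square, $\partial\circ\sigma^{*1}=\sigma^{*0}\circ\Phi$, follows by transposing the identity $\sigma^0_{\Phi(g_1)}=\partial(\sigma^1_{g_1})$ for $\sigma$, using $\Phi(g_1)^{-1}=\Phi(g_1^{-1})$; here the base/fibre swap is what exchanges the two components of $\partial(\sigma^1_{g_1^{-1}})=(\mathrm{Id}+\phi\sigma^1_{g_1^{-1}},\mathrm{Id}+\sigma^1_{g_1^{-1}}\phi)$ into exactly $\partial(\sigma^{*1}_{g_1})$. The equivariance axiom $\sigma^{*1}(g_0\triangleright g_1)=\sigma^{*0}_{g_0}\triangleright\sigma^{*1}_{g_1}$ follows by transposing $\sigma^1_{g_0\triangleright g_1^{-1}}=\sigma^0_{1,g_0}\,\sigma^1_{g_1^{-1}}\,(\sigma^0_{0,g_0})^{-1}$, where $\sigma^0_{g_0}=(\sigma^0_{0,g_0},\sigma^0_{1,g_0})$ denotes the $V_0$- and $V_1$-components and the right-hand side is the conjugation-type action $(A_0,A_1)\triangleright\gamma=A_1\gamma A_0^{-1}$ of $\GL_0(\mathbb{V})$ on $\GL_1(\mathbb{V})$, and using $(g_0\triangleright g_1)^{-1}=g_0\triangleright g_1^{-1}$; transposing turns $A_1(\cdot)A_0^{-1}$ into $(A_0^{-1})^*(\cdot)A_1^*$, which after the base/fibre swap is exactly the action of $\sigma^{*0}_{g_0}$ on $\GL_1(\mathbb{V}^*)$. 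This establishes (i).

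Part (ii) is then a formal consequence: applying Proposition \ref{linearcm} to the representation $\sigma^*$ produces a linear $\mathbb{G}$-action on $\mathbb{V}^*$, and substituting $\sigma^{*0},\sigma^{*1},\phi^*$ for $\sigma^0,\sigma^1,\phi$ in the explicit representation-to-action formula following Proposition \ref{linearcm} yields precisely \eqref{2gp dual}. The only genuine difficulty in the whole argument is the combined effect of contravariance and the base/fibre interchange: one must consistently remember that transposition both reverses composition order (compensated by the inverses in the definitions) and swaps the two layers of the 2-vector space (which transposes the components of every $\GL_0$-element and of $\partial$). Sign conventions on the dual structure map $\phi^*$ must be fixed once and kept consistent, but do not affect the shape of the argument.
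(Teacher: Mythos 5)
Your proposal is correct and follows essentially the same route as the paper, whose proof of this proposition is deliberately terse: it dismisses (i) as ``a straightforward verification'' and derives (ii) from Theorem \ref{cmhomo} and Formula \eqref{actF}, which is exactly what you do via Proposition \ref{linearcm} and the explicit representation-to-action formula (itself a specialization of \eqref{actF}). Your detailed execution of the verification in (i) --- in particular the observation that dualization swaps base and fibre of the 2-vector space, so the components of $\GL_0$-elements and of $\partial$ interchange, and that the inverses in $\sigma^{*0}_{g_0},\sigma^{*1}_{g_1}$ compensate the contravariance of the transpose (including for the twisted product $\gamma\cdot\gamma'=\gamma+\gamma'+\gamma\circ\phi\circ\gamma'$) --- is accurate and supplies precisely the details the paper omits.
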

\begin{proof}
(i) follows from a straightforward verification. (ii) follows from Theorem \ref{cmhomo} and Formula \eqref{actF}.\end{proof}

\begin{Rm}\label{sn} 
In order to obtain a dual Lie 2-group action, we should take the duals of  functors ($\mathrm{GL}_0(\mathbb{V})$) and natural transformations ($\mathrm{GL}_1(\mathbb{V})$) as in Proposition \ref{dual}, instead of taking the dual of linear maps ($\mathrm{GL}(\mathbb{V})$) on the whole vector space $\mathbb{V}$. 

In fact, considering  the dual  of the group action, we obtain  $\triangleright^*: \mathbb{G}\times \mathbb{V}^*\to \mathbb{V}^*$, which is given by 
\[\langle g\triangleright^* \gamma,\omega\rangle=\langle \gamma, g^{-1}\triangleright \omega\rangle, \qquad \forall g\in \mathbb{G}, \gamma\in \mathbb{V}^*, \omega\in \mathbb{V}.\]
This is in general not a Lie 2-group action. Actually, $\triangleright^*$ is a Lie 2-group action if and only if 
\begin{eqnarray}\label{iff}
 g_1\triangleright u=u,\qquad \Phi(g_1)\triangleright x=x,\qquad \forall g_1\in G_1, x\in V_0, u\in V_1.
 \end{eqnarray}
 \end{Rm}

For a Lie 2-group $\mathbb{G}$ with its Lie 2-algebra $\g$, denote by $\Ad$ and $\Ad^*$ the adjoint and coadjoint actions of the Lie group $\mathbb{G}$ on $\g$ and $\g^*$, respectively. 
\begin{Pro}\label{ad2action}
\begin{itemize}
\item [\rm (i)] For any $(g_0,g_1)\in G_0\ltimes G_1$ and $(x,u)\in \g_0\oplus \g_1$, we have
\begin{eqnarray}
\label{adjoint}\Ad_{(g_0,g_1)}(x,u)&=&\big(\Ad^{G_0}_{g_0} x, g_0\triangleright (\Ad_{g_1}^{G_1} u-R_{g_1^{-1}*}\widehat{x}_{g_1})\big),
\end{eqnarray}
where  $\widehat{x}_{g_1}\in T_{g_1} G_1$ is defined by $\widehat{x}_{g_1}
:=\frac{d}{dt}|_{t=0} (\exp^{tx}\triangleright g_1)$.
\item [\rm (ii)] The adjoint action $\Ad: \mathbb{G}\times \g\to \g$ is a Lie 2-group action.
\end{itemize}\end{Pro}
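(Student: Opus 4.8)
The plan is to establish (i) by differentiating the conjugation in the group $\mathbb{G}=G_0\ltimes G_1$, and to deduce (ii) from the multiplicativity of conjugation. For (i), I first read off from \eqref{gpstr} the group inverse $(g_0,g_1)^{-1}=(g_0^{-1},(g_0\triangleright g_1)^{-1})$, and pick a curve $c(t)=(\exp^{tx},\gamma(t))$ with $\gamma(0)=e$, $\dot\gamma(0)=u$, so that $\dot c(0)=(x,u)$. Then $\Ad_{(g_0,g_1)}(x,u)=\frac{d}{dt}\big|_{0}\,(g_0,g_1)\diamond c(t)\diamond (g_0,g_1)^{-1}$, which I expand using \eqref{gpstr}. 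The $G_0$-component equals $g_0\exp^{tx}g_0^{-1}$ and differentiates to $\Ad^{G_0}_{g_0}x$, giving the first slot at once. The $G_1$-component is a product $P(t)Q(t)R$ of three factors, with $P(t)=(g_0\exp^{-tx})\triangleright g_1$, $Q(t)=g_0\triangleright\gamma(t)$ and the constant $R=(g_0\triangleright g_1)^{-1}$; since $P(0)Q(0)R=e$, its derivative lies in $\g_1$.

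The heart of the computation is to evaluate $\frac{d}{dt}\big|_0 P(t)Q(t)R$. Writing $h:=g_0\triangleright g_1$ and translating each factor back to the identity yields
\[\frac{d}{dt}\Big|_0 P(t)Q(t)R=R_{h^{-1}*}\dot P(0)+\Ad^{G_1}_{h}\dot Q(0),\]
where $\dot P(0)=-(\Theta_{g_0})_*\widehat{x}_{g_1}\in T_hG_1$ and $\dot Q(0)=g_0\triangleright u\in\g_1$, with $\Theta_{g_0}:=g_0\triangleright(\cdot):G_1\to G_1$. The key point is that $\Theta_{g_0}$ is a group automorphism, so it intertwines translations and conjugations: from $\Theta_{g_0}\circ R_{g_1}=R_{h}\circ\Theta_{g_0}$ one obtains $R_{h^{-1}*}(\Theta_{g_0})_*\widehat{x}_{g_1}=g_0\triangleright(R_{g_1^{-1}*}\widehat{x}_{g_1})$, and from $\Theta_{g_0}\circ C^{G_1}_{g_1}=C^{G_1}_{h}\circ\Theta_{g_0}$ (conjugation by $g_1$, resp.\ $h$) one obtains $\Ad^{G_1}_{h}(g_0\triangleright u)=g_0\triangleright(\Ad^{G_1}_{g_1}u)$. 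Substituting and factoring out $g_0\triangleright$ produces exactly $g_0\triangleright(\Ad^{G_1}_{g_1}u-R_{g_1^{-1}*}\widehat{x}_{g_1})$, the second slot of \eqref{adjoint}. I expect the only delicate point to be this bookkeeping: keeping the base points and the signs straight while moving the three factors to the identity.

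For (ii), the cleanest route is conceptual. The conjugation map $C:\mathbb{G}\times\mathbb{G}\to\mathbb{G}$, $C(g,h)=g\diamond h\diamond g^{-1}$, is a Lie 2-group action of $\mathbb{G}$ on the Lie groupoid $\mathbb{G}$: it is a group action, and it is a groupoid homomorphism because the group multiplication $\diamond$ is one, i.e.\ the interchange law $(g*g')\diamond(h*h')=(g\diamond g')*(h\diamond h')$ holds. Indeed, combining this law with $(g*g')^{-1}=g^{-1}*g'^{-1}$ (valid since $s,t$ are group homomorphisms, so $(g^{-1},g'^{-1})$ is composable) gives $(g*_{\mathbb{G}}g')\triangleright(h*h')=(g\triangleright h)*(g'\triangleright h')$, which is \eqref{goidmor}. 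Since $\Ad$ is the linearization of $C$ along the unit section, and the Lie functor carries the Lie 2-group $\mathbb{G}\rightrightarrows G_0$ to the tangent groupoid $\g=(\g_1\xrightarrow{\phi}\g_0)$, the action $\Ad:\mathbb{G}\times\g\to\g$ inherits the groupoid-homomorphism property and is therefore a Lie 2-group action. Alternatively, one can verify this directly from \eqref{adjoint}: the source condition is immediate, the target condition $t(\Ad_{(g_0,g_1)}(x,u))=\Ad^{G_0}_{g_0\Phi(g_1)}(x+\phi u)$ reduces to the differentiated crossed module relations (notably $\phi(g_0\triangleright u)=\Ad^{G_0}_{g_0}\phi(u)$ together with the derivative of $\Phi(\exp^{tx}\triangleright g_1)=\exp^{tx}\Phi(g_1)\exp^{-tx}$), and the remaining identity \eqref{goidmor} is then the main obstacle on this direct route.
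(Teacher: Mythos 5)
Your proof is correct, and it splits against the paper's in an interesting way. For part (i) you follow essentially the paper's route: the paper likewise differentiates $(g_0,g_1)\diamond(\exp^{tx},\exp^{tu})\diamond(g_0^{-1},g_0\triangleright g_1^{-1})$ using \eqref{gpstr}, compressing into one display what you spell out via the Leibniz decomposition $R_{h^{-1}*}\dot P(0)+\Ad^{G_1}_{h}\dot Q(0)$ and the two intertwining identities for the automorphism $\Theta_{g_0}$; your bookkeeping (signs, base points, and the fact that $(\Theta_{g_0})_{*e}=g_0\triangleright\cdot$ on $\g_1$) checks out and reproduces \eqref{adjoint} exactly. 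For part (ii) you take a genuinely different route. The paper feeds the formula \eqref{adjoint} back in and verifies the multiplicativity identity \eqref{adgoid} by brute force, splitting it into three separate identities governing the $v$-, $u$- and $x$-terms, each resting on the crossed-module relations such as $\Phi(g_1)\triangleright h_1=g_1h_1g_1^{-1}$. You instead note that conjugation $C:\mathbb{G}\times\mathbb{G}\to\mathbb{G}$ is itself a Lie 2-group action of $\mathbb{G}$ on the groupoid $\mathbb{G}$ --- the interchange law for $\diamond$ and $*$, together with $(g*g')^{-1}=g^{-1}*g'^{-1}$, gives \eqref{goidmor} for $C$ --- and then transport this through the tangent functor. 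This is sound, though to make it airtight you should add one sentence: $TC:T\mathbb{G}\times T\mathbb{G}\to T\mathbb{G}$ is a morphism of tangent Lie groupoids, $T_{(e,e)}\mathbb{G}$ is a Lie subgroupoid of $T\mathbb{G}\rightrightarrows TG_0$ over $\g_0=T_eG_0$ whose restricted structure maps are precisely the linear groupoid \eqref{linear2gp} for $\g_1\xrightarrow{\phi}\g_0$, and $TC$ restricted to the subgroupoid $0_{\mathbb{G}}\times T_{(e,e)}\mathbb{G}\cong\mathbb{G}\times\g$ lands in $T_{(e,e)}\mathbb{G}$ and equals $\Ad$ since $C(g,e)=e$. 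The trade-off: the paper's computation is elementary and self-contained, and its three auxiliary identities make explicit how the crossed-module axioms enter; your functorial argument is shorter, makes (ii) logically independent of the formula in (i), and explains conceptually \emph{why} \eqref{goidmor} must hold, namely because it already holds for $C$ at the group level and is preserved by linearization.
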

\begin{proof}
(i) holds by the calculation
\begin{eqnarray*}
\Ad_{(g_0,g_1)}(x,u)&=&\frac{d}{dt}|_{t=0} (g_0,g_1)\diamond (\exp^{tx},\exp^{tu})\diamond (g_0^{-1},g_0\triangleright g_1^{-1})
\\ &=&\frac{d}{dt}|_{t=0}\big(g_0\exp^{tx}g_0^{-1},g_0\triangleright ((\exp^{-tx}\triangleright g_1)\exp^{tu}g_1^{-1})\big)
\\ &=&\big(\Ad^{G_0}_{g_0} x, g_0\triangleright (\Ad_{g_1}^{G_1} u-R_{g_1^{-1}*}\widehat{x}_{g_1})\big).
\end{eqnarray*} For (ii), we show the adjoint action 
$\Ad$
is a Lie groupoid homomorphism. We only check
\begin{eqnarray}\label{adgoid}
\big(\Ad_{(g_0,g_1)} (x,u) \big)*\big(\Ad_{(g_0\Phi(g_1),h_1)} (x+\phi(u),v) \big)=\Ad_{(g_0,g_1h_1)} (x,u+v),
\end{eqnarray}
for $g_0\in G_0,g_1,h_1\in G_1$ and $x\in \g_0$ and $u,v\in \g_1$. 
By \eqref{adjoint}, 
 the left hand side of \eqref{adgoid} equals 
\begin{eqnarray*}
&&\big(\Ad^{G_0}_{g_0} x,g_0\triangleright (\Ad^{G_1}_{g_1}u-R_{g_1^{-1}*}\widehat{x}_{g_1})\big)*\\ && \big(\Ad^{G_0}_{g_0\Phi(g_1)} (x+\phi(u)),g_0\Phi(g_1)\triangleright (\Ad^{G_1}_{ h_1}v-R_{h_1^{-1}*}\widehat{(x+\phi(u))}_{h_1})\big)
\\ &=&\big(\Ad^{G_0}_{g_0} x,g_0\triangleright (\Ad^{G_1}_{g_1}u-R_{g_1^{-1}*}\widehat{x}_{g_1})+g_0\Phi(g_1)\triangleright (\Ad^{G_1}_{ h_1}v-R_{h_1^{-1}*}\widehat{(x+\phi(u))}_{h_1})\big),
\end{eqnarray*} 
and the right hand side of \eqref{adgoid} is 
\[\big(\Ad^{G_0}_{g_0} x,g_0\triangleright (\Ad^{G_1}_{g_1h_1}(u+v)-R_{(g_1h_1)^{-1}*}\widehat{x}_{g_1h_1}\big).\]
Using the fact that $\Phi(g_1)\triangleright h_1=g_1h_1g_1^{-1}$ and $g_0\triangleright (g_1h_1)=(g_0\triangleright g_1)(g_0\triangleright h_1)$, we have
\begin{eqnarray*}
g_0\Phi(g_1)\triangleright \Ad^{G_1}_{h_1}v=g_0\triangleright (\Ad^{G_1}_{g_1h_1}v),
\end{eqnarray*}
and 
\begin{eqnarray*}
g_0\Phi(g_1)\triangleright (R_{ h_1^{-1}*}\widehat{\phi(u)}_{h_1})
&=&\frac{d}{dt}|_{t=0} g_0\triangleright \big(g_1 (\Phi(\exp^{tu})\triangleright h_1) h_1^{-1}g_1^{-1}\big)\\ &=&\frac{d}{dt}|_{t=0} g_0\triangleright (g_1 \exp^{tu}h_1 \exp^{-tu} h_1^{-1}g_1^{-1})\\ &=&g_0\triangleright (\Ad^{G_1}_{g_1} u-\Ad^{G_1}_{g_1h_1} u),
\end{eqnarray*}
and also
\begin{eqnarray*}
g_0\triangleright (R_{(g_1h_1)^{-1}*}\widehat{x}_{g_1h_1})&=&\frac{d}{dt}|_{t=0}g_0\triangleright\big((\exp^{tx}\triangleright g_1)(\exp^{tx}\triangleright h_1)h_1^{-1}g_1^{-1}\big)
\\ &=&g_0\triangleright  (R_{g_1^{-1}*} \widehat{x}_{g_1})+g_0\Phi(g_1)\triangleright (R_{h_1^{-1}*}\widehat{x}_{h_1}).\end{eqnarray*}
These three equations show that the terms involving $v,u$ and $x$ of both sides of \eqref{adgoid} are equal respectively. Hence \eqref{adgoid} holds by linearity.
\end{proof}

By Propositions \ref{sd2gp}, associated to the adjoint action $\Ad$, we have a semi-direct product Lie 2-group $\mathbb{G}\ltimes_{\Ad} \g$. We call it  the {\bf tangent 2-group} of a Lie 2-group $\mathbb{G}$. 
On the other hand, 
the tangent bundle $T\mathbb{G}\rightrightarrows TG_0$ of a Lie 2-group $\mathbb{G}$ is naturally a Lie 2-group with the tangent Lie group and Lie groupoid structures. In fact, we have an isomorphism $T\mathbb{G}\cong \mathbb{G}\ltimes \g$ of Lie 2-groups.

Applying Proposition \ref{linearcm} to the adjoint action of $\mathbb{G}$ on its Lie 2-algebra $\g$, we have 
\begin{Cor}
The representation $\Ad: \mathbb{G}\to \mathbb{GL}(\g)$ associated to the adjoint action $\Ad$ is 
\begin{equation*}
		\vcenter{\xymatrix{
			G_1 \ar[d]^{\Phi} \ar[r]^{\Ad^1} &\GL_1(\g)\ar[d]^{\Psi}\\
			G_0\ar[r]^{\Ad^0} &\GL_0(\g)		}},
	\end{equation*}
where $\Ad^0$ and $\Ad^1$ are given by
\begin{eqnarray}\label{adjoint1}	\Ad^0_{g_0}(x,u)=(\Ad^{G_0}_{g_0} x, g_0\triangleright u),\qquad \Ad^1_{g_1}(x)=-R_{g_1^{-1}*} \widehat{x}_{g_1}.
\end{eqnarray}
It is called the {\bf adjoint representation} of $\mathbb{G}$ on $\g$.
\end{Cor}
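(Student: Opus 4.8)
The plan is to apply Proposition~\ref{linearcm} together with the explicit formula~\eqref{sigma} to the linear action $\Ad$, whose status as a Lie 2-group action is provided by Proposition~\ref{ad2action}(ii). Since $\Ad$ is linear, Proposition~\ref{linearcm} guarantees that the pair $(\sigma^0,\sigma^1)$ produced by~\eqref{sigma} lands in the subcrossed module $\GL(\g)$ of $\mathrm{Bis}(\g)\xrightarrow{\Psi}\Aut(\g)$ and automatically constitutes a representation. Thus no homomorphism or equivariance property needs to be checked separately: it suffices to identify the two maps $\sigma^0=\Ad^0$ and $\sigma^1=\Ad^1$ explicitly, and this reduces to two specializations of~\eqref{adjoint}.

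First I would compute $\Ad^0$. By~\eqref{sigma}, $\Ad^0_{g_0}(x,u)=g_0\triangleright(x,u)$ is the action of the unit $(g_0,e)\in\mathbb{G}$ on $(x,u)\in\g$. Setting $g_1=e$ in~\eqref{adjoint} and using $\Ad^{G_1}_{e}u=u$ together with $\widehat{x}_{e}=0$ --- the latter because the $G_0$-action on $G_1$ is by automorphisms and hence fixes the identity $e\in G_1$ --- one reads off $\Ad^0_{g_0}(x,u)=(\Ad^{G_0}_{g_0}x,\,g_0\triangleright u)$, as claimed.

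Next I would compute $\Ad^1$. By~\eqref{sigma}, $\Ad^1_{g_1}(x)=\mathrm{pr}_{\g_1}(g_1\triangleright x)$, where $g_1\triangleright x$ denotes the action of the arrow $(e,g_1)\in\mathbb{G}$ on the unit $\iota(x)=(x,0)\in\g$, regarded as the bisection $x\mapsto(x,\Ad^1_{g_1}(x))$. Setting $g_0=e$ in~\eqref{adjoint} and using $\Ad^{G_1}_{g_1}0=0$, the $\g_1$-component of $\Ad_{(e,g_1)}(x,0)$ is exactly $-R_{g_1^{-1}*}\widehat{x}_{g_1}$. Under the identification from Proposition~\ref{linearcm} of a linear bisection $x\mapsto(x,\gamma(x))$ with $\gamma\in\GL_1(\g)$, this gives $\Ad^1_{g_1}(x)=-R_{g_1^{-1}*}\widehat{x}_{g_1}$.

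The computations above are mere substitutions, so the only genuinely delicate point is the bookkeeping: correctly matching the abstract symbols $g_0\triangleright(\cdot)$ and $g_1\triangleright(\cdot)$ in~\eqref{sigma} to the concrete elements $(g_0,e)$ and $(e,g_1)$ of $\mathbb{G}$, and to the points $(x,u)$, $(x,0)$ of the action groupoid $\g_0\times\g_1\rightrightarrows\g_0$ on which they act, and then extracting the appropriate $\g_0$- and $\g_1$-components from~\eqref{adjoint}. Once this identification is made, the two displayed formulas in~\eqref{adjoint1} follow immediately, and the crossed-module homomorphism property comes for free from Proposition~\ref{linearcm}.
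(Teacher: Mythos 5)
Your proposal is correct and coincides with the paper's own (very brief) argument: the Corollary is obtained precisely by applying Proposition~\ref{linearcm} and formula~\eqref{sigma} to the adjoint action of Proposition~\ref{ad2action}, and your two substitutions $g_1=e$ (with $\widehat{x}_e=0$ since automorphisms fix $e\in G_1$) and $g_0=e$, $u=0$ in~\eqref{adjoint} are exactly the intended extraction of $\Ad^0$ and $\Ad^1$. Nothing is missing; the bookkeeping you flag as the delicate point is handled correctly.
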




Taking the Lie 2-group dual as in Proposition \ref{dual}, we shall obtain the coadjoint representation of  $\mathbb{G}$ on $\g^*$ and furthermore a dual Lie 2-group action of $\mathbb{G}$ on $\g^*$. 
\begin{Cor}\label{cmcoad}
\begin{itemize} 
\item[\rm (i)] The coadjoint representation $\Ad^*: \mathbb{G}\to \mathbb{GL}(\g)$ of a Lie 2-group $\mathbb{G}$ on $\g^*$ is
\begin{equation*}
		\vcenter{\xymatrix{
			G_1 \ar[d]^{\Phi} \ar[r]^{\Ad^{*1}} &\GL_1(\g^*)\ar[d]^{\Psi}\\
			G_0\ar[r]^{\Ad^{*0}} &\GL_0(\g^*)		}},
	\end{equation*}
where $\Ad^{*0}$ and $\Ad^{*1}$ are given by
\begin{eqnarray*}\label{2groupactioncoad1}
	\Ad^{*0}_{g_0}(\xi,\alpha)=(g_0\triangleright^* \xi,\Ad^{*G_0}_{g_0} \alpha),\qquad \Ad^{*1}_{g_1}(\xi)=-\rho^*_{g_1}\xi,\qquad \forall \alpha\in \g_0^*,\xi\in \g_1^*.
	\end{eqnarray*}
	Here $\rho^*_{g_1}: \g_1^*\to \g_0^*$ is given by $\langle \rho^*_{g_1}(\xi),x\rangle=\langle \xi,R_{g_1*}\widehat{x}_{g_1^{-1}}\rangle$ for $x\in \g_0$.
	\item[\rm (ii)] There is a Lie 2-group action $\mathbb{A} \mathfrak{d}^*=\Ad^{2gp*}: \mathbb{G}\times \g^*\to \g^*$ of $\mathbb{G}$ on $\g^*$ given by
\begin{eqnarray}\label{2groupactioncoad}
\mathbb{A} \mathfrak{d}^*_{(g_0,g_1)}(\xi,\alpha)=(g_0\triangleright^*\xi, -\Ad_{g_0}^{*G_0}\rho_{g_1}^*\xi
+\Ad_{g_0\Phi(g_1)}^{*G_0} \alpha).
\end{eqnarray}
\end{itemize}
\end{Cor}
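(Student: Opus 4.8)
The plan is to derive both statements as a direct specialization of Proposition \ref{dual} to the adjoint representation $\Ad=(\Ad^0,\Ad^1)$ of the preceding corollary, whose components are recorded in \eqref{adjoint1}. Since $\Ad$ has already been identified as a genuine representation $\mathbb{G}\to\GL(\g)$, there is nothing new to prove about crossed-module homomorphisms or about the action property; the whole content is the explicit dualization, together with the bookkeeping of which slot of $\g=(\g_1\xrightarrow{\phi}\g_0)$ maps where under duality.

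For (i), I would first record that dualizing the $2$-vector space $\g$ interchanges its two slots: the degree-zero part $\g_0$ dualizes to the degree-one part $\g_0^*$ of $\g^*$, while the degree-one part $\g_1$ dualizes to the degree-zero part $\g_1^*$. Thus for an element $A=(A_0,A_1)\in\GL_0(\g)$, with $A_0$ acting on $\g_0$ and $A_1$ on $\g_1$, the prescription $\Ad^{*0}_{g_0}=(\Ad^0_{g_0^{-1}})^*$ of Proposition \ref{dual}(i) produces the pair $(A_1^*,A_0^*)$ acting on $\g_1^*\oplus\g_0^*$. Substituting $A_0=\Ad^{G_0}_{g_0^{-1}}$ and $A_1=(g_0^{-1}\triangleright\cdot)$ from \eqref{adjoint1} gives $A_1^*=(g_0^{-1}\triangleright\cdot)^*=g_0\triangleright^*$ on $\g_1^*$ and $A_0^*=(\Ad^{G_0}_{g_0^{-1}})^*=\Ad^{*G_0}_{g_0}$ on $\g_0^*$, which is exactly the stated formula for $\Ad^{*0}$. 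For the degree-one piece, $\Ad^{*1}_{g_1}=(\Ad^1_{g_1^{-1}})^*$; since $\Ad^1_{g_1^{-1}}(x)=-R_{g_1*}\widehat{x}_{g_1^{-1}}$, the defining pairing yields $\langle\Ad^{*1}_{g_1}\xi,x\rangle=\langle\xi,-R_{g_1*}\widehat{x}_{g_1^{-1}}\rangle=-\langle\rho^*_{g_1}\xi,x\rangle$, i.e. $\Ad^{*1}_{g_1}=-\rho^*_{g_1}$. I would note in passing that the inversion $g_1\mapsto g_1^{-1}$ built into the dualization is precisely what accounts for the appearance of $\widehat{x}_{g_1^{-1}}$ (rather than $\widehat{x}_{g_1}$) in the definition of $\rho^*_{g_1}$.

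For (ii), I would simply feed the formulas from (i) into the general expression \eqref{2gp dual} for the dual Lie $2$-group action, replacing $\sigma^{*0},\sigma^{*1}$ by $\Ad^{*0},\Ad^{*1}$. The first component is $\Ad^{*0}_{g_0}\xi=g_0\triangleright^*\xi$; the second is $\Ad^{*0}_{g_0}\Ad^{*1}_{g_1}\xi+\Ad^{*0}_{g_0\Phi(g_1)}\alpha$, and inserting $\Ad^{*1}_{g_1}\xi=-\rho^*_{g_1}\xi\in\g_0^*$ together with the fact that $\Ad^{*0}$ acts on the $\g_0^*$-slot by $\Ad^{*G_0}$ produces $-\Ad^{*G_0}_{g_0}\rho^*_{g_1}\xi+\Ad^{*G_0}_{g_0\Phi(g_1)}\alpha$, matching \eqref{2groupactioncoad}. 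That $\mathbb{A}\mathfrak{d}^*$ is a bona fide Lie $2$-group action is then automatic from Proposition \ref{dual}(ii) and needs no further argument.

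The only real point demanding care—and the one place where a slot or sign error would slip in—is the slot-swap under duality in (i): one must keep straight that $(\Ad^0_{g_0^{-1}})^*$ routes the $\g_1$-action into the $\g_1^*$-slot and the $\g_0$-action into the $\g_0^*$-slot, and that it is the inverses $g_0^{-1}$ (respectively $g_1^{-1}$) that get dualized. Once this dictionary is pinned down, the remainder is a mechanical substitution into \eqref{2gp dual}.
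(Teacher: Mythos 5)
Your proposal is correct and follows exactly the route the paper intends: the corollary is stated without proof precisely because it is the specialization of Proposition \ref{dual} to the adjoint representation \eqref{adjoint1}, and your dualization bookkeeping (the slot swap $\g_0\leftrightarrow\g_0^*$ into the degree-one part of $\g^*$, the inverses $g_0^{-1},g_1^{-1}$ producing $\Ad^{*G_0}_{g_0}$, $g_0\triangleright^*$ and $\widehat{x}_{g_1^{-1}}$, and the substitution into \eqref{2gp dual}) is carried out accurately. Nothing is missing.
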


By Propositions \ref{sd2gp}, associated to  $\mathbb{A} \mathfrak{d}^*$, we have a semi-direct product Lie 2-group $\mathbb{G}\ltimes_{\mathbb{A}\mathfrak{d}^*} \g^*$, 
which is called the {\bf cotangent 2-group} of a Lie 2-group.

\begin{Rm}
Unlike the coadjoint orbits of a Lie group, the coadjoint orbits of a Lie 2-group defined by \eqref{2groupactioncoad} in general do not have natural symplectic structures. In fact, 
the coadjoint action $\Ad^*: \mathbb{G}\times \g^*\to \g^*$ of the Lie group $\mathbb{G}$ takes the formula
\begin{eqnarray*}
\label{coadjoint} \Ad^{*}_{(g_0,g_1)} (\xi,\alpha)&=&\big(g_0\triangleright^*(\Ad_{g_1}^*\xi),-\Ad_{g_0} ^*\rho_{g_1}^*\xi+\Ad_{g_0}^{*} \alpha\big),
\end{eqnarray*}
for any  $(\alpha,\xi)\in \g_0^*\oplus \g_1^*$. It is  different from \eqref{2groupactioncoad} and in general not a Lie 2-group action. By Remark \ref{sn}, it is a Lie 2-group action if and only if 
\[\Ad^{G_1}_{g_1} u=u,\qquad \Ad^{G_0}_{\Phi(g_1)} x=x,\qquad \forall g_1\in G_1, u\in \g_1,x\in \g_0.\]
So if $G_0$ and $G_1$ are connected, $\Ad^*$ is a Lie 2-group action if and only if $G_1$ is abelian and $\mathrm{Im}(\Phi)\subset Z(G_0)$.
\end{Rm}

\section{Homogeneous spaces of Lie 2-groups}

In this section, we give a thorough discussion of homogeneous spaces of Lie 2-groups. First, we
characterize the algebraic and geometric structures of the homogeneous $\mathbb{G}$-space $\mathbb{G}/\mathbb{H}$, where $\mathbb{H}$ is a closed 2-subgroup. It not only 
inherits  a canonical quotient Lie groupoid structure, but also is equipped with a fiber bundle structure. 
Then it is shown that a homogeneous $\mathbb{G}$-space is always of the quotient of $\mathbb{G}$ by a closed  2-subgroup. In the end, we discover that $\mathbb{G}/\mathbb{H}$ is not just a homogeneous space of the Lie group $\mathbb{G}$, but also a homogeneous space of some groupoid.

Given a Lie 2-group $\mathbb{G}$, we shall use $\diamond$ and $*$ to denote the group and groupoid structures on $\mathbb{G}$, respectively.

\begin{Def}
Let $\mathbb{G}:G_1\xrightarrow{\Phi} G_0$ be a Lie 2-group. A {\bf homogeneous space} of $\mathbb{G}$, or a {\bf homogeneous $\mathbb{G}$-space}, is a Lie groupoid $P\rightrightarrows P_0$ with a transitive $\mathbb{G}$-action. 
\end{Def}

\begin{Pro}\label{algstr}
Given a Lie 2-group $\mathbb{G}: G_1\xrightarrow{\Phi}G_0$ and a closed Lie 2-subgroup $\mathbb{H}: H_1\xrightarrow{\Phi} H_0$,  the quotient space $\mathbb{G}/\mathbb{H}$ is 
a homogeneous $\mathbb{G}$-space with the $\mathbb{G}$-action $g\triangleright [g']=[gg']$, where the Lie groupoid structure  on 
\[G_0\ltimes G_1/H_0\ltimes H_1\rightrightarrows G_0/H_0\]
is given as follows:  the source, target maps are 
\[s([g_0,g_1])=[g_0],\qquad t([g_0,g_1])=[g_0\Phi(g_1)],\]
and  the multiplication is 
\begin{eqnarray}\label{multquo}
[g_0,g_1]*[g_0\Phi(g_1),g'_1]=[g_0,g_1g'_1].
\end{eqnarray}
\end{Pro}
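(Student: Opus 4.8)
The plan is to avoid checking the groupoid axioms directly and instead realize $\mathbb{G}/\mathbb{H}$ as a quotient Lie groupoid, transporting every piece of structure through the quotient projection. First I would recast right translation by $\mathbb{H}$ as a genuine left Lie 2-group action of $\mathbb{H}$ on the Lie groupoid $\mathbb{G}$, namely $h\triangleright g:=g\diamond h^{-1}$. This map $\mathbb{H}\times\mathbb{G}\to\mathbb{G}$ is a Lie groupoid homomorphism because it is assembled from the group multiplication $\diamond$ and the group inversion of the 2-group $\mathbb{G}$, both of which are groupoid homomorphisms; hence it is a Lie 2-group action in the sense of the definition recalled in Section 2. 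It is free by cancellation and proper because $\mathbb{H}=H_0\ltimes H_1$ is closed in $\mathbb{G}$, and its orbits are precisely the cosets $g\diamond\mathbb{H}$. Applying the quotient construction for free and proper Lie 2-group actions recalled above then produces a Lie groupoid $G_0\ltimes G_1/H_0\ltimes H_1\rightrightarrows G_0/H_0$ together with a Lie groupoid homomorphism $p:\mathbb{G}\to\mathbb{G}/\mathbb{H}$ covering $\bar p:G_0\to G_0/H_0$.

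Next I would read off the structure maps by pushing those of $\mathbb{G}$ through $p$. Since $p$ intertwines the source and target maps, $s([g_0,g_1])=\bar p\,(g_0)=[g_0]$ and $t([g_0,g_1])=\bar p\,(g_0\Phi(g_1))=[g_0\Phi(g_1)]$; that these are independent of the representative follows from the crossed module identity $\Phi(h_0^{-1}\triangleright g_1)=h_0^{-1}\Phi(g_1)h_0$ together with $h_0\Phi(h_1)\in H_0$. For the multiplication I would first observe that a composable pair in the quotient always lifts to a composable pair in $\mathbb{G}$: given $[g_0,g_1]$ and a second class whose source equals $[g_0\Phi(g_1)]$, I can right-translate a representative by a suitable element of $H_0\ltimes H_1$ so that its first component becomes exactly $g_0\Phi(g_1)$, reducing matters to the composable product $(g_0,g_1)*(g_0\Phi(g_1),g_1')=(g_0,g_1g_1')$ in $\mathbb{G}$. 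Applying $p$ and using that it is a groupoid homomorphism yields \eqref{multquo}. Finally, left translation by $\mathbb{G}$ commutes with right translation by $\mathbb{H}$, so it descends to $g\triangleright[g']=[g\diamond g']$; since left translation is itself a Lie 2-group action on $\mathbb{G}$ and $p$ intertwines it with the descended action, the latter is again a Lie 2-group action, and it is transitive because $g'\diamond g^{-1}$ carries $[g]$ to $[g']$.

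The hard part is organizational rather than computational: making the abstract quotient theorem applicable. The genuinely new input---the existence of the smooth quotient \emph{groupoid} and the fact that the projection is a groupoid morphism---is exactly what that theorem supplies, so the crux is to verify its hypotheses correctly, i.e. that $h\triangleright g=g\diamond h^{-1}$ really is a left Lie 2-group action (this is where one uses that inversion and multiplication of a 2-group are groupoid homomorphisms) and that the $\mathbb{H}$-action is free and proper. Everything downstream---the well-definedness of $s$, $t$ and $*$ and the lifting of composable pairs needed to force \eqref{multquo}---then reduces to short manipulations with the crossed module relations, with no further appeal to manifold or groupoid axioms.
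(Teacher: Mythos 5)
Your proposal is correct, but it is organized quite differently from the paper's proof. The paper works entirely by hand: it first observes that a composable pair in the quotient can always be represented as $[g_0,g_1]$ and $[g_0\Phi(g_1),g_1']$ (the same representative-adjustment step you use to lift composable pairs), and then verifies well-definedness of \eqref{multquo} by an explicit crossed-module computation, multiplying both factors by arbitrary $(h_0,h_1),(h_0',h_1')\in H_0\ltimes H_1$ and checking the product stays in the class $[g_0,g_1g_1']$; homogeneity and \eqref{goidmor} are then noted to follow from the interchange law, i.e.\ from the groupoid multiplication on $\mathbb{G}$ being a group homomorphism. You instead route everything through the quotient statement recalled in Section 2 for free and proper Lie 2-group actions, applied to $h\triangleright g=g\diamond h^{-1}$. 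Your key verification is sound: this is a left action, it is free, properness follows from closedness of $\mathbb{H}=H_0\ltimes H_1$ in $G_0\ltimes G_1$, and the groupoid-homomorphism property amounts to $(g*g')\diamond (h*h')^{-1}=(g\diamond h^{-1})*(g'\diamond h'^{-1})$, which is exactly the interchange law together with $(h*h')^{-1}=h^{-1}*h'^{-1}$ (group inversion is a groupoid morphism because $\mathbb{G}^{(2)}$ is a subgroup of $\mathbb{G}\times\mathbb{G}$ and $*$ is a group homomorphism). So both proofs ultimately rest on the same algebraic fact, but they deploy it at different places: the paper uses it once at the end, after doing the well-definedness computation explicitly, while you use it up front to certify the $\mathbb{H}$-action and then let the quotient theorem absorb well-definedness and smoothness simultaneously. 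What your route buys is conceptual economy--no coset computation is needed, and the structure maps are forced by the projection being a groupoid morphism once composable pairs are lifted. What it costs is a dependence on the quotient theorem, which the paper states without proof (it would have to be taken from the literature, e.g.\ the Garmendia--Zambon framework), whereas the paper's argument is self-contained; since the quotient statement appears before this proposition in the paper, there is at least no circularity. One step you pass over quickly is that the descended left-translation action is again a Lie 2-group action: to make "$p$ intertwines it" into a proof you need the same lifting-of-composable-pairs argument applied to the surjective submersive morphism $\id\times p$, but that is exactly the manipulation you already carried out, so the gap is only expository.
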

\begin{proof}
To see $\mathbb{G}/\mathbb{H}$ is a Lie groupoid, we only verify that the multiplication is well-defined. For any composable pair $[g_0,g_1],[l_0,l_1]\in \mathbb{G}/\mathbb{H}$, the condition  $t([g_0,g_1])=s([l_0,l_1])$ implies that
$[g_0\Phi(g_1)]=[l_0]$. So there exists $h_0\in H_0$ such that $g_0\Phi(g_1)=l_0h_0$. We can then write 
\[[l_0,l_1]=[l_0h_0,h_0^{-1}\triangleright l_1]=[g_0\Phi(g_1), h_0^{-1}\triangleright l_1].\]
This shows that to determine $*$ on $\mathbb{G}/\mathbb{H}$, it is enough to define multiplications of elements of $[g_0,g_1]$ and  $[g_0\Phi(g_1),g'_1]$ .
Then
for $(h_0,h_1), (h'_0,h'_1)\in H_0\ltimes H_1$, we have 
\begin{eqnarray*}
&&[(g_0,g_1) \diamond(h_0,h_1)]*[(g_0\Phi(g_1),g'_1)\diamond (h'_0,h'_1)]\\ &=&
[g_0h_0,(h_0^{-1}\triangleright g_1)h_1]*[g_0\Phi(g_1)h'_0,(h'^{-1}_0\triangleright g'_1)h'_1]\\ &=&[g_0h_0,(h_0^{-1}\triangleright g_1)h_1]*[g_0\Phi(g_1)h_0\Phi(h_1),(\Phi(h_1)^{-1}h^{-1}_0h'_0)\triangleright ((h'^{-1}_0\triangleright g'_1)h'_1)]
\\&=&[g_0h_0,(h_0^{-1}\triangleright g_1)h_1h_1^{-1}(h_0^{-1}\triangleright g'_1)((h_0^{-1}h'_0)\triangleright h'_1) h_1]\\ &=&[(g_0,g_1g'_1)\diamond \big(h_0,((h_0^{-1}h'_0)\triangleright h'_1) h_1\big)]\\&=&[g_0,g_1g'_1].
\end{eqnarray*}
This implies that the multiplication \eqref{multquo} does not depend on the choices of representatives in each equivalence class, so it is well-defined. It is direct to show that $\mathbb{G}/\mathbb{H}$ is a homogeneous $\mathbb{G}$-space, among which \eqref{goidmor} holds for the reason that the groupoid multiplication on $\mathbb{G}$ is a group homomorphism. 
\end{proof}

Given a Lie 2-group $\mathbb{G}$ with a closed Lie 2-subgroup $\mathbb{H}$, there is  a  natural Lie group  action of  $H_0$ on $G_1/H_1$ by 
\[H_0\times (G_1/H_1)\to G_1/H_1,\qquad h_0\triangleright [g_1]=[h_0\triangleright g_1].\]
We thus obtain an associated bundle 
\[G_0\times_{H_0} (G_1/H_1)\to G_0/H_0,\qquad [g_0,[g_1]]\mapsto [g_0]\]
of the principal $H_0$-bundle $G_0\to G_0/H_0$ with fiber $G_1/H_1$.
\begin{Lem}
The above associated bundle  is a Lie groupoid
$G_0\times_{H_0} (G_1/H_1)\rightrightarrows G_0/H_0$, where the source, target maps are 
\[s([g_0,[g_1]])=[g_0],\qquad t([g_0,[g_1]])=[g_0\Phi(g_1)],\]
and the multiplication is given by
\[[g_0,[g_1]]* [g_0\Phi(g_1),[g'_1]]=[g_0,[g_1g'_1]].\]
\end{Lem}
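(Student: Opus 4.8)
The plan is to show that the map sending $[g_0,g_1]\mapsto [g_0,[g_1]]$ identifies $\mathbb{G}/\mathbb{H}$ with the associated bundle $G_0\times_{H_0}(G_1/H_1)$ as Lie groupoids, and then transport the groupoid structure of Proposition \ref{algstr} across this identification. First I would verify that this map is well defined and bijective. An element of $\mathbb{G}/\mathbb{H}=G_0\ltimes G_1/H_0\ltimes H_1$ is an equivalence class under right $\diamond$-multiplication by $(h_0,h_1)\in H_0\ltimes H_1$; using the group law \eqref{gpstr}, $(g_0,g_1)\diamond(h_0,h_1)=(g_0h_0,(h_0^{-1}\triangleright g_1)h_1)$, so two pairs are identified exactly when their $G_0$-components differ by right $H_0$-translation and the $G_1$-components match after the corresponding adjustment. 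This is precisely the equivalence relation defining $G_0\times_{H_0}(G_1/H_1)$, where $[g_0,[g_1]]=[g_0h_0,[h_0^{-1}\triangleright g_1]]$. The key compatibility is that right multiplication by $h_1\in H_1$ in the $G_1$-slot passes to the quotient $G_1/H_1$, matching the inner $H_1$-action, while the $H_0$-action on the fiber $G_1/H_1$ is exactly $h_0\triangleright[g_1]=[h_0\triangleright g_1]$.

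Once the bijection is established, the source and target formulas follow immediately by comparison with Proposition \ref{algstr}: both are $[g_0]$ and $[g_0\Phi(g_1)]$ respectively, and I would only need to remark that these descend to $G_1/H_1$ in the first slot, which holds because $\Phi(g_1h_1)=\Phi(g_1)\Phi(h_1)$ with $\Phi(h_1)\in H_0$, so $[g_0\Phi(g_1)]$ is independent of the $H_1$-representative. The multiplication formula $[g_0,[g_1]]*[g_0\Phi(g_1),[g'_1]]=[g_0,[g_1g'_1]]$ is then inherited verbatim from \eqref{multquo}, and its well-definedness on the nose is already guaranteed by Proposition \ref{algstr}; the only additional check is that the product $[g_1g'_1]$ is insensitive to the choice of $G_1/H_1$-representatives of $[g_1]$ and $[g'_1]$ once the composability constraint is imposed, which reduces to the same crossed-module computation performed in the proof of Proposition \ref{algstr}.

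I would present this as: (1) construct the identification $\mathbb{G}/\mathbb{H}\cong G_0\times_{H_0}(G_1/H_1)$ as smooth manifolds over $G_0/H_0$, checking it is a diffeomorphism of fiber bundles; (2) observe that the source, target, and multiplication of Proposition \ref{algstr} are constant on the extra $H_1$-direction and hence descend to the stated formulas; (3) conclude the groupoid axioms for free, since they are transported isomorphically from the already-established Lie groupoid $\mathbb{G}/\mathbb{H}$.

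The main obstacle is the smoothness and well-definedness of the identification in step (1), specifically confirming that collapsing the $G_1$-slot to $G_1/H_1$ is compatible with the $H_0$-action used to form the associated bundle. The subtle point is that the original quotient mixes the $H_0$ and $H_1$ actions through the twisted group law \eqref{gpstr}, whereas the associated bundle separates them into an $H_0$-principal action on $G_0$ and a fiberwise $H_1$-quotient; disentangling these requires tracking how $(h_0^{-1}\triangleright g_1)h_1$ decomposes, and verifying that the residual $H_0$-action on the fiber is by the automorphism $g_1\mapsto h_0\triangleright g_1$ as claimed. Everything downstream is routine once this matching is pinned down.
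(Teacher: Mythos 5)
Your proposal is correct, but it takes a genuinely different route from the paper. The paper proves this Lemma directly and self-containedly: it checks that $s$ and $t$ are constant on representatives $[g_0h_0,[(h_0^{-1}\triangleright g_1)h_1]]$, argues that composability forces the second factor into the normal form $[g_0\Phi(g_1),[g'_1]]$ so the multiplication formula suffices, and leaves the groupoid axioms as routine; only afterwards, in Proposition \ref{geostr}(ii), does it construct the map $\theta:[g_0,g_1]\mapsto[g_0,[g_1]]$ and verify it is an isomorphism. You invert this order: you first identify $\mathbb{G}/\mathbb{H}$ with $G_0\times_{H_0}(G_1/H_1)$ as quotients of $G_0\times G_1$ and then transport the Lie groupoid structure already established in Proposition \ref{algstr}. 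Your identification is sound and your flagged ``main obstacle'' is exactly the right point: since $\{e\}\times H_1$ is normal in $H_0\ltimes H_1$ and $(g_0,g_1)\diamond(h_0,h_1)=\big((g_0,g_1)\diamond(h_0,e)\big)\diamond(e,h_1)$, the one-step quotient by $H_0\ltimes H_1$ coincides with the two-stage quotient (first by right $H_1$-multiplication in the fiber, then by the residual free and proper $H_0$-action $(g_0,[g_1])\cdot h_0=(g_0h_0,[h_0^{-1}\triangleright g_1])$), using that $h_0\triangleright H_1\subset H_1$ because $\mathbb{H}$ is a $2$-subgroup; uniqueness of the quotient smooth structure then makes the identity map a diffeomorphism, and well-definedness of $*$ transfers from Proposition \ref{algstr} rather than being rechecked. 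What each approach buys: the paper's direct verification keeps the Lemma independent of Proposition \ref{algstr}, so the associated bundle is a Lie groupoid in its own right; your transport-of-structure argument is globally more economical, since it delivers the groupoid axioms and smoothness for free and makes the isomorphism of Proposition \ref{geostr}(ii) essentially automatic (only the $\mathbb{G}$-equivariance statement there would remain to check). There is no circularity in your route, as Proposition \ref{algstr} precedes this Lemma and does not use it.
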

\begin{proof}
We first show the source and target maps are well-defined. Indeed, taking  $[g_0,[g_1]]=[g_0h_0,[(h_0^{-1}\triangleright g_1) h_1]]$ for any $(h_0,h_1)\in H_0\times H_1$, we have 
$s([g_0h_0,[(h_0^{-1}\triangleright g_1) h_1]])=[g_0h_0]=[g_0]$ and 
\[t([g_0h_0,[(h_0^{-1}\triangleright g_1) h_1]])=[g_0h_0h_0^{-1}\Phi(g_1)h_0\Phi(h_1)]=[g_0\Phi(g_1)h_0\Phi(h_1)]=[g_0\Phi(g_1)].\]
Then we show the multiplication is well-defined. For any $[g_0,[g_1]],[l_0,[l_1]]\in G_0\times_{H_0} (G_1/H_1)$ such that $t([g_0,[g_1]])=s([l_0,[l_1]])$, there exists some $h_0\in H_0$ such that 
$g_0\Phi(g_1)=l_0h_0$. So we have $[l_0,[l_1]]=[g_0\Phi(g_1),[h_0^{-1}\triangleright l_1]]$. This justifies that it is reasonable to only define the multiplication of elements $[g_0,[g_1]]$ and $[g_0\Phi(g_1),[g'_1]]$. It is straightforward to 
show the structures maps $s,t,*$ define a Lie groupoid structure.
\end{proof}
\begin{Pro}\label{geostr}
Let $\mathbb{G}$ be a Lie 2-group and $\mathbb{H}$ a closed Lie 2-subgroup. Then \begin{itemize}
\item[\rm (i)]  the Lie groupoid $G_0\times_{H_0} (G_1/H_1)\rightrightarrows G_0/H_0$ is a homogeneous $\mathbb{G}$-space with the $\mathbb{G}$-action  \[(l_0,l_1)\triangleright [g_0,[g_1]]=[l_0g_0,[(g_0^{-1}\triangleright l_1)g_1]];\] 
\item[\rm (ii)] we have an isomorphism of 
 homogeneous $\mathbb{G}$-spaces 
 \[\theta: G_0\ltimes G_1/H_0\ltimes H_1\to G_0\times_{H_0} (G_1/H_1),\qquad [g_0,g_1]\mapsto [g_0,[g_1]].\]\end{itemize}
\end{Pro}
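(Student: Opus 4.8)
The plan is to prove both parts by a direct verification, with the isomorphism $\theta$ in part (ii) serving as the conceptual backbone that transports the already-established $\mathbb{G}$-space structure of $\mathbb{G}/\mathbb{H}$ (from Proposition \ref{algstr}) onto the associated bundle. Rather than checking the $\mathbb{G}$-action axioms for $G_0\times_{H_0}(G_1/H_1)$ from scratch, I would first establish that the map
\[
\theta\colon [g_0,g_1]\mapsto [g_0,[g_1]]
\]
is a well-defined diffeomorphism of Lie groupoids, and then \emph{deduce} part (i) by pushing the $\mathbb{G}$-action forward along $\theta^{-1}$. This way most of the structural work is done once, and the two parts reinforce each other.

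First I would check that $\theta$ is well-defined and bijective at the level of sets. A representative change $[g_0,g_1]=[g_0h_0,(h_0^{-1}\triangleright g_1)h_1]$ for $(h_0,h_1)\in H_0\ltimes H_1$ must be sent to the same class in $G_0\times_{H_0}(G_1/H_1)$; here one uses that $[g_0h_0,[(h_0^{-1}\triangleright g_1)h_1]]=[g_0h_0,[h_0^{-1}\triangleright g_1]]=[g_0,[g_1]]$, the first equality because $[(h_0^{-1}\triangleright g_1)h_1]=[h_0^{-1}\triangleright g_1]$ in $G_1/H_1$ (right $H_1$-multiplication), and the second because $[g_0h_0,[h_0^{-1}\triangleright g_1]]$ and $[g_0,[g_1]]$ agree under the $H_0$-diagonal relation defining the associated bundle. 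Injectivity and surjectivity are then routine, and smoothness of $\theta$ and $\theta^{-1}$ follows since both quotients are realized as quotients of $G_0\ltimes G_1$ by the free proper actions underlying the constructions.

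Next I would verify that $\theta$ intertwines the groupoid structures. Comparing the source/target formulas $s([g_0,g_1])=[g_0]=s([g_0,[g_1]])$ and $t([g_0,g_1])=[g_0\Phi(g_1)]=t([g_0,[g_1]])$ is immediate from the two lemmas, and the multiplication is matched by
\[
\theta\big([g_0,g_1]*[g_0\Phi(g_1),g'_1]\big)=\theta[g_0,g_1g'_1]=[g_0,[g_1g'_1]]=[g_0,[g_1]]*[g_0\Phi(g_1),[g'_1]].
\]
So $\theta$ is a groupoid isomorphism. For part (i), I would \emph{define} the $\mathbb{G}$-action on the associated bundle by transport, $(l_0,l_1)\triangleright\theta[g_0,g_1]:=\theta\big((l_0,l_1)\triangleright[g_0,g_1]\big)=\theta[(l_0,l_1)\diamond(g_0,g_1)]$, and then compute the right-hand side explicitly: using the group law \eqref{gpstr}, $(l_0,l_1)\diamond(g_0,g_1)=(l_0g_0,(g_0^{-1}\triangleright l_1)g_1)$, whence $\theta$ of this class is exactly $[l_0g_0,[(g_0^{-1}\triangleright l_1)g_1]]$, matching the stated formula. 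Since $\theta$ is a groupoid isomorphism and the $\mathbb{G}$-action on $\mathbb{G}/\mathbb{H}$ is a Lie 2-group action by Proposition \ref{algstr}, the transported action automatically satisfies all the axioms (in particular \eqref{goidmor}), so $G_0\times_{H_0}(G_1/H_1)$ is a homogeneous $\mathbb{G}$-space and $\theta$ is an isomorphism of homogeneous $\mathbb{G}$-spaces.

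I expect the main obstacle to be the \textbf{well-definedness of $\theta$}, specifically disentangling the two different quotient relations: the class $[g_0,g_1]$ lives in the quotient by the \emph{diagonal group action} of $H_0\ltimes H_1$ via $\diamond$, whereas $[g_0,[g_1]]$ lives in an associated bundle whose equivalence combines an $H_0$-shift on $G_0$ with an $H_0$-twisted action on the fiber $G_1/H_1$, the latter itself already a quotient by right $H_1$-multiplication. Checking that the twisting $h_0^{-1}\triangleright g_1$ and the right $H_1$-factor $h_1$ land correctly—i.e.\ that the compatibility condition $\Phi(g_0\triangleright g_1)=g_0\Phi(g_1)g_0^{-1}$ of the crossed module makes the two relations coincide on the nose—is the one place where the crossed-module axioms are genuinely used, and it deserves the most care. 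Once $\theta$ is shown well-defined, everything else is formal transport of structure.
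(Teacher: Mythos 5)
Your proposal is correct and takes essentially the same route as the paper: the identical well-definedness computation for $\theta$ (reducing the $H_0\ltimes H_1$-diagonal relation to the associated-bundle relation via $[g_0h_0,[(h_0^{-1}\triangleright g_1)h_1]]=[g_0h_0,h_0^{-1}\triangleright[g_1]]=[g_0,[g_1]]$), the same groupoid-homomorphism and equivariance checks, and an explicit inverse. The only difference is organizational — the paper checks (i) directly and then proves (ii), while you derive (i) from (ii) by transport of structure, a computation the paper's equivariance step already contains.
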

\begin{proof}
(i) is direct to check.
For (ii), we first show that $\theta$ is well-defined. Actually,
\begin{eqnarray*}
\theta\big( [(g_0,g_1)\diamond(h_0,h_1)]\big)&=&\theta\big([g_0h_0,(h_0^{-1}\triangleright g_1) h_1]\big)=
[g_0h_0,[(h_0^{-1}\triangleright g_1) h_1]]\\ &=&[g_0h_0,h_0^{-1}\triangleright [g_1]]=[g_0,[g_1]]= \theta\big([g_0,g_1]\big).\end{eqnarray*}
 Then following from the calculation
 \begin{eqnarray*}
\theta([g_0,g_1]*[g_0\Phi(g_1),g'_1])&=&\theta([g_0,g_1g'_1])=[g_0,[g_1g'_1]]\\ &=&[g_0,[g_1]]* [g_0\Phi(g_1),[g'_1]]=\theta([g_0,g_1])*\theta([g_0\Phi(g_1),g'_1]),
\end{eqnarray*}
we see that $\theta$ is a Lie groupoid homomorphism.
Next, we check that $\theta$ is equivariant with the two $\mathbb{G}$-actions. Indeed, we have
\begin{eqnarray*}
\theta\big((l_0,l_1)\triangleright [g_0,g_1]\big)=\theta([l_0g_0,(g_0^{-1}\triangleright l_1) g_1])=[l_0g_0,[(g_0^{-1}\triangleright l_1) g_1]]=(l_0,l_1)\triangleright\theta([g_0,g_1]).
\end{eqnarray*}
At last, define a map 
\[\tau: G_0\times_{H_0} (G_1/H_1)\to G_0\ltimes G_1/H_0\ltimes H_1,\qquad [g_0,[g_1]]\mapsto [g_0,g_1].\]
It is straightforward to check that $\tau$ is well-defined and  it is the inverse of $\theta$. 
\end{proof}

The Lie groupoid $\mathbb{G}: G_0\ltimes G_{-1}\rightrightarrows G_0$ is an action groupoid relative to the right action of $G_1$ on $G_0$: $g_0\triangleleft g_1=g_0\Phi(g_1)$. As its quotient,  the groupoid $G_0\times _{H_0} (G_1/H_1)\cong \mathbb{G}/\mathbb{H}$ over $G_0/H_0$ is in general not an action groupoid any more. We give an example when it is still an action Lie groupoid.  Recall from \cite{N} that the subcrossed module $\mathbb{H}$ of a crossed module $\mathbb{G}$ is {\bf normal} if (1) $H_0\subset G_0$ is a normal subgroup; (2) $H_1$ is a $G_0$-submodule; (3) $(h_0\triangleright g_{1})
g_1^{-1}\in H_1$ for all $h_0\in H_0, g_1\in G_1$. If $\mathbb{H}\subset \mathbb{G}$ is a normal subcrossed module, then the quotient  $\mathbb{G}/\mathbb{H}: G_1/H_1\xrightarrow{\Phi} G_0/H_0$ inherits a crossed module structure such that the projection $\mathbb{G}\to \mathbb{G}/\mathbb{H}$ is a crossed module homomorphism.
\begin{Ex} \label{trivialbundle}
Let $\mathbb{G}$ be Lie 2-group and $\mathbb{H}$ a closed normal  2-subgroup (normal subcrossed module). Then 
\begin{itemize}
\item[\rm (i)] the quotient Lie groupoid 
$\mathbb{G}/\mathbb{H}$ is naturally isomorphic to the action Lie groupoid (Lie 2-group) $(G_0/H_0)\ltimes (G_1/H_1)\rightrightarrows G_0/H_0$.
\item[\rm (ii)] With the action $\triangleright: (G_0\ltimes G_1) \times (G_0/H_0\times G_1/H_1)\to G_0/H_0\times G_1/H_1$:
\[(g_0,g_1)\triangleright ([g'_0],[g'_1])=([g_0g'_0],[({g'_0}^{-1} \triangleright g_1)g'_1]),\]
the Lie groupoid $(G_0/H_0)\ltimes (G_1/H_1)\rightrightarrows G_0/H_0$ is a homogeneous $\mathbb{G}$-space such that the isomorphism in (i) is an isomomorphism of homogeneous $\mathbb{G}$-spaces.
\end{itemize}
\end{Ex}

\begin{Rm} 
As for the quotient Lie groupoid $G_0\times _{H_0} (G_1/H_1)\rightrightarrows G_0/H_0$,  there are two similar constructions:
\begin{itemize}
\item[\rm (1)] The first one is the quotient of an action Lie groupoid by Lu  \cite{Lu} and Blohmann-Weinstein \cite{BW}. A quotient of an action Lie groupoid by a subgroup is of the form $M\times_H G/H \rightrightarrows M/H$, where $M$ is a manifold with a Lie group $G$-action and $H\subset G$ is a subgroup.
Taking $G_0=G_1$ and $H_0=H_1$, our quotient Lie groupoid $G_0\times _{H_0} (G_0/H_0)\rightrightarrows G_0/H_0$ is an example of this case;
\item[\rm (2)] The second one is the quotient  of a PBG-Lie groupoid by Mackenzie \cite{Mackenzie}.  A  PBG-Lie groupoid is a locally trivial Lie groupoid $\Gamma\rightrightarrows P(M,G)$ over a principal bundle $P(M,G)$ together with a right action of $G$ on $\Gamma$ by groupoid automorphisms over the principal action $P\times G\to P$. Then the quotient $\Gamma/G\rightrightarrows M$ is a Lie groupoid. 
If $H_1$ is trivial, our quotient Lie groupoid $G_0\times _{H_0} G_1\rightrightarrows G_0/H_0$ is an example of Mackenzie's case.\end{itemize}

\end{Rm}
The Lie algebroid  $G_0\ltimes_{H_0} (\g_1/\h_1)\to G_0/H_0$ of the quotient Lie groupoid $ G_0\times _{H_0} (G_1/H_1)\rightrightarrows G_0/H_0$ can also be written down following from the construction in  \cite{Lu}. We omit the details.

It is from Proposition \ref{algstr} that the quotient of a Lie 2-group $\mathbb{G}$ by a closed  Lie 2-subgroup is  a homogeneous $\mathbb{G}$-space. We shall see that
a homogeneous $\mathbb{G}$-space $P\rightrightarrows P_0$ is always of this form. 
Given an action of a Lie 2-group $\mathbb{G}$  on a Lie groupoid $P\rightrightarrows P_0$,  for $x\in P_0$, denote by $\mathcal{O}^\mathbb{G}_x\subset P$ and $\mathcal{O}^{G_0}_x\subset P_0$ the $\mathbb{G}$-orbit and $G_0$-orbit passing through $x$, respectively. Let  $G_0^x\subset G_0$ be the isotropy subgroup of $x$ relative to the $G_0$-action on $P_0$ and let  $G_1^x\subset G_1$ and $\mathbb{G}^x\subset \mathbb{G}$ be the isotropy subgroups of $x$ with respect to the $G_1$ and $\mathbb{G}$-actions on $P$, respectively.

\begin{Thm} With the above notations,
\begin{itemize}
\item[\rm (i)] $\mathcal{O}^\mathbb{G}_x\rightrightarrows \mathcal{O}^{G_0}_x$ is a  Lie subgroupoid
of $P$.
With the induced $\mathbb{G}$-action,  it is a  homogeneous $\mathbb{G}$-space. Denote it by $\mathcal{O}_x^\mathbb{G}$.\item [\rm (ii)] 
$G_1^x\xrightarrow{\Phi} G_0^x$ is a subcrossed module of $G_1\xrightarrow{\Phi} G_0$ and $\mathbb{G}^x=G_0^x\ltimes G_1^x$. And the two homogeneous $\mathbb{G}$-spaces $\mathbb{G}/\mathbb{G}^x$ and $\mathcal{O}_x^\mathbb{G}$ are isomorphic.
\item[\rm (iii)] If $P$ is further a homogeneous $\mathbb{G}$-space, we have an isomorphism $P\cong  \mathbb{G}/\mathbb{G}^x$ of homogeneous $\mathbb{G}$-spaces.
 \end{itemize}
\end{Thm}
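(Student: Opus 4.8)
The plan is to prove the group-theoretic content of (ii) first, since the stabilizer computation drives everything, and then to realize the orbit as a homogeneous space through the orbit map. Throughout I would identify a point $x\in P_0$ with the unit arrow $\iota(x)\in P$ and repeatedly use the decomposition $(g_0,g_1)=(g_0,e)\diamond(e,g_1)$ in $\mathbb{G}=G_0\ltimes G_1$ (read off from \eqref{gpstr}), together with the facts, built into the notion of a $\mathbb{G}$-action, that $\triangleright$ sends units to units and inverses to inverses (it is a groupoid homomorphism) and restricts on $P_0$ to the $G_0$-action.

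First I would compute $\mathbb{G}^x$. Since $(g_0,g_1)\triangleright\iota(x)=(g_0,e)\triangleright\big((e,g_1)\triangleright\iota(x)\big)$ and $(e,g_1)\triangleright\iota(x)$ is an arrow from $x$ to $\Phi(g_1)\triangleright x$ (read off $s_\mathbb{G}(e,g_1)=e$ and $t_\mathbb{G}(e,g_1)=\Phi(g_1)$ and use equivariance of $s,t$), reading the source shows that $(g_0,g_1)\in\mathbb{G}^x$ forces $g_0\triangleright x=x$, i.e. $g_0\in G_0^x$; applying $(g_0,e)^{-1}=(g_0^{-1},e)$ then reduces the fixing condition to $(e,g_1)\triangleright\iota(x)=\iota(x)$, i.e. $g_1\in G_1^x$. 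Hence $\mathbb{G}^x=G_0^x\times G_1^x$ as a set. That this set is a subgroup of the semidirect product $G_0\ltimes G_1$ is equivalent to $G_0^x,G_1^x$ being subgroups (clear, since stabilizers) together with stability of $G_1^x$ under the crossed-module action of $G_0^x$; the latter I would get from $(g_0,e)\diamond(e,g_1)\diamond(g_0,e)^{-1}=(e,g_0\triangleright g_1)$ and the observation that each factor fixes $\iota(x)$ when $g_0\in G_0^x$ and $g_1\in G_1^x$. Reading instead the target of $(e,g_1)\triangleright\iota(x)$ gives $\Phi(g_1)\triangleright x=x$ for $g_1\in G_1^x$, so $\Phi(G_1^x)\subset G_0^x$; thus $G_1^x\xrightarrow{\Phi}G_0^x$ is a subcrossed module and $\mathbb{G}^x=G_0^x\ltimes G_1^x$. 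Stabilizers are closed, so $\mathbb{G}^x$ is a closed $2$-subgroup and $\mathbb{G}/\mathbb{G}^x$ is the homogeneous $\mathbb{G}$-space of Proposition \ref{algstr}.

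Next I would introduce the orbit map $\Theta:\mathbb{G}/\mathbb{G}^x\to\mathcal{O}^\mathbb{G}_x$, $[g]\mapsto g\triangleright\iota(x)$, which is well defined, $\mathbb{G}$-equivariant and bijective, since $g\triangleright\iota(x)=g'\triangleright\iota(x)$ holds iff $g^{-1}\diamond g'\in\mathbb{G}^x$. To see $\Theta$ is a groupoid morphism I would use the structure maps of Proposition \ref{algstr}: $s([g])=[s_\mathbb{G}(g)]$ and $t([g])=[t_\mathbb{G}(g)]$ correspond under $[g_0]\mapsto g_0\triangleright x$ to $s(g\triangleright\iota x)=s_\mathbb{G}(g)\triangleright x$ and $t(g\triangleright\iota x)=t_\mathbb{G}(g)\triangleright x$, while multiplicativity is exactly \eqref{goidmor} applied to $\iota x*\iota x=\iota x$, giving $\Theta([g]*[g'])=(g*_\mathbb{G}g')\triangleright\iota x=(g\triangleright\iota x)*(g'\triangleright\iota x)$. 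For the smooth structure I would invoke the orbit theorem for the Lie group $\mathbb{G}$ acting on $P$: as $\mathbb{G}^x$ is closed, $\Theta$ is an injective immersion and $\mathcal{O}^\mathbb{G}_x$ inherits an immersed-submanifold structure, making $\mathcal{O}^\mathbb{G}_x\rightrightarrows\mathcal{O}^{G_0}_x$ a Lie subgroupoid of $P$; closure under $s,t$ (landing in $\mathcal{O}^{G_0}_x$), units, inverses (groupoid homomorphism) and multiplication (re-choosing a representative via $\iota(G_0^x)\subset\mathbb{G}^x$ so the arrows become composable in $\mathbb{G}$) I would check directly. This settles (i) and the isomorphism of (ii) at once, with $\Theta$ as the required isomorphism of homogeneous $\mathbb{G}$-spaces.

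Finally, (iii) is immediate: transitivity of the $\mathbb{G}$-action gives $\mathcal{O}^\mathbb{G}_x=P$ for any chosen $x$, so $\Theta$ becomes a bijective equivariant groupoid morphism $\mathbb{G}/\mathbb{G}^x\to P$, which by the homogeneous-space theorem for a transitive smooth Lie group action is a diffeomorphism, hence an isomorphism of homogeneous $\mathbb{G}$-spaces. I expect the main obstacle to be not the algebra but the differential-geometric bookkeeping: upgrading the set-theoretic bijection $\Theta$ to a diffeomorphism and certifying that the orbit is a genuine (immersed) Lie subgroupoid whose source and target are submersions, which is precisely where closedness of $\mathbb{G}^x$ and the orbit theorem are indispensable; a secondary care point is keeping the two meanings of $\triangleright$ — the crossed-module action of $G_0$ on $G_1$ versus the $\mathbb{G}$-action on $P$ — rigorously separated in the stability argument for $G_1^x$.
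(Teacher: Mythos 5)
Your proposal is correct and takes essentially the same route as the paper's proof: you read off $\mathbb{G}^x=G_0^x\ltimes G_1^x$ and the subcrossed module $G_1^x\xrightarrow{\Phi}G_0^x$ from source/target equivariance, use the same adjustment by $k\in G_0^x$ to make representatives composable, realize the isomorphism via the orbit map (the inverse of the paper's $F\colon g\triangleright x\mapsto[g]$), and deduce (iii) from transitivity. The only differences are cosmetic: you treat (ii) before (i), and you make explicit the smooth-structure bookkeeping (closedness of $\mathbb{G}^x$ and the orbit theorem) that the paper leaves implicit.
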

\begin{proof} For (i), for $g\triangleright x, l\triangleright x\in \mathcal{O}^\mathbb{G}_x$ with $t(g)\triangleright x=s(l)\triangleright x$, there exists $k\in G_0^x$ such that $t(g)=s(l)k=s(l\diamond k)$. Then
 by \eqref{goidmor}, we have 
 \[(g\triangleright x)*(l\triangleright x):=(g*_\mathbb{G} (l\diamond k))\triangleright x\in \mathcal{O}^\mathbb{G}_x,\]
which implies that $\mathcal{O}^\mathbb{G}_x$ is a Lie subgroupoid of $P$.

For (ii), first we show that $G_1^x\xrightarrow{\Phi} G_0^x$ is a subcrossed module of $\mathbb{G}$. For $g_1\in G_1^x$, since $\Phi(g_1)\triangleright x=t(g_1)\triangleright x=t_P(g_1\triangleright x)=t_P(x)=x$, we have $\Phi(g_1)\in G_0^x$. Also, for $
g_0\in G_0^x$ and $g_1\in G_1^x$, we have
\[(g_0\triangleright g_1)\triangleright x=(g_0\diamond g_1\diamond g_0^{-1})\triangleright x=x,\]
where $\diamond$ is the group multiplication in $\mathbb{G}$. Hence $g_0\triangleright g_1\in G_1^x$. Thus $G_1^x\xrightarrow{\Phi} G_0^x$ is a subcrossed module and gives a Lie 2-subgroup $G_0^x\ltimes G_1^x\rightrightarrows G_0^x$. 
To see $\mathbb{G}^x=G^x_0\ltimes G_1^x$, we first notice that $G^x_0\ltimes G_1^x\subset \mathbb{G}^x$. Then for an arbitrary $(g_0,g_1)\in \mathbb{G}^x$, we have $(g_0,g_1)\triangleright x=(g_0\diamond g_1)\triangleright x=x$. Applying the source map on both sides, we obtain
$g_0\triangleright x=x$ and further $g_1\triangleright x=g_0^{-1}\triangleright x=x$.  Hence $g_0\in G_0^x, g_1\in G_1^x$. Thus we proved $\mathbb{G}^x\subset G_0^x\ltimes G_1^x$. So they are equal as Lie groups. 

Then we show the map
\[F: \mathcal{O}_x^{\mathbb{G}}\to \mathbb{G}/\mathbb{G}^x,\qquad g\triangleright x\mapsto [g]\]
is an isomorphism of homogeneous $\mathbb{G}$-spaces. Indeed, it is obviously well-defined. To see it is a groupoid morphism, we find 
\[F\big((g\triangleright x)*(l\triangleright x)\big)=F\big((g*_\mathbb{G} (l\diamond k))\triangleright x\big)=[(g*_\mathbb{G} (l\diamond k)]=[g]* [l]=F(g\triangleright x)*F(l\triangleright x),\]
where we have used the notations in (i).
Also, $F(l\triangleright (g\triangleright x))=[l\diamond g]=l\triangleright[g]=l\triangleright F(g\triangleright x)$, we see $F$ is an isomorphism of homogeneous spaces with inverse $[g]\mapsto g\triangleright x$.

For (iii), by the transitivity condition, we have $P=\mathcal{O}_x$ for any $x\in P_0$. Then (iii) follows from (ii).
\end{proof}

\begin{Rm}
From this theorem, we see that for a Lie 2-group action, orbits passing through points in the base manifold are always homogeneous $\mathbb{G}$-spaces.
While  the orbit passing through a general point in $\mathbb{P}$ is not necessarily a homogeneous $\mathbb{G}$-space, since the isotropy group is in general not a Lie 2-subgroup. 
\end{Rm}

In this end of this section, we point out that the quotient space $\mathbb{G}/\mathbb{H}$ is not only a homogeneous space of a Lie group, but also a homogeneous space of a Lie groupoid. This fact will be used in the next section.

Let $\mathbb{G}: G_1\xrightarrow{\Phi} G_0$ be a Lie 2-group  and $\mathbb{H}: H_1\xrightarrow{\Phi} H_0$ a Lie 2-subgroup. Then the Lie group $H_0$ acts on $G_1$ and $H_1$, respectively, which induce two associated bundles
\[G_0\times _{H_0} G_1\to G_0/H_0,\qquad G_0 \times_{H_0} H_1\to G_0/H_0.\]
Taking $H_1$  as $\{e\}$ in Propositions \ref{algstr} and \ref{geostr}, we get that the associated bundle $G_0 \times _{H_0} G_1$ is a Lie groupoid 
\[G_0 \times _{H_0} G_1\rightrightarrows G_0/H_0\]
with source, target and multiplication
\[s([g_0,g_1])=[g_0],\qquad t([g_0,g_1])=[g_0\Phi(g_1)],\qquad [g_0,g_1]*[g_0\Phi(g_1),l_1]=[g_0,g_1 l_1].\]
Moreover, it is isomorphic to the Lie groupoid  $(G_0\ltimes G_1)/H_0\rightrightarrows G_0/H_0$.
\begin{Pro}\label{Gamma} With the notations above, we have 
\begin{itemize}
\item[\rm (i)] $G_0 \times_{H_0} H_1\rightrightarrows G_0/H_0$ is a Lie group bundle and a wide normal subgroupoid of $G_0 \times_{H_0} G_1$. 
Thus, the groupoid quotient $(G_0\times _{H_0} G_1)/\!\!/(G_0\times _{H_0} H_1)$ is a  Lie groupoid; 
\item[\rm (ii)] there is a Lie groupoid isomorphism 
\[\theta: (G_0\times _{H_0} G_1)/\!\!/(G_0\times _{H_0} H_1)\to  \mathbb{G}/\mathbb{H},\qquad [\![[g_0,g_1]]\!]\to [g_0,g_1].\]
\end{itemize}
Therefore,  $\mathbb{G}/\mathbb{H}$ is a homogeneous space of the Lie group $\mathbb{G}$ and also a homogeneous space of the Lie groupoid $\mathbb{G}/H_0\rightrightarrows G_0/H_0$.

\end{Pro}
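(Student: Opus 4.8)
The plan is to read part (i) as a bookkeeping exercise with the crossed-module relations, and part (ii) as the single observation that passing from $G_0\times_{H_0}G_1$ to $\mathbb{G}/\mathbb{H}$ is exactly the passage to the quotient by the group bundle $G_0\times_{H_0}H_1$. For (i), I would first note that because $\Phi(H_1)\subseteq H_0$, every element $[g_0,h_1]$ with $h_1\in H_1$ has $t([g_0,h_1])=[g_0\Phi(h_1)]=[g_0]=s([g_0,h_1])$; hence $G_0\times_{H_0}H_1$ sits inside $G_0\times_{H_0}G_1$ as a bundle of isotropy groups over $G_0/H_0$, and it is wide since all units $[g_0,e]$ lie in it. A short computation with \eqref{multquo} and the crossed-module identity $\Phi(h_1)\triangleright g_1=h_1g_1h_1^{-1}$ shows it is closed under the groupoid product and inversion (e.g.\ $[g_0,h_1]\ast[g_0,h_1']=[g_0,h_1'h_1]$ and $[g_0,h_1]^{-1}=[g_0,h_1^{-1}]$), so it is a Lie group bundle and a wide subgroupoid.

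The crux of (i) is normality. For $\gamma=[g_0,g_1]$ and an isotropy element $n=[g_0,h_1]$ at $s(\gamma)$, I would compute $\gamma^{-1}\ast n\ast\gamma$ using $[g_0,g_1]^{-1}=[g_0\Phi(g_1),g_1^{-1}]$ and the reindexing $[g_0,g_1]=[g_0\Phi(h_1),\Phi(h_1)^{-1}\triangleright g_1]$. Collapsing the $G_1$-component via $\Phi(h_1)^{-1}\triangleright g_1=h_1^{-1}g_1h_1$ gives $(g_1^{-1}h_1)(h_1^{-1}g_1h_1)=h_1$, so $\gamma^{-1}\ast n\ast\gamma=[g_0\Phi(g_1),h_1]\in G_0\times_{H_0}H_1$, which is normality. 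Since $\mathbb{H}$ is closed, $G_0\times_{H_0}H_1$ is a closed, wide, normal subgroupoid acting freely and properly by translation, and the quotient $(G_0\times_{H_0}G_1)/\!\!/(G_0\times_{H_0}H_1)$ is a Lie groupoid over $G_0/H_0$ by the standard smoothness of such quotients.

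For (ii), the point is that the two equivalence relations differ exactly by $H_1$. In $G_0\times_{H_0}G_1$ one only quotients $(g_0,g_1)\sim(g_0h_0,h_0^{-1}\triangleright g_1)$, whereas in $\mathbb{G}/\mathbb{H}$ one has in addition $[g_0,g_1]=[g_0,g_1h_1]$ for $h_1\in H_1$. I would observe that right translation by $n=[g_0\Phi(g_1),h_1]$ gives $[g_0,g_1]\ast n=[g_0,g_1h_1]$ by \eqref{multquo}, so the orbits of the normal subgroupoid are precisely the fibers of the projection to $\mathbb{G}/\mathbb{H}$. This makes $\theta$ well defined and bijective, where injectivity uses the $H_0$-invariance of $H_1$ to reabsorb the extra twist: if $(g_0',g_1')=(g_0,g_1)\diamond(h_0,h_1)$ then $[g_0',g_1']=[g_0,g_1(h_0\triangleright h_1)]$ in $G_0\times_{H_0}G_1$ with $h_0\triangleright h_1\in H_1$. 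Since both groupoids carry the same source, target and product formula on representatives, $\theta$ is a groupoid morphism, and being induced by the smooth submersion $G_0\times_{H_0}G_1\to\mathbb{G}/\mathbb{H}$ that is constant exactly on these orbits, it is a diffeomorphism, hence a Lie groupoid isomorphism.

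The final assertion then follows by combining this with the already noted identification $G_0\times_{H_0}G_1\cong(G_0\ltimes G_1)/H_0=\mathbb{G}/H_0$: the groupoid $\mathbb{G}/H_0\rightrightarrows G_0/H_0$ has $G_0\times_{H_0}H_1$ as a wide (normal) subgroupoid, and $\theta$ identifies the resulting groupoid quotient with $\mathbb{G}/\mathbb{H}$, exhibiting $\mathbb{G}/\mathbb{H}$ as a homogeneous space of the Lie groupoid $\mathbb{G}/H_0$ in the sense of \cite{LWX}, in addition to being the Lie-group homogeneous space of Proposition \ref{algstr}. I expect the only genuinely delicate step to be the normality computation in (i); once the subgroupoid orbits are matched with the $H_1$-fibers, everything downstream is formal tracking of representatives together with an appeal to the standard smoothness of quotients by free and proper group-bundle actions.
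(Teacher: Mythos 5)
Your proposal is correct and follows essentially the same route as the paper: the paper also verifies that $G_0\times_{H_0}H_1$ is a wide Lie group bundle, proves normality via the exchange identity $[g_0,h_1]*[g_0,g_1]=[g_0,g_1h_1]=[g_0,g_1]*[g_0\Phi(g_1),h_1]$ (your direct conjugation computation $\gamma^{-1}*n*\gamma=[g_0\Phi(g_1),h_1]$ is the same calculation), and establishes $\theta$ by checking well-definedness on representatives and exhibiting the obvious inverse, which is exactly your identification of the normal-subgroupoid orbits with the $H_1$-fibers. Your side formulas (e.g.\ $[g_0,h_1]\ast[g_0,h_1']=[g_0,h_1'h_1]$, $[g_0,g_1]^{-1}=[g_0\Phi(g_1),g_1^{-1}]$, and the reabsorption $[g_0',g_1']=[g_0,g_1(h_0\triangleright h_1)]$) all check out against the multiplication \eqref{multquo} and the crossed-module identities.
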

\begin{proof}
For (i), it is obvious that $G_0 \times_{H_0} H_1\subset G_0 \times_{H_0} G_1$ is a wide subgroupoid. 
Following from $s([g_0,h_1])=t([g_0,h_1])=[g_0]$,  we have $G_0 \times_{H_0} H_1$ is a Lie group bundle over $G_0/H_0$. 
For $[g_0,h_1]\in G_0 \times_{H_0} H_1$ and $[g_0,g_1]\in G_0 \times_{H_0} G_1$, we have 
\[[g_0,h_1]*[g_0,g_1]=[g_0,h_1]*[g_0\Phi(h_1),\Phi(h_1)^{-1}\triangleright g_1]=[g_0,g_1h_1]=[g_0,g_1]*[g_0\Phi(g_1),h_1],\]
which implies that 
\[[g_0,g_1]^{\dagger}*[g_0,h_1]*[g_0,g_1]=[g_0\Phi(g_1),h_1]\in G_0 \times_{H_0} H_1,\] where $[g_0,g_1]^{\dagger}$ is the groupoid inverse of $[g_0,g_1]$. Thus $G_0 \times_{H_0} H_1\subset G_0 \times_{H_0} G_1$ is a wide normal subgroupoid.
Therefore, 
$(G_0\times _{H_0} G_1)/\!\!/(G_0\times _{H_0} H_1)\rightrightarrows G_0/H_0$ with the multiplication
\[[\![\gamma]\!]*[\![\gamma']\!]:=[\![\gamma*\gamma']\!],\qquad \forall (\gamma,\gamma')\in (G_0\times _{H_0} G_1)^{(2)}\]
is a Lie groupoid. For (ii), as 
\begin{eqnarray*}
\theta\big([\![[g_0h_0,h_0^{-1}\triangleright g_1]*[g_0\Phi(g_1), h_1]]\!]\big)&=&\theta([\![[g_0h_0,(h_0^{-1}\triangleright g_1)h_1]]\!])\\ &=&[g_0h_0,(h_0^{-1}\triangleright g_1)h_1]\\ &=&[(g_0,g_1)\diamond (h_0,h_1)]\\ &=&[g_0,g_1]=\theta([\![[g_0,g_1]]\!]),
\end{eqnarray*}
we see that $\theta$ is well-defined. Next, for $[g_0,g_1],[g_0\Phi(g_1),l_1]\in G_0\times _{H_0} G_1$, we have
\[\theta\big([\![[g_0,g_1]]\!]*[\![[g_0\Phi(g_1),l_1]]\!]\big)=\theta([\![[g_0,g_1l_1]]\!])=[g_0,g_1l_1]=[g_0,g_1]*[g_0\Phi(g_1),l_1],\]
which implies that $\theta$ is a homomorphism. It is moreover an isomorphism with the inverse given by $[g_0,g_1]\mapsto [\![[g_0,g_1]]\!]$.
\end{proof}

\section{Poisson homogeneous space of Poisson 2-groups}

Drinfeld classified Poisson homogeneous spaces of a Poisson Lie group by Dirac structures of its Lie bialgebra \cite{D2}, 
and then Liu-Weinstein-Xu generalized this result to Poisson groupoids  in \cite{LWX}. Here we present the classification theorem for Poisson homogeneous spaces of Poisson 2-groups. Let us first collect some standard definitions and facts regarding Lie bialgebras and Poisson homogeneous spaces.

A {\bf Lie bialgebra} is a pair $(\g,\g^*)$ of Lie algebras such that
\[d_*[x,y]=[d_*(x),y]+[x,d_*(y)],\qquad \forall x,y\in \g,\]
where $d_*:\g\to \wedge^2 \g$ is defined by $\langle d_*(x),\xi\wedge \eta\rangle:=-\langle x,[\xi,\eta]_{\g^*}\rangle$.
The double $\g\oplus \g^*$  of a Lie bialgebra is associated with a natural pairing:
\[\langle x+\xi,y+\eta\rangle_{\bowtie} =\langle x,\eta\rangle+\langle y,\xi\rangle,\qquad \forall x,y\in  \g, \xi,\eta\in \g^*,\]
 and a Lie bracket:
 \begin{eqnarray}\label{doublebr}
[x+\xi,y+\eta]_{\bowtie}=[x,y]_\g+\ad_\xi^* y-\ad_\eta^* x+[\xi,\eta]_{\g^*}+\ad^*_x \eta-\ad_y^* \xi.
\end{eqnarray}
Denote the double Lie algebra by $\g\bowtie \g^*$.
A {\bf Dirac structure} of a Lie bialgebra $(\g,\g^*)$ is a subspace $L\subset \g\oplus \g^*$  which is maximal isotropic with respect to $\langle \cdot,\cdot\rangle_{\bowtie}$ and  closed under the Lie bracket $[\cdot,\cdot]_{\bowtie}$. 
The notion of characteristic pairs was introduced in \cite{L} as a useful tool  to characterize Dirac structures. Indeed,
a maximal isotropic subspace of $\g\bowtie \g^*$ is always of the form
\begin{eqnarray}\label{Lliealg}
L=\{x+\Omega^\sharp \xi+\xi; x\in \h,\xi\in \h^\perp\}=\h\oplus \mathrm{Graph}(\Omega|_{\h^\perp}),
\end{eqnarray}
where $\h\subset \g$ is a subspace,  $\Omega\in \wedge^2 \g$ and $\Omega^\sharp(\xi):=\Omega(\xi,\cdot)$. Note that $\Omega$ is actually determined by $\mathrm{pr}(\Omega)\in \wedge^2 \g/\h$. 
We call the pair $(\h,\Omega)$ a {\bf characteristic pair} of $L$. 
A maximal isotropic subspace $L$ as in \eqref{Lliealg} is a Dirac structure of $(\g,\g^*)$ if and only if 
\begin{itemize}
\item[\rm(1)] $\h$ is a Lie subalgebra of $\g$;
\item[\rm (2)] $d_*\Omega+\frac{1}{2}[\Omega,\Omega]_\g\equiv 0$ (mod $\h$);
\item[\rm (3)] $[\xi,\eta]_{\g^*}+[\xi,\eta]_\Omega\in \h^\perp$ for all $\xi,\eta\in \h^\perp$, where $[\xi,\eta]_\Omega=\ad^*_{\Omega^\sharp \xi}\eta-\ad_{\Omega^\sharp \eta}^* \xi$.\end{itemize}
Here, the second condition means that $d_*\Omega+\frac{1}{2}[\Omega,\Omega]_\g\in \h\wedge \g\wedge \g$.


Let $(G,\pi)$ be a Poisson Lie group. A {\bf Poisson homogeneous $G$-space} is a Poisson manifold  $(M,\pi_M)$ such that $M$ is a homogeneous $G$-space and the action $G\times M\to M$ is a Poisson map. 
It is by Drinfeld that there is a one-one correspondence between regular Dirac structures $L\subset \g\oplus \g^*$ and Poisson homogeneous spaces $G/H$, where $H$ is a connected closed subgroup of $G$ with Lie algebra  $L\cap \g$.
A Dirac structure $L$ is regular if the left translation of $\h=L\cap \g$ defines a simple foliation on $G$.  


Let $(P,\Pi)$ be a Poisson groupoid. A $P$-space is homogeneous if and only if it is isomorphic to the groupoid quotient $P/\!\!/P_1$ for some wide (containing all the identities) subgroupoid $P_1\subset P$. If $P/\!\!/P_1$ is equipped with a Poisson structure such that the groupoid action is a Poisson map,  it is called a {\bf Poisson homogeneous space} of $(P,\Pi)$. 
Liu-Weinstein-Xu \cite{LWX} proved that there is a one-one correspondence between Poisson homogeneous spaces $P/\!\!/P_1$ and regular Dirac structures $L$ of the Lie bialgebroid $(A,A^*)$, where $P_1$ is the $s$-connected closed subgroupoid of $P$ corresponding to the subalgebroid $L\cap A$. 


A Poisson 2-group is both a Poisson Lie group and a Poisson groupoid. So a Poisson homogeneous space of a Poisson 2-group is naturally expected to be a Poisson homogeneous space of the Poisson Lie group and Poisson groupoid simultaneously. However, given a Lie 2-group $\mathbb{G}$ and a closed Lie 2-subgroup $\mathbb{H}$, the subgroupoid $H_0\ltimes H_1\rightrightarrows H_0$ is not a wide subgroupoid of $G_0\ltimes G_1\rightrightarrows G_0$ if $H_0\neq G_0$. 
To adjust to the framework in \cite{LWX}, we consider $\mathbb{G}/\mathbb{H}$ as a groupoid homogeneous space of $\mathbb{G}/H_0$ as in Proposition \ref{Gamma}.

\begin{Lem}\label{quotient}
Let $(\mathbb{G},\pi_{\mathbb{G}})$ be a Poisson 2-group. Suppose  $H_0\subset G_0$ is a closed subgroup. Then $(\mathbb{G}/H_0,\mathrm{pr}_*\pi_{\mathbb{G}})$  is a quotient Poisson groupoid  if and only if $H_0\subset \mathbb{G}$ is a Poisson subgroup.
\end{Lem}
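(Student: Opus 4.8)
<br>

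The plan is to prove both implications by reducing the two conditions to a single compatibility between the Poisson structure $\pi_{\mathbb{G}}$ and the subgroup $H_0$. Recall that $\mathbb{G}/H_0$ carries the quotient groupoid structure of Proposition \ref{Gamma}, so $\mathrm{pr}\colon\mathbb{G}\to\mathbb{G}/H_0$ is a groupoid homomorphism and also a submersion. The key observation is that $(\mathbb{G}/H_0,\mathrm{pr}_*\pi_{\mathbb{G}})$ is a \emph{Poisson groupoid} precisely when two things hold: first, $\mathrm{pr}_*\pi_{\mathbb{G}}$ must be a well-defined bivector field on the quotient, i.e.\ $\pi_{\mathbb{G}}$ must be projectable along $\mathrm{pr}$; and second, the projected bivector must make the graph of groupoid multiplication coisotropic. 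I would isolate projectability as the crux and show it is equivalent to the Poisson-subgroup condition, then argue that coisotropy is automatic.

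\textbf{Projectability.} First I would recall that the fibers of $\mathrm{pr}\colon\mathbb{G}\to\mathbb{G}/H_0$ are the right $H_0$-cosets, so $\mathrm{pr}_*\pi_{\mathbb{G}}$ descends to a bivector on $\mathbb{G}/H_0$ if and only if $\pi_{\mathbb{G}}$ is constant along these cosets in the relevant sense, i.e.\ $\pi_{\mathbb{G}}$ is $H_0$-projectable for the right translation action. Since $\mathbb{G}$ is a Poisson Lie group, the multiplicativity of $\pi_{\mathbb{G}}$ relates the values of $\pi_{\mathbb{G}}$ at $g\diamond h_0$ to those at $g$ and at $h_0$ via the standard formula $\pi_{\mathbb{G}}(g\diamond h_0)=(R_{h_0})_*\pi_{\mathbb{G}}(g)+(L_g)_*\pi_{\mathbb{G}}(h_0)$. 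The plan is to feed this into the projectability requirement: the $(L_g)_*\pi_{\mathbb{G}}(h_0)$ term is the obstruction, and it vanishes under $\mathrm{pr}_*$ exactly when $\pi_{\mathbb{G}}(h_0)$ is tangent to the $H_0$-orbit directions for all $h_0\in H_0$ — which is precisely the statement that $\pi_{\mathbb{G}}(h_0)\in\wedge^2 T_{h_0}H_0$, i.e.\ that $H_0$ is a \emph{Poisson subgroup} of $\mathbb{G}$. This gives both directions of the equivalence at the level of projectability: $H_0$ Poisson $\iff$ $\mathrm{pr}_*\pi_{\mathbb{G}}$ well-defined.

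\textbf{Coisotropy.} Once $\Pi:=\mathrm{pr}_*\pi_{\mathbb{G}}$ is a genuine bivector on $\mathbb{G}/H_0$, I would verify it is Poisson and makes $\mathbb{G}/H_0$ a Poisson groupoid almost for free. Since $\pi_{\mathbb{G}}$ is Poisson and $\mathrm{pr}$ is a surjective submersion intertwining $\pi_{\mathbb{G}}$ with $\Pi$, the map $\mathrm{pr}$ is a Poisson map, so $\Pi$ inherits the Jacobi identity. For the groupoid compatibility, recall that $\mathbb{G}$ being a Poisson groupoid means the graph $\Lambda_{\mathbb{G}}\subset\mathbb{G}\times\mathbb{G}\times\overline{\mathbb{G}}$ of the groupoid multiplication $*$ is coisotropic. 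The quotient groupoid multiplication on $\mathbb{G}/H_0$ has graph $\Lambda_{\mathbb{G}/H_0}=(\mathrm{pr}\times\mathrm{pr}\times\mathrm{pr})(\Lambda_{\mathbb{G}})$, and because the relevant product projection is again a Poisson submersion, the image of a coisotropic submanifold under this Poisson map is coisotropic. Hence $\Lambda_{\mathbb{G}/H_0}$ is coisotropic and $(\mathbb{G}/H_0,\Pi)$ is a Poisson groupoid. The main obstacle is the projectability step: one must carefully track how the multiplicativity formula interacts with the right $H_0$-coset fibration and confirm that the two terms separate so that the obstruction is exactly $\pi_{\mathbb{G}}|_{H_0}$ being $H_0$-tangent; the coisotropy step, by contrast, is a formal consequence of pushing a coisotropic graph forward along a Poisson submersion and requires no new computation.
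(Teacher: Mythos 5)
The ``if'' half of your plan is essentially sound: a Poisson subgroup satisfies $\pi_{\mathbb{G}}(h_0)\in\wedge^2T_{h_0}H_0$, the multiplicativity identity then gives projectability, and pushing the coisotropic multiplication graph forward along the Poisson submersion $\mathrm{pr}\times\mathrm{pr}\times\mathrm{pr}$ is a legitimate way to get the Poisson groupoid axiom on the quotient. The genuine gap is in the converse, at the exact point where you claim the obstruction $\mathrm{pr}_*(L_g)_*\pi_{\mathbb{G}}(h_0)$ vanishes ``exactly when'' $\pi_{\mathbb{G}}(h_0)\in\wedge^2T_{h_0}H_0$. For a bivector $\Lambda$ and a submersion $f$, one has $f_*\Lambda=0$ if and only if $\Lambda\in\ker f_*\wedge T\mathbb{G}$ --- only \emph{one} leg of $\Lambda$ must be vertical, not both. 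Hence projectability of $\pi_{\mathbb{G}}$ along the right $H_0$-coset fibration is equivalent to $\pi_{\mathbb{G}}(h_0)\in T_{h_0}H_0\wedge T_{h_0}\mathbb{G}$ for all $h_0\in H_0$, i.e.\ to $H_0$ being a \emph{coisotropic} subgroup (infinitesimally: $\h_0^\perp\oplus\g_1^*$ is a Lie subalgebra of $\g^*$), whereas the Poisson subgroup condition is the strictly stronger requirement that $\h_0^\perp\oplus\g_1^*$ be an \emph{ideal}. For a general Poisson Lie group these two conditions genuinely differ (coisotropic subgroups that are not Poisson subgroups are plentiful), so your argument --- which nowhere uses the 2-group structure in this step --- would, if valid, prove a false statement; it cannot yield the ``only if'' direction of the lemma.

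What is missing is precisely the content of the paper's proof: the passage from ``subalgebra'' to ``ideal'' using the crossed-module structure of $\g^*=\g_1^*\ltimes\g_0^*$ (with $\g_0^*\xrightarrow{-\phi^*}\g_1^*$). Because the candidate annihilator $\h_0^\perp\oplus\g_1^*$ contains \emph{all} of $\g_1^*$, and the bracket of two elements of $\g_0^*$ factors through $-\phi^*$ and the $\g_1^*$-action, a short computation shows that $\h_0^\perp\oplus\g_1^*\subset\g^*$ is a Lie subalgebra if and only if it is an ideal, if and only if $\h_0^\perp\subset\g_0^*$ is a $\g_1^*$-submodule. This one Lie 2-algebra observation is the real point of the lemma; inserted between your (corrected) projectability criterion --- quotient Poisson structure exists iff $H_0$ is coisotropic iff $\h_0^\perp\oplus\g_1^*$ is a subalgebra, for $H_0$ connected --- and your coisotropy-of-the-graph step, your proposal would become a correct proof. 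Your graph argument itself (images of coisotropic submanifolds under Poisson submersions are coisotropic, and the quotient multiplication graph is the image of the graph in $\mathbb{G}\times\mathbb{G}\times\overline{\mathbb{G}}$) is fine modulo smoothness details, and is in fact more explicit than the paper, which treats that point as immediate.
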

\begin{proof}
We first recall two facts: the quotient space $\mathbb{G}/H_0$ inherits a Poisson structure such that the projection $\mathbb{G}\to \mathbb{G}/H_0$ is a Poisson map if and only if the annihilators $\h_0^\perp\oplus \g_1^*$  of $\h_0$ in $\g$ is a Lie subalgebra of $\g^*$. That is, $\h_0$ is a coideal  in $\g$; The subgroup $H_0\subset \mathbb{G}$ is a Poisson submanifold of $\mathbb{G}$  if and only if $\h_0^\perp\oplus \g_1^*$ is an ideal of $\g^*$.

Considering the Lie 2-algebra structure on $\g^*$, we note that $\h_0^\perp\oplus \g_1^*\subset \g^*$ is a Lie subalgebra if and only if it is an ideal, if and only if 
$\h_0^\perp\subset \g_0^*$ is a $\g_1^*$-submodule. 
So  with the quotient groupoid structure, $\mathbb{G}/H_0$ is a quotient Poisson groupoid.
\end{proof}

\begin{Def}
Let $(\mathbb{G},\Pi_{\mathbb{G}})$ be a Poisson Lie 2-group and  $\mathbb{H}\subset \mathbb{G}$  a closed 2-subgroup  such that $H_0\subset \mathbb{G}$ is a Poisson subgroup. The homogeneous $\mathbb{G}$-space $\mathbb{G}/\mathbb{H}$ is called a {\bf Poisson homogeneous $\mathbb{G}$-space} if there is a Poisson structure $\Pi$ such that  $(\mathbb{G}/\mathbb{H}, \Pi)$  is both a  Poisson homogeneous space of the Poisson Lie group $\mathbb{G}$ and a  Poisson homogeneous space of the Poisson groupoid $\mathbb{G}/H_0$. 
\end{Def}

For a  Poisson 2-group $(\mathbb{G},\pi_{\mathbb{G}})$ with Lie 2-bialgebra $(\g,\g^*)$, we see that $(\g,\g^*)=(\g_0\ltimes \g_1,\g_1^*\ltimes \g_0^*)$ is a Lie bialgebra and $(\g_0,\g_0^*), (\g_1,\g_1^*)$ are both Lie bialgebras; see \cite[Proposition 3.6]{CSX}.  But we emphasize that  the doubles $\g_i\bowtie \g_i^*\subset \g\bowtie \g^*$ for $i=0,1$ are  not Lie subalgebras. For example, for $x\in \g_0$ and $\alpha\in \g^*_0$, the bracket $[x,\alpha]_{\g\bowtie \g^*}$ also has component in $\g_1$, which is determined by $\langle [x,\alpha]_{\g\bowtie \g^*}, \xi\rangle=-\langle x, \xi\triangleright \alpha\rangle$ for $\xi\in \g_1^*$.

A Dirac structure $L$ of the Lie bialgebra $(\g_1,\g_1^*)$ is {\bf $H_0$-invariant} if $H_0\triangleright L=L$, where the action is the natural action of $H_0$ on $\g_1\oplus \g_1^*$.

\begin{Thm}\label{Main}
Let $(\mathbb{G},\Pi_{\mathbb{G}})$ be a Poisson Lie 2-group and  $\mathbb{H}\subset \mathbb{G}$  a closed 2-subgroup  such that $H_0\subset \mathbb{G}$ is a Poisson subgroup. Then there is a one-one correspondence between Poisson homogeneous $\mathbb{G}$-spaces $\mathbb{G}/\mathbb{H}$  and $H_0$-invariant Dirac structures $L$ of the Lie bialgebra $(\g_1,\g_1^*)$ such that $L\cap \g_1=\h_1$. 
\end{Thm}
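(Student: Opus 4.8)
The plan is to run the two classical Drinfeld-type theorems in parallel and show that their conjunction is governed by a single Dirac structure of $(\g_1,\g_1^*)$. On the group side, Drinfeld's theorem \cite{D2} applied to the Poisson Lie group $\mathbb{G}$ and the closed subgroup $\mathbb{H}$ says that a Poisson structure $\Pi$ making $\mathbb{G}/\mathbb{H}$ a Poisson homogeneous space of the \emph{group} $\mathbb{G}$ corresponds to a regular Dirac structure $L_{\mathbb{G}}$ of the Lie bialgebra $(\g,\g^*)=(\g_0\ltimes\g_1,\g_1^*\ltimes\g_0^*)$ with $L_{\mathbb{G}}\cap\g=\h=\h_0\oplus\h_1$; write its characteristic pair as $(\h,\Omega_{\mathbb{G}})$. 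On the groupoid side, by Proposition \ref{Gamma} we have $\mathbb{G}/\mathbb{H}=(\mathbb{G}/H_0)/\!\!/(G_0\times_{H_0}H_1)$, and the Lie bialgebroid of the Poisson groupoid $\mathbb{G}/H_0$ (which exists by Lemma \ref{quotient}) is the associated-bundle bialgebroid $(A,A^*)=(G_0\times_{H_0}\g_1,\,G_0\times_{H_0}\g_1^*)$ built from $(\g_1,\g_1^*)$. Thus by \cite{LWX} a Poisson structure making $\mathbb{G}/\mathbb{H}$ a Poisson homogeneous space of the \emph{groupoid} $\mathbb{G}/H_0$ corresponds to a regular Dirac structure $\mathcal{L}\subset A\oplus A^*$ with $\mathcal{L}\cap A=G_0\times_{H_0}\h_1$.

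The bridge between the two pictures is the base point. I would restrict the groupoid Dirac structure $\mathcal{L}$ to the fiber of $A\oplus A^*$ over the identity coset $[e]\in G_0/H_0$ — a fiber canonically identified with $\g_1\oplus\g_1^*$ — obtaining a Dirac structure $L:=\mathcal{L}_{[e]}$ of $(\g_1,\g_1^*)$ with $L\cap\g_1=\h_1$. The step to carry out here is to show that for a $\Pi$ which is \emph{simultaneously} group- and groupoid-homogeneous the two data cohere: the associated-bundle ($H_0$-equivariant) form $\mathcal{L}=G_0\times_{H_0}L$ holds, $L$ is $H_0$-invariant, and the Drinfeld datum is
\[
L_{\mathbb{G}}=\h_0\oplus L\oplus\h_0^\perp,\qquad \h_0\subset\g_0,\ L\subset\g_1\oplus\g_1^*,\ \h_0^\perp\subset\g_0^*,
\]
equivalently $\Omega_{\mathbb{G}}$ may be chosen in $\wedge^2\g_1\subset\wedge^2\g$. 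Here the base $G_0/H_0$ necessarily carries the standard Poisson structure of the Poisson subgroup $H_0$, whose characteristic pair is $(\h_0,0)$, which is why no further base datum appears in the statement.

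The heart of the argument is then a purely algebraic equivalence: assuming $\h=\h_0\oplus\h_1$ is a subalgebra and $\h_0^\perp\oplus\g_1^*$ is an ideal of $\g^*$ (the content of Lemma \ref{quotient}), the maximal isotropic $L_{\mathbb{G}}=\h_0\oplus L\oplus\h_0^\perp$ is a Dirac structure of $(\g,\g^*)$ \emph{if and only if} $L$ is a Dirac structure of $(\g_1,\g_1^*)$ and $L$ is $H_0$-invariant. I would verify this by unpacking the three characteristic-pair conditions (1)--(3) for $L_{\mathbb{G}}$ through the double bracket \eqref{doublebr}: condition (1) is immediate from $\h$ being a subalgebra; the $\g_0^*$-component of condition (3) is exactly the ideal condition supplied by Lemma \ref{quotient}; and the remaining $\g_1$- and $\g_1^*$-components of (2) and (3) split into the corresponding conditions for $L$ inside $(\g_1,\g_1^*)$ together with cross terms in which $\Omega^\sharp\xi\in\g_1$ is acted on by elements of $\g_0^*$ and $\h_0^\perp$. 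These cross terms are to be shown to vanish modulo $\h$ precisely when $L$ is $H_0$-invariant, integrating the infinitesimal invariance via connectedness.

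The main obstacle is this last equivalence. Because $\g_1\bowtie\g_1^*$ is \emph{not} a Lie subalgebra of $\g\bowtie\g^*$ (as emphasized after the definition of $H_0$-invariance), the brackets computing the integrability of $L_{\mathbb{G}}$ genuinely produce components outside $\g_1\oplus\g_1^*$, and the delicate point is to show that every such extra term is controlled exactly by the two standing hypotheses — the Poisson-subgroup condition on $H_0$ and the $H_0$-invariance of $L$ — with nothing left over. Once this key lemma is in place, the assignments $\Pi\mapsto L:=\mathcal{L}_{[e]}$ and $L\mapsto\Pi$ (via $L_{\mathbb{G}}=\h_0\oplus L\oplus\h_0^\perp$ and Drinfeld, then checking groupoid-homogeneity through $\mathcal{L}=G_0\times_{H_0}L$) are manifestly mutually inverse; regularity is preserved throughout since the relevant foliation is the one by left $\mathbb{H}$-cosets, which is simple because $\mathbb{H}$ is closed. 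This yields the desired one-one correspondence.
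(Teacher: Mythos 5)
Your architecture is essentially the paper's: run Drinfeld for the group $\mathbb{G}$ and Liu--Weinstein--Xu for the quotient groupoid $\mathbb{G}/H_0$ (via Lemma \ref{quotient} and Proposition \ref{Gamma}) in parallel, and reduce everything to a characteristic-pair computation; your base-point evaluation, giving $\Omega_{\mathbb{G}}\equiv r:=\Omega'([e])\in\wedge^2\g_1 \pmod{\h\wedge\g}$, is even a legitimate shortcut for part of what the paper extracts from Lemma \ref{important}. But the lemma you call the heart of the argument is false as stated. Unpacking conditions (1)--(3) for $L_{\mathbb{G}}=\h_0\oplus L\oplus\h_0^\perp$ through the double bracket \eqref{doublebr} (this is exactly the paper's Lemma \ref{alcp}) produces from the cross terms only the \emph{infinitesimal} condition $x\triangleright r\equiv 0\pmod{\h_1}$ for $x\in\h_0$, i.e.\ $\h_0$-invariance of $L$ --- not $H_0$-invariance. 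Your proposed upgrade, ``integrating the infinitesimal invariance via connectedness,'' is unavailable: Theorem \ref{Main} does not assume $H_0$ (or $\mathbb{H}$) connected, and for disconnected $H_0$ the implication ``$L_{\mathbb{G}}$ Dirac $\Rightarrow$ $L$ is $H_0$-invariant'' genuinely fails. In the paper the group-level invariance $H_0\triangleright r\equiv r\pmod{\h_1}$ is not an algebraic consequence at all; it is forced by the \emph{groupoid} side, namely by the requirement that the characteristic datum of the Dirac structure of $(A,A^*)$ be a globally defined section $\widehat{\mathrm{pr}(r)}\in\Gamma(\wedge^2 A)$ of the associated bundle, i.e.\ an $H_0$-equivariant map constant mod $\h_1$ (Lemmas \ref{Hinv} and \ref{oidcp}, and the proof of Theorem \ref{Main}). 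You do posit the associated-bundle form $\mathcal{L}=G_0\times_{H_0}L$, which encodes precisely this, but then the invariance must be extracted there; as written, your algebraic equivalence and your bundle claim are two inconsistent accounts of where $H_0$-invariance comes from, and the one you lean on is wrong.

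A second, related gap: the coherence step is asserted, not proved. To see that a simultaneously group- and groupoid-homogeneous $\Pi$ yields $\mathcal{L}=G_0\times_{H_0}L$, and that your two assignments are ``manifestly mutually inverse,'' you need to know that the two invariant bivectors built from the same $r$ agree globally, i.e.\ $\overleftarrow{r}^{gp}=\overleftarrow{r}^{gpd}$ for $r\in\wedge^2\g_1$. In the paper this comes out of the explicit bigraded formula of Lemma \ref{important} ($\overleftarrow{e_i}^{gp}=\overleftarrow{e_i}-\widehat{e_i}$ but $\overleftarrow{\xi_k}^{gp}=\overleftarrow{\xi_k}$), whose $(2,0)$- and $(1,1)$-components simultaneously show why no $\pi$- or $K$-part of $\Omega_{\mathbb{G}}$ can survive anywhere on the quotient, not just at $[e]$. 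Without some version of this computation the converse direction (checking groupoid-homogeneity of the Drinfeld bivector) is unsupported. One point in your favor: dropping the condition $\h_1^\perp\triangleright\h_0^\perp\subset\h_0^\perp$ from the algebraic lemma is harmless under your standing hypothesis, since the Poisson-subgroup condition of Lemma \ref{quotient} gives the stronger statement that $\h_0^\perp$ is a $\g_1^*$-submodule.
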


We first characterize Dirac structures of the Lie bialgebra $(\g,\g^*)$. 
Let us  convent that $x,y\in \g_0,u,v\in \g_1$ and $\xi,\eta\in \g_1^*,\alpha,\beta\in \g_0^*$. 
A maximal isotropic subspace $L\subset \g\oplus \g^*$ such that $L\cap \g_i$ is a subspace of $\g_i$ for $i=0,1$ is always  of the form:
\begin{eqnarray}\label{mi}
\nonumber L&=&\{x+u+r^\sharp(\xi)+K^\sharp(\xi)+\pi^\sharp(\alpha)+K^\sharp(\alpha)+\xi+\alpha| x\in \h_0, u\in \h_1,\xi\in \h_1^\perp, \alpha\in \h_0^\perp\}\\
&=&\h\oplus \mathrm{Graph}(\Omega|_{\h^\perp}),
\end{eqnarray} 
where $\h_i\subset \g_i, i=0,1$ are subspaces and 
\[\Omega=\pi+K+r\in \wedge^2 \g_0\oplus  \g_0\wedge \g_1 \oplus \wedge ^2 \g_1.\]
Here $K^\sharp: \g_0^*\to \g_1$ and $K^\sharp: \g_1^*\to \g_0$ are defined by $K^\sharp(\alpha)=K(\alpha,\cdot)$ and $K^\sharp(\xi)=-K(\cdot,\xi)$.
Such a pair $(\h,\Omega)$ with $\h\subset \g $ and $\Omega\in \wedge^2 \g$ is  a characteristic pair of $L$.

\begin{Lem}\label{Hinv}
Let $L\subset \g_1\oplus \g_1^*$ be a Dirac structure of the Lie bialgebra $(\g_1,\g_1^*)$ with the characteristic pair $(\h_1,r\in \wedge^2\g_1)$. Then $L$ is $H_0$-invariant if and only if $H_0\triangleright \h_1\subset \h_1$ and $H_0\triangleright r\equiv r \pmod{\h_1}$.
\end{Lem}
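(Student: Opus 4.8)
The plan is to exploit the bijection between maximal isotropic subspaces of $\g_1\bowtie\g_1^*$ and their characteristic pairs, and to compute how the pair $(\h_1,r)$ transforms under $H_0$. Note first that the $H_0$-action on $\g_1\oplus\g_1^*$ is block-diagonal, $h_0\triangleright(u+\xi)=(h_0\triangleright u)+(h_0\triangleright^*\xi)$ with $h_0\triangleright^*$ the contragredient action, and that it preserves the pairing $\langle\cdot,\cdot\rangle_{\bowtie}$ (because the contragredient action is the inverse-transpose of the action on $\g_1$). Hence each $h_0\triangleright L$ is again maximal isotropic with $(h_0\triangleright L)\cap\g_1$ a subspace of $\g_1$, so it has its own characteristic pair. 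Since characteristic pairs determine the subspace, $H_0$-invariance (i.e.\ $h_0\triangleright L=L$ for every $h_0\in H_0$) is equivalent to the equality of characteristic pairs; only the integrability of $L$ is irrelevant to this particular equivalence, which is purely linear-algebraic.

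First I would treat the subspace entry. Using $L\cap\g_1=\h_1$ and the fact that $h_0\triangleright$ preserves $\g_1$, I get $(h_0\triangleright L)\cap\g_1=h_0\triangleright(L\cap\g_1)=h_0\triangleright\h_1$, so the subspace part of the characteristic pair of $h_0\triangleright L$ is exactly $h_0\triangleright\h_1$. In particular, invariance forces $h_0\triangleright\h_1=\h_1$ for every $h_0$, and since $H_0$ is a group this is the same condition as $H_0\triangleright\h_1\subset\h_1$.

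Next, the bivector part. Writing $L=\{u+r^\sharp\xi+\xi\mid u\in\h_1,\ \xi\in\h_1^\perp\}$ and applying $h_0$ termwise, I observe that as $\xi$ runs over $\h_1^\perp$ the covector $\eta:=h_0\triangleright^*\xi$ runs over $h_0\triangleright^*(\h_1^\perp)=(h_0\triangleright\h_1)^\perp$ (a short duality check, since $\langle h_0^{-1}\triangleright^*\eta,u\rangle=\langle\eta,h_0\triangleright u\rangle$), while the $\g_1$-component of the transformed element is $h_0\triangleright(r^\sharp\xi)$ modulo $h_0\triangleright\h_1$. The key computation is the sharp-map identity $(h_0\triangleright r)^\sharp=(h_0\triangleright)\circ r^\sharp\circ(h_0^{-1}\triangleright^*)$, obtained from the pushforward formula for bivectors; substituting $\eta=h_0\triangleright^*\xi$ yields $(h_0\triangleright r)^\sharp\eta=h_0\triangleright(r^\sharp\xi)$. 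Comparing with the previous line shows that the characteristic bivector of $h_0\triangleright L$ is $h_0\triangleright r$, well-defined only modulo $\h_1\wedge\g_1$; that ambiguity is exactly the content of the congruence $H_0\triangleright r\equiv r\pmod{\h_1}$.

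Combining the two steps, $h_0\triangleright L=L$ for all $h_0\in H_0$ if and only if $h_0\triangleright\h_1=\h_1$ and $h_0\triangleright r\equiv r\pmod{\h_1}$ for all $h_0$, which is the asserted equivalence. The main obstacle I anticipate is the bivector computation in the third step: one must correctly thread the contragredient action through $r^\sharp$ and through the duality $h_0\triangleright^*(\h_1^\perp)=(h_0\triangleright\h_1)^\perp$, and then argue that the built-in $\h_1\wedge\g_1$-ambiguity of the characteristic pair is precisely what weakens the naive equality $h_0\triangleright r=r$ to the congruence $h_0\triangleright r\equiv r\pmod{\h_1}$.
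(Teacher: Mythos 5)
Your proposal is correct and follows essentially the same route as the paper: both reduce $H_0$-invariance to the two stated conditions by direct linear algebra on the presentation $L=\h_1\oplus\mathrm{Graph}(r^\sharp|_{\h_1^\perp})$, with your sharp-map identity $(h_0\triangleright r)^\sharp(h_0\triangleright^*\xi)=h_0\triangleright(r^\sharp\xi)$ being exactly the paper's observation that $h_0\triangleright(r^\sharp\xi)-r^\sharp(h_0\triangleright^*\xi)\in\h_1$ amounts to $h_0\triangleright r-r\in\h_1\wedge\g_1$. Your packaging via the characteristic pair of $h_0\triangleright L$ is a slightly more systematic rendering of the paper's membership check $h_0\triangleright(x+r^\sharp\xi+\xi)\in L$, but the content is the same.
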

\begin{proof}
Note that $L$ is $H_0$-invariant if $h_0\triangleright (x+r^\sharp \xi+\xi)\in L$ for $h_0\in H_0, x\in \h_1$ and $\xi\in \h_1^\perp$, which is equivalent to \[h_0\triangleright \h_1\subset \h_1,\qquad h_0\triangleright (r^\sharp \xi)-r^\sharp (h_0\triangleright^* \xi)\in \h_1.\]
The second equation amounts to the fact that $h_0 \triangleright r-r\in \h_1\wedge \g_1$.\end{proof}

\begin{Lem}\label{alcp}
Let $(\g,\g^*)$ be a Lie 2-bialgebra and assume $\h\subset \g$ is a Lie 2-subalgebra. 
Then a maximal isotropic subspace  $L\subset \g\oplus \g^*$ with characteristic pair $(\h,r\in \wedge^2\g_1)$ is a Dirac structure of the Lie bialgebra $(\g,\g^*)$  if and only if the characteristic pair $(\h_1, r)$ defines an $\h_0$-invariant Dirac structure $L_1$ of $(\g_1,\g_1^*)$ such that $\h_1^\perp\triangleright \h_0^\perp\subset \h_0^\perp$. 
\end{Lem}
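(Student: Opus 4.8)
The plan is to run the maximal‑isotropic element $L$ through the three‑part criterion recalled just before the lemma (conditions (1)--(3), here with $\Omega=r\in\wedge^2\g_1$, so $\pi=K=0$), and to decompose each condition along the bigradings $\g=\g_0\oplus\g_1$ and $\g^*=\g_1^*\oplus\g_0^*$. The structural facts I would establish first are: since $(\g,\g^*)=(\g_0\ltimes\g_1,\g_1^*\ltimes\g_0^*)$, the subspace $\g_1$ is an ideal of $\g$ while $\g_1^*$ is a subalgebra of $\g^*$ (with $\g_0^*$ the ideal). Using $\langle d_* z,\zeta\wedge\zeta'\rangle=-\langle z,[\zeta,\zeta']_{\g^*}\rangle$, only $[\g_1^*,\g_1^*]\subset\g_1^*$ can pair nontrivially with $u\in\g_1$, so $d_* u\in\wedge^2\g_1$ and in fact $d_* u=(d_*)_1 u$, the cobracket of the sub‑bialgebra $(\g_1,\g_1^*)$. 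Hence $d_* r\in\wedge^3\g_1$, and since $\g_1$ is a subalgebra, $[r,r]_\g=[r,r]_{\g_1}\in\wedge^3\g_1$ as well.

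Granting the hypothesis that $\h$ is a Lie $2$-subalgebra disposes of condition (1) for $L$, and records two facts I reuse: $\h_1$ is a subalgebra of $\g_1$ (condition (1) for $L_1$) and $\h_0\triangleright\h_1\subset\h_1$ (half of the infinitesimal $\h_0$-invariance of Lemma \ref{Hinv}); moreover $\phi(\h_1)\subset\h_0$ forces $\phi^*(\h_0^\perp)\subset\h_1^\perp$. For condition (2), the element $d_* r+\tfrac12[r,r]_\g$ lies in $\wedge^3\g_1$, where it equals $(d_*)_1 r+\tfrac12[r,r]_{\g_1}$; since $(\h\wedge\wedge^2\g)\cap\wedge^3\g_1=\h_1\wedge\wedge^2\g_1$, condition (2) for $L$ is literally condition (2) for $L_1$.

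The heart of the argument is condition (3), which I would split into $\g_1^*$- and $\g_0^*$-parts for $\zeta=\xi+\alpha$, $\zeta'=\eta+\beta$ with $\xi,\eta\in\h_1^\perp$, $\alpha,\beta\in\h_0^\perp$. Because $r^\sharp$ annihilates $\g_0^*$ and $\ad^*_w$ kills $\g_0^*$ for $w\in\g_1$, the bracket $[\zeta,\zeta']_\Omega$ reduces to $\ad^*_{r^\sharp\xi}\eta-\ad^*_{r^\sharp\eta}\xi$, whose $\g_1^*$-part is the $\g_1$-level bracket $[\xi,\eta]_r$; thus the $\g_1^*$-component of condition (3) reads $[\xi,\eta]_{\g_1^*}+[\xi,\eta]_r\in\h_1^\perp$, which is condition (3) for $L_1$. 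The $\g_0^*$-component collects four terms on disjoint variable pairs: $\xi\triangleright\beta$, $\eta\triangleright\alpha$, $[\alpha,\beta]_{\g_0^*}$, and the $\g_0^*$-part of $\ad^*_{r^\sharp\xi}\eta-\ad^*_{r^\sharp\eta}\xi$. A short computation using $\langle\ad^*_{r^\sharp\xi}\eta,y\rangle=\langle\eta,y\triangleright r^\sharp\xi\rangle$ identifies this last term with the functional $y\mapsto\langle y\triangleright r,\xi\wedge\eta\rangle$. Specializing variables then separates the $\g_0^*$-condition into $\xi\triangleright\beta\in\h_0^\perp$ (that is, $\h_1^\perp\triangleright\h_0^\perp\subset\h_0^\perp$); the requirement $\langle x\triangleright r,\xi\wedge\eta\rangle=0$ for $x\in\h_0$, i.e. $x\triangleright r\equiv0\pmod{\h_1}$ (the remaining half of $\h_0$-invariance via Lemma \ref{Hinv}); and $[\alpha,\beta]_{\g_0^*}\in\h_0^\perp$.

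I expect the main obstacle to be showing that this last requirement is \emph{not} an extra constraint. The plan is to use the crossed‑module formula $[\alpha,\beta]_{\g_0^*}=(-\phi^*\alpha)\triangleright\beta$ together with $\phi^*(\h_0^\perp)\subset\h_1^\perp$: then $-\phi^*\alpha\in\h_1^\perp$, so $[\alpha,\beta]_{\g_0^*}\in\h_1^\perp\triangleright\h_0^\perp\subset\h_0^\perp$ already follows from the condition $\h_1^\perp\triangleright\h_0^\perp\subset\h_0^\perp$ obtained above. Consequently the $\g_0^*$-component of condition (3) is equivalent to the conjunction of $\h_1^\perp\triangleright\h_0^\perp\subset\h_0^\perp$ and $\h_0\triangleright r\equiv0\pmod{\h_1}$. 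Assembling the pieces, conditions (1)--(3) for $L$ hold precisely when $(\h_1,r)$ is a Dirac structure of $(\g_1,\g_1^*)$ that is $\h_0$-invariant and satisfies $\h_1^\perp\triangleright\h_0^\perp\subset\h_0^\perp$; reading the same graded equivalences backwards yields the converse and completes the proof.
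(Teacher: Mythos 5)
Your proposal is correct and follows essentially the same route as the paper's proof: specialize the three-condition characteristic-pair criterion to $\Omega=r\in\wedge^2\g_1$, split condition (3) into its $\g_1^*$- and $\g_0^*$-components to extract the four graded conditions (including the computation identifying the $\g_0^*$-part of $[\xi,\eta]^{\g}_r$ with $x\mapsto\langle x\triangleright r,\xi\wedge\eta\rangle$), and then observe that $[\h_0^\perp,\h_0^\perp]\subset\h_0^\perp$ is redundant because the crossed-module identity $[\alpha,\beta]_{\g_0^*}=\phi^T(\alpha)\triangleright\beta$ together with $\phi^*(\h_0^\perp)\subset\h_1^\perp$ reduces it to $\h_1^\perp\triangleright\h_0^\perp\subset\h_0^\perp$ — exactly the paper's closing observation that the 2-subalgebra hypothesis makes $\h_0^\perp$ automatically a subalgebra. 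Your explicit grading argument for condition (2), showing $(\h\wedge\wedge^2\g)\cap\wedge^3\g_1=\h_1\wedge\wedge^2\g_1$, merely spells out what the paper compresses into the remark $d_*r=d_{\g_1^*}r$.
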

\begin{proof}
First, following  from the general framework for Dirac structures of Lie bialgebras, we see that $L\subset \g\oplus \g^*$ with characteristic pair $(\h,r\in \wedge^2\g_1)$ is a Dirac structure of  $(\g,\g^*)$ if and only if 
\begin{itemize}
\item[\rm (i)] $\h \subset \g$ is a Lie subalgebra;
\item[\rm (ii)] $d_{*} r+\frac{1}{2}[r,r]\equiv 0 \pmod{\h_1}$;
\item[\rm (iii)] $[a,b]+[a,b]^\g_r\in \h^\perp$ for $a,b\in \h^\perp$, where $[a,b]^\g_r=\ad_{r^\sharp a}^{*}b-\ad_{r^\sharp b}^{*}a$ and $\ad^*$ is the coadjoint action of $\g$ on $\g^*$.
\end{itemize}
Note that $d_{*} r=d_{\g_{1}^*} r$ in (ii).  Then, we claim that (iii) is equivalent to the following conditions:
\begin{eqnarray*}\label{four}\qquad 
[\h_0^\perp,\h_0^\perp]\subset \h_0^\perp,\qquad \h_1^\perp\triangleright \h_0^\perp\subset \h_0^\perp,\qquad [\xi,\eta]+[\xi,\eta]_{r}\in \h_1^\perp,\qquad x\triangleright r \equiv 0 \pmod{\h_1},
\end{eqnarray*}
for all $x\in \h_0$ and $\xi,\eta\in \h_1^\perp $, where $[\xi,\eta]_r=\mathfrak{ad}_{r^\sharp \xi}^{*}\eta-\mathfrak{ad}^*_{r^\sharp \eta}\xi$ and $\mathfrak{ad}^*$ is the coadjoint action of $\g_1$ on $\g_1^*$.
 In fact, for $a,b\in \h_0^\perp$ and $a\in\h_0^\perp, b\in \h_1^\perp$,  we obtain the first two conditions. For $\xi,\eta\in \h_1^\perp$, we find that 
\[\mathrm{pr}_{\g_1^*}([\xi,\eta]+[\xi,\eta]^{\g}_r)=[\xi,\eta]+[\xi,\eta]_{r}\]
 and the $\g_0^*$-part is determined by
\[\langle \mathrm{pr}_{\g_{0}^*}([\xi,\eta]+[\xi,\eta]^{\g}_r),x\rangle=\langle \mathrm{pr}_{\g_{0}^*}([\xi,\eta]^\g_r),x\rangle
=\langle \eta, x\triangleright (r^\sharp \xi)\rangle-\langle \xi, x\triangleright (r^\sharp \eta)\rangle=\langle x\triangleright r,\xi\wedge \eta\rangle,\]
which implies $x\triangleright r\in \h_1\wedge \g_1$ for any $x\in \h_0$. 
Therefore, for $a,b\in \h_1^\perp$, the condition $[a,b]+[a,b]^{\g}_r\in \h^\perp$ infers the last two conditions.

Next, on the condition that $\h\subset \g$ is a Lie 2-subalgebra, $\h_1^\perp \triangleright \h_0^\perp\subset \h_0^\perp$ implies that $\h_0^\perp\subset \g_0^*$ is a Lie subalgebra. Thus, $L$  is  a Dirac structure of  $(\g,\g^*)$ if and only if 
\[\h_1^\perp \triangleright \h_0^\perp\subset \h_0^\perp, \quad d_{\g_1^*} r+\frac{1}{2}[r,r]\equiv 0 \pmod{\h_1},\quad  [\xi,\eta]+[\xi,\eta]_{r}\in \h_1^\perp,\quad x\triangleright r \equiv 0 \pmod{\h_1},\]
for $\xi,\eta\in \h_1^\perp$ and $x\in \h_0$, which implies that $L_1$ is an $\h_0$-invariant Dirac structure  of $(\g_1,\g_1^*)$ such that $\h_1^\perp\triangleright \h_0^\perp\subset \h_0^\perp$.
\end{proof}


This Lemma demonstrates that a Dirac structure $L\subset \g\bowtie \g^*$ with  characteristic pair $(\h,r)$ for $r\in \wedge^2 \g_{1}$ induces two Dirac structures $L_1\subset \g_1\bowtie \g_1^*$ and $\h_0\oplus \h_0^\perp \subset \g_0\bowtie \g_0^*$, respectively. Moreover, we have  $L=L_1\oplus (\h_0\oplus \h_0^\perp)$ as vector spaces.

\begin{Lem}\label{important}
Given a Poisson 2-group $(\mathbb{G},\Pi_{\mathbb{G}})$, let $L$ be a Dirac structure of the Lie bialgebra $(\g,\g^*)$ whose characteristic pair is $(\h,\Omega=\pi+K+r)$, then  $(\mathbb{G}/\mathbb{H},\Pi)$ is a Poisson homogeneous space of the Poisson group $\mathbb{G}$ with the action given by the group left translation and 
\[\Pi=\mathrm{pr}_*(\Pi_{\mathbb{G}}+\overleftarrow{\Omega}^{gp}),\]
where $\overleftarrow{\cdot}^{gp}$ denotes the left translation relative to the Lie group structure on $\mathbb{G}$. In fact, suppose
\begin{eqnarray}\label{Omega}
 \qquad \pi=\omega^{ij} e_i\wedge e_j\in \wedge^2 \g_0,\quad K=\theta^{ik}e_i\wedge \xi_k\in \g_{0}\wedge \g_1,\quad r=\sigma^{kl}\xi_k\wedge \xi_l\in \wedge^2 \g_1,
\end{eqnarray}
where $\{e_i\}$ and $\{\xi_k\}$ are bases of $\g_0$ and $\g_1$, respectively. Then 
\[\overleftarrow{\Omega}^{gp}=\Omega^{2,0}+\Omega^{1,1}+\Omega^{0,2}\in \mathfrak{X}^2(G_0\ltimes G_1)\]
is given by 
\begin{eqnarray*}
\Omega^{2,0}&=&\omega^{ij} \overleftarrow{e_i}\wedge \overleftarrow{e_j}\qquad\in \mathfrak{X}^2(G_0);\\
\Omega^{1,1}&=&-2\omega^{ij} \overleftarrow{e_i}\wedge \widehat{e_j}+\theta^{ik}\overleftarrow{e_i}\wedge \overleftarrow{\xi_k}\qquad \in \mathfrak{X}(G_0)\otimes \mathfrak{X}(G_1);\\
\Omega^{0,2}&=&\omega^{ij} \widehat{e_i}\wedge \widehat{e_j}-\theta^{ik}\widehat{e_i}\wedge \overleftarrow{\xi_k}+\sigma^{kl}\overleftarrow{\xi_k}\wedge \overleftarrow{\xi_l}\qquad \in \mathfrak{X}^2(G_1),
\end{eqnarray*}
where $\overleftarrow{e_i}\in \mathfrak{X}(G_0)$ and $\overleftarrow{\xi}\in \mathfrak{X}(G_1)$ are vector fields generated by left translations and $\widehat{e_i}\in \mathfrak{X}(G_1)$ is the fundamental vector field coming from the action of $G_0$ on $G_1$.
\end{Lem}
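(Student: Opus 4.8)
The plan is to treat the two assertions of the lemma separately: first that $(\mathbb{G}/\mathbb{H},\Pi)$ is a Poisson homogeneous space of the Poisson Lie group $\mathbb{G}$ with the stated $\Pi$, and then the explicit bidegree decomposition of the bivector field $\overleftarrow{\Omega}^{gp}$.

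For the first assertion I would invoke Drinfeld's classification theorem \cite{D2} for the Poisson Lie group $\mathbb{G}$, whose tangent Lie bialgebra is $(\g,\g^*)$. Since $\mathbb{H}\subset\mathbb{G}$ is a closed $2$-subgroup, its Lie algebra is $\h=L\cap\g=\h_0\oplus\h_1$, and the left $\h$-translates form the simple foliation given by the fibres of $\mathbb{G}\to\mathbb{G}/\mathbb{H}$, so $L$ is regular. The characteristic-pair description of the corresponding Poisson structure from \cite{L} then yields exactly $\Pi=\mathrm{pr}_*(\Pi_{\mathbb{G}}+\overleftarrow{\Omega}^{gp})$; here it is essential that $\overleftarrow{\cdot}^{gp}$ denotes translation by the group product $\diamond$ rather than the groupoid product $*$, and one sanity check is that $(\Pi_{\mathbb{G}}+\overleftarrow{\Omega}^{gp})(e,e)=\Omega$, since $\Pi_{\mathbb{G}}$ vanishes at the unit of a Poisson Lie group, so its projection agrees with the base-point value $\Omega\bmod\h$ prescribed by the characteristic pair.

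The substance is the explicit formula, and the key input is the computation of the left-invariant vector fields of the Lie group $\mathbb{G}=G_0\ltimes G_1$. Differentiating the group multiplication \eqref{gpstr}, for $(x,u)\in\g_0\oplus\g_1$ one obtains
\[
\overleftarrow{(x,u)}^{gp}\big|_{(g_0,g_1)}=\frac{d}{dt}\Big|_{t=0}\big(g_0\exp(tx),\,(\exp(-tx)\triangleright g_1)\exp(tu)\big)=\overleftarrow{x}_{g_0}+\overleftarrow{u}_{g_1}-\widehat{x}_{g_1},
\]
where the term $-\widehat{x}_{g_1}$ is the fundamental vector field of the $G_0$-action on $G_1$ and arises precisely from the twist $\exp(-tx)\triangleright g_1$ in the semidirect product. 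In particular $\overleftarrow{(e_i,0)}^{gp}=\overleftarrow{e_i}-\widehat{e_i}$ while $\overleftarrow{(0,\xi_k)}^{gp}=\overleftarrow{\xi_k}$.

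It then remains to substitute these into $\overleftarrow{\Omega}^{gp}$ written in the bases $\{e_i\}$, $\{\xi_k\}$ as in \eqref{Omega}, expand, and collect terms by bidegree with respect to the splitting $T(G_0\times G_1)=TG_0\oplus TG_1$. The $\pi$-term $\omega^{ij}(\overleftarrow{e_i}-\widehat{e_i})\wedge(\overleftarrow{e_j}-\widehat{e_j})$ produces the pure $(2,0)$ piece $\omega^{ij}\overleftarrow{e_i}\wedge\overleftarrow{e_j}$, the pure $(0,2)$ piece $\omega^{ij}\widehat{e_i}\wedge\widehat{e_j}$, and two mixed contributions which, after using the antisymmetry of $\omega^{ij}$ to relabel indices, combine into the single $(1,1)$ term $-2\omega^{ij}\overleftarrow{e_i}\wedge\widehat{e_j}$; the $K$-term $\theta^{ik}(\overleftarrow{e_i}-\widehat{e_i})\wedge\overleftarrow{\xi_k}$ splits into the $(1,1)$ piece $\theta^{ik}\overleftarrow{e_i}\wedge\overleftarrow{\xi_k}$ and the $(0,2)$ piece $-\theta^{ik}\widehat{e_i}\wedge\overleftarrow{\xi_k}$; and the $r$-term is already purely $(0,2)$, giving $\sigma^{kl}\overleftarrow{\xi_k}\wedge\overleftarrow{\xi_l}$. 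Assembling these gives the stated $\Omega^{2,0}$, $\Omega^{1,1}$, $\Omega^{0,2}$. The one step requiring genuine care, and the main potential source of error, is the left-invariant vector field computation above, in particular the sign of the fundamental vector field term $-\widehat{x}$; once this is pinned down, the remainder is the bookkeeping of the factor $2$ coming from antisymmetrizing the mixed $\pi$-terms, with no further structural input needed.
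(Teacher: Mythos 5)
Your proposal is correct and takes essentially the same route as the paper: invoke Drinfeld's classification theorem for the first assertion, then differentiate the semidirect product multiplication \eqref{gpstr} to get the left-invariant vector fields $\overleftarrow{(x,u)}^{gp}=\overleftarrow{x}-\widehat{x}+\overleftarrow{u}$ (with the same sign on the fundamental vector field term), and substitute into $\Omega$. The paper stops after computing these vector fields and leaves the bidegree expansion implicit, whereas you carry out the bookkeeping, including the factor $2$ from the antisymmetry of $\omega^{ij}$, explicitly.
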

\begin{proof} The result follows from Drinfeld's classification theorem for Poisson homogeneous spaces in \cite{D2}. Here we compute the formula of $\overleftarrow{\Omega}^{gp}$.
For $x\in \g_0, u\in \g_1$, with respect to the group multiplication on $\mathbb{G}$, we have \[\overleftarrow{x}^{gp}(g_0,g_1) =(L_{g_0*} x,\frac{d}{dt}|_{t=0}(e^{-tx}\triangleright g_1)=(L_{g_0*} x, -\widehat{x}_{g_1}),\qquad \overleftarrow{u}^{gp}(g_0,g_1)=L_{g_1 *} u.\]
This shows that $\overleftarrow{e_i}^{gp}=\overleftarrow{e_i}-\widehat{e_i}\in TG_0\oplus TG_1$ and $\overleftarrow{\xi_k}^{gp}=\overleftarrow{\xi_k}$. \end{proof}

With the notations and assumptions in Lemma \ref{quotient}, denote by $(A,A^*)$ the Lie bialgebroid of the quotient Poisson groupoid $(\mathbb{G}/H_0,\mathrm{pr}_* \pi_{\mathbb{G}})=(G_0\times_{H_0} G_1,\mathrm{pr}_* \pi_{\mathbb{G}})$. Note that  $A=G_0\times_{H_0} \g_1\to G_0/H_0$ is a quotient of an action Lie algebroid and 
\[\Gamma(A)=\{\Omega'\in C^\infty(G_0, \g_1); \Omega'(g_0h_0)=h_0^{-1}\triangleright \Omega'(g_0)\}.\]
A maximal isotropic  subbundle $L\subset A\oplus A^*$ is of the form:
\begin{eqnarray}\label{cpla}
L=\{X+\Omega^\sharp \xi+\xi|X\in D,\xi\in D^\perp\}=D\oplus \mathrm{Graph}(\Omega|_{D^\perp}),
\end{eqnarray}
where $D\subset A$ is a subbundle, $D^\perp\subset A^*$ is the conormal bundle of $D$ and $\Omega\in \Gamma(\wedge^2 A)$. In fact, $\Omega$ is determined by its projection  $\mathrm{pr}(\Omega)\in \Gamma(\wedge^2 A/D)$. Such a pair $(D,\Omega)$ is called a characteristic pair of $L$.

For example, for any $r\in \wedge^2 \g_1$ such that $H_0\triangleright r\equiv r \pmod{\h_1}$,  the projection \[\mathrm{pr}(r)\in \wedge^2 (\g_{-1}/\h_{-1})^{H_0} \subset  \Gamma(\wedge^2 A/D)\] is a 2-section of $A/D$. Then any lifting 
$\widehat{\mathrm{pr}(r)}\in \Gamma(\wedge^2 A)$ gives rise to a 2-section of $A$.
\begin{Lem}\label{oidcp}
Let $(A,A^*)$ be the above Lie bialgebroid,  $L\subset A\oplus A^*$ a maximal isotropic subbundle with characteristic pair $(D=G_0\times _{H_0} \h_1,\widehat{\mathrm{pr}(r)})$, where $r\in \wedge^2 \g_1$ such that $H_0\triangleright r\equiv r \pmod{\h_1}$. 
Then $L$  is a Dirac structure of the Lie bialgebroid $(A,A^*)$  if and only if the following conditions hold:
\begin{itemize}
\item [\rm (1)] $\h_1\subset \g_1$ is a Lie subalgebra;
\item[\rm (2)] $d_{\g_1^*} r+\frac{1}{2}[r,r]\equiv 0 \pmod{\h_1}$; 
\item[\rm (3)] $[\xi,\eta]+[\xi,\eta]_r\in \h_1^\perp$ for any $\xi,\eta\in \h_1^\perp$. 
\end{itemize}
\end{Lem}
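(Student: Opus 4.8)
The plan is to run the standard characteristic-pair machinery for Dirac structures of a Lie bialgebroid and then to reduce each bundle-level condition to a fibrewise statement about the Lie bialgebra $(\g_1,\g_1^*)$; indeed conditions (1)--(3) are precisely conditions (i)--(iii) appearing in the proof of Lemma~\ref{alcp} that say $(\h_1,r)$ defines a Dirac structure of $(\g_1,\g_1^*)$, so this lemma is the groupoid counterpart of that algebraic reduction. Recalling from \cite{LWX} the algebroid analogue of the three conditions stated for Lie bialgebras at the start of this section, a maximal isotropic subbundle $L\subset A\oplus A^*$ with characteristic pair $(D,\Omega)$ is a Dirac structure of $(A,A^*)$ if and only if (I) $D\subset A$ is a Lie subalgebroid, (II) $d_{A^*}\Omega+\frac{1}{2}[\Omega,\Omega]_A\equiv 0\pmod{\Gamma(D)}$, and (III) $[\xi,\eta]_{A^*}+[\xi,\eta]_\Omega\in\Gamma(D^\perp)$ for all $\xi,\eta\in\Gamma(D^\perp)$. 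I would specialise to $D=G_0\times_{H_0}\h_1$ and $\Omega=\widehat{\mathrm{pr}(r)}$ and show (I)$\Leftrightarrow$(1), (II)$\Leftrightarrow$(2) and (III)$\Leftrightarrow$(3).

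The essential input is the explicit description of the quotient Lie algebroid. Since $A=G_0\times_{H_0}\g_1$ is the quotient by $H_0$ of the action Lie algebroid of $\mathbb{G}$, its sections are the equivariant maps $\Gamma(A)=\{\Omega'\in C^\infty(G_0,\g_1);\,\Omega'(g_0h_0)=h_0^{-1}\triangleright\Omega'(g_0)\}$, and dually $A^*=G_0\times_{H_0}\g_1^*$. Following Lu's quotient construction \cite{Lu}, the bracket on $\Gamma(A)$ decomposes into the pointwise Lie bracket of $\g_1$ plus an anchor (derivative) term, and dually the bracket on $\Gamma(A^*)$ is assembled from the cobracket of $\g_1^*$. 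Because $D$ and $\Omega$ are built from the constant, $H_0$-invariant data $\h_1$ and $r$, evaluating (I)--(III) on the equivariant extensions of constant elements lets the derivative terms drop out modulo $\h_1$: the subalgebroid condition (I) collapses to closure of $\h_1$ under $[\cdot,\cdot]_{\g_1}$, giving (1); the Lie algebroid differential reduces to the algebraic one, with $d_{A^*}\widehat{\mathrm{pr}(r)}$ evaluating to $d_{\g_1^*}r$ and $[\Omega,\Omega]_A$ to $[r,r]$ modulo $\h_1$, giving (2); and the twisted bracket $[\cdot,\cdot]_\Omega$ restricted to $\Gamma(D^\perp)$ reduces to $[\cdot,\cdot]_r$ on $\h_1^\perp$, giving (3).

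The hard part will be justifying this last reduction rigorously, namely that for the particular characteristic pair arising from constant $\g_1$-data the anchor and the $H_0$-twisting contribute nothing beyond the fibrewise Lie bialgebra operations, i.e. the connection/derivative terms in the quotient bracket and differential all land in $\Gamma(D)$ and hence vanish modulo $D$. This is exactly where the standing hypotheses enter: $H_0\triangleright\h_1\subset\h_1$ makes $D=G_0\times_{H_0}\h_1$ well defined, while $H_0\triangleright r\equiv r\pmod{\h_1}$ guarantees that $\mathrm{pr}(r)$ is a genuine $H_0$-invariant section of $\wedge^2(A/D)$ and that conditions (I)--(III) are independent of the chosen lift $\widehat{\mathrm{pr}(r)}$. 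Granting the reduction, the three bundle conditions become exactly (1)--(3), which is the assertion of the lemma.
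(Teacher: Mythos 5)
Your proposal is correct in substance and reduces the lemma in the same direction as the paper: both arguments boil the bundle-level Dirac conditions for $(A,A^*)$ down to the fibrewise Dirac conditions (1)--(3) for the Lie bialgebra $(\g_1,\g_1^*)$, the same three conditions that appear inside the proof of Lemma~\ref{alcp}. The difference is in how the reduction is organised, and it matters for the step you flag as ``the hard part'' and then leave granted. The paper does not verify the three characteristic-pair conditions (I)--(III) directly on the quotient algebroid; instead it interposes a pullback: a Dirac structure of the quotient Lie bialgebroid $(A,A^*)=(G_0\times_{H_0}\g_1,\,G_0\times_{H_0}\g_1^*)$ corresponds to an $H_0$-invariant Dirac structure of the un-quotiented action Lie bialgebroid $(G_0\times\g_1,\,G_0\times\g_1^*)$ over $G_0$, and for that trivial bundle with constant data $(G_0\times\h_1,\,r)$ the Dirac condition is manifestly equivalent to $(\h_1,r)$ being a Dirac structure of $(\g_1,\g_1^*)$. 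This intermediate step is exactly what neutralises the difficulty in your approach: working directly on the quotient, sections of $A$ are $H_0$-equivariant maps, and a constant map valued in $\h_1$ is generally \emph{not} equivariant (elements of $\h_1$ need not be $H_0$-fixed), so your plan to ``evaluate (I)--(III) on equivariant extensions of constant elements'' requires a genuine argument that the evaluation is pointwise legitimate and that the anchor and twisting terms land in $\Gamma(D)$; after pulling back along $G_0\to G_0/H_0$, honest constant sections exist, the anchor terms in brackets of constant sections visibly stay in the subbundle by the Leibniz rule, and the fibrewise identification $d_{A^*}r=d_{\g_1^*}r$, $[\cdot,\cdot]_{\widehat{\mathrm{pr}(r)}}=[\cdot,\cdot]_r$ modulo $\h_1$ is immediate. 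So your route would work, but to close it rigorously you should either carry out the equivariant-section bookkeeping you postponed, or, more economically, insert the paper's invariant-pullback equivalence, at which point your termwise identifications (I)$\Leftrightarrow$(1), (II)$\Leftrightarrow$(2), (III)$\Leftrightarrow$(3) become routine; your standing hypotheses ($H_0\triangleright\h_1\subset\h_1$ and $H_0\triangleright r\equiv r \pmod{\h_1}$) are used exactly as you say, to make $D$ and $\mathrm{pr}(r)$ well defined and the conditions independent of the lift $\widehat{\mathrm{pr}(r)}$.
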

\begin{proof}Note that $(D=G_0\times_{H_0} \h_1,\widehat{\mathrm{pr}(r)})$ is a Dirac structure of the Lie bialgebroid $(A,A^*)$ if and only if $(G_0\times \h_1, r)$ is an $H_0$-invariant Dirac structure of the Lie bialgebroid $(G_0\times \g_1, G_0\times \g_1^*)$, if and only if $(\h_1, r)$ is a Dirac structure of the Lie bialgebra $(\g_1,\g_1^*)$. This proves the lemma. 
\end{proof}

Now we are ready to prove Theorem \ref{Main}.

\begin{proof}
Given a Poisson homogeneous space $(\mathbb{G}/\mathbb{H}, \Pi)$ of the Poisson 2-group $\mathbb{G}$, since it is a Poisson homogeneous space of the Poisson group $\mathbb{G}$, we have \[\Pi-\mathrm{pr}_*(\Pi_{\mathbb{G}})=\mathrm{pr}_*\overleftarrow{\Omega}^{gp}\] for some $\Omega\in \wedge^2 \g=(\wedge^2 \g_0)\oplus  (\g_0\wedge \g_1) \oplus (\wedge ^2 \g_1)$. Here we use $\overleftarrow{\cdot}^{gp}$ to denote the left translation with respect to the group multiplication. As $(\mathbb{G}/\mathbb{H}, \Pi)$ is also a Poisson homogeneous space of the Poisson groupoid $(\mathbb{G}/H_0,\mathrm{pr}_* \pi_{\mathbb{G}})$, we have 
\[\Pi-\mathrm{pr}_*(\Pi_{\mathbb{G}})=\mathrm{pr}_*\overleftarrow{\Omega'}^{gpd}\]
for 
\[\Omega'\in \Gamma(\wedge^2 A)=\{\Omega'\in C^\infty(G_0, 
\wedge^2 \g_1); \Omega'(g_0h_0)=h_0^{-1}\triangleright \Omega'(g_0)\},\]where $\overleftarrow{\cdot}^{gpd}$ denotes the left translation with respect to the groupoid multiplication.  
By Lemma \ref{important}, we have 
\[\overleftarrow{\Omega}^{gp}=\Omega^{2,0}+\Omega^{1,1}+\Omega^{0,2}\in \mathfrak{X}^2(G_0\ltimes G_1),\quad \overleftarrow{\Omega'}^{gpd}=\Omega'^{0,2}\in \mathfrak{X}^2(G_0\ltimes G_1).\]
So in order to make sure $\mathrm{pr}_*(\overleftarrow{\Omega}^{gp}-\overleftarrow{\Omega}^{gpd})=0$ holds,  we must have 
\[\Omega=r\in \wedge^2 \g_1 \pmod{\h_1}, \qquad \Omega'= \widehat{\mathrm{pr}(r)},\qquad \mathrm{and}\qquad  H_0\triangleright r\equiv r \pmod{\h_1}.\]
Here observe that $\overleftarrow{r}^{gpd}=\overleftarrow{r}^{gp}$.

Then applying Drinfeld theorems for Poisson homogeneous spaces of Poisson groups and groupoids, and by Lemmas \ref{Hinv}, \ref{alcp} and \ref{oidcp}, we obtain the one-one correspondence. To be precise, the characteristic pairs of the Dirac structures of the bialgebra $(\g,\g^*)$ and the Lie bialgebroid $(A,A^*)$ are $(\h_0\oplus \h_1, r)$ and $(G_0\times_{H_0} \h_1, \widehat{\mathrm{pr}(r)})$, respectively.
\end{proof}
Recall that a bivector field $\Pi$ on a Lie groupoid $P$ is {\bf multiplicative} if and only if the submanifold \[\Gamma=\{(g,h,gh); t(g)=s(h) \} \subset P\times P\times P\]
is coisotropic with respect to $\Pi\oplus \Pi\oplus (-\Pi)$. It is called {\bf affine} if and only if the submanifold \[\Lambda=\{(g,h,l, hg^{-1}l); t(h)=t(g), s(g)=s(l)\} \subset P\times P\times P\times P\]
is coisotropic with respect to $\Pi\oplus (-\Pi)\oplus (-\Pi)\oplus \Pi$. See \cite{W} for affine Poisson structures and \cite{LLS} for the general case. 
A Lie groupoid with an affine Poisson structure is called an {\bf affine Poisson groupoid}.
 
\begin{Cor}
With the assumptions in Theorem \ref{Main}, if $\h_1^\perp\subset \g_1^*$ is a Lie subalgebra, then Poisson homogeneous $\mathbb{G}$-spaces $\mathbb{G}/\mathbb{H}$  are  affine Poisson groupoids.
\end{Cor}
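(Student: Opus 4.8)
The plan is to exhibit the Poisson tensor $\Pi$ on the groupoid $\mathbb{G}/\mathbb{H}$ as a sum of a multiplicative part and a left-invariant part, and then to argue that any such sum is affine. The starting point is the explicit description obtained in the proof of Theorem~\ref{Main}: the characteristic pair of the Dirac structure is $(\h_0\oplus\h_1,r)$ with $r\in\wedge^2\g_1$ satisfying $H_0\triangleright r\equiv r\pmod{\h_1}$, and
\[\Pi=\mathrm{pr}_*\big(\Pi_{\mathbb{G}}+\overleftarrow{r}^{gpd}\big),\]
where $\mathrm{pr}\colon\mathbb{G}\to\mathbb{G}/\mathbb{H}$ and, by Lemma~\ref{important}, $\overleftarrow{r}^{gp}=\overleftarrow{r}^{gpd}$ since $r\in\wedge^2\g_1$. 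I would factor $\mathrm{pr}$ through the intermediate quotient $\mathbb{G}\to\mathbb{G}/H_0\to\mathbb{G}/\mathbb{H}$, the first map being the Poisson-subgroup quotient of Lemma~\ref{quotient} and the second the groupoid quotient by the wide normal subgroupoid $G_0\times_{H_0}H_1$ of Proposition~\ref{Gamma}.

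The hypothesis enters as a coisotropy condition. I would check that $\h_1^\perp\subset\g_1^*$ being a Lie subalgebra is equivalent to the conormal $D^\perp=G_0\times_{H_0}\h_1^\perp$ being a subalgebroid of $A^*=G_0\times_{H_0}\g_1^*$; the required $H_0$-invariance of $\h_1^\perp$ is automatic from $H_0\triangleright\h_1\subset\h_1$ (Lemma~\ref{Hinv}). By the coisotropic-subgroupoid criterion for Poisson groupoids this means the wide normal subgroupoid $G_0\times_{H_0}H_1$ is coisotropic in $(\mathbb{G}/H_0,\mathrm{pr}_*\Pi_{\mathbb{G}})$, so that $\mathrm{pr}_*\Pi_{\mathbb{G}}$ descends along $\mathbb{G}/H_0\to\mathbb{G}/\mathbb{H}$. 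Combined with the projectability of $\Pi_{\mathbb{G}}$ to $\mathbb{G}/H_0$ from Lemma~\ref{quotient}, this shows $\Pi_{\mathbb{G}}$ is projectable all the way to $\mathbb{G}/\mathbb{H}$; since the pushforward of a multiplicative tensor along the surjective groupoid homomorphism $\mathbb{G}\to\mathbb{G}/\mathbb{H}$ is again multiplicative, $\Pi_0:=\mathrm{pr}_*\Pi_{\mathbb{G}}$ is a multiplicative Poisson tensor on $\mathbb{G}/\mathbb{H}$ and
\[\Pi=\Pi_0+\overleftarrow{\mathrm{pr}(r)}^{gpd},\]
where $\mathrm{pr}(r)\in\wedge^2(\g_1/\h_1)^{H_0}$ is the section of the Lie algebroid of $\mathbb{G}/\mathbb{H}$ generating the second term.

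To finish I would invoke the linearity of the defining condition: coisotropy of $\Lambda$ with respect to $\Pi\oplus(-\Pi)\oplus(-\Pi)\oplus\Pi$ is a linear condition on the bivector, so the affine bivectors on $\mathbb{G}/\mathbb{H}$ form a vector space. It therefore suffices to show separately that multiplicative bivectors are affine and that left-invariant bivectors are affine. The first is the groupoid counterpart of the fact that multiplicative (Poisson--Lie) tensors are affine, obtained by specialising $\Lambda$ to units; the second follows because for a left-invariant tensor the defining identity along $\Lambda$ reduces, via groupoid associativity and the commutation of source- and target-fiber translations, to an identity that holds automatically (the groupoid version of Weinstein's computation in \cite{W}, see also \cite{LLS}). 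Adding the two summands and using that $\Pi$ is Poisson by Theorem~\ref{Main} shows that $\mathbb{G}/\mathbb{H}$ is an affine Poisson groupoid.

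The main obstacle is this last step: establishing cleanly that a left-invariant bivector on a Lie groupoid satisfies the affine coisotropy condition, and that multiplicativity implies affineness, in a form compatible with the homogeneous-space setting in which $\Pi_0$ and $\overleftarrow{\mathrm{pr}(r)}^{gpd}$ are only available after the two successive reductions. I expect the commutation of the left and right groupoid translations on $\mathbb{G}/\mathbb{H}$ to be the technical heart; once that is in place, the decomposition $\Pi=\Pi_0+\overleftarrow{\mathrm{pr}(r)}^{gpd}$ together with the linearity of the affine condition yields the conclusion immediately.
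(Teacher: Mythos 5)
Your proposal is correct and follows essentially the same route as the paper: decompose $\Pi=\mathrm{pr}_*\Pi_{\mathbb{G}}+\overleftarrow{\widehat{\mathrm{pr}(r)}}^{gpd}$ into a multiplicative part and a left-invariant part, use the hypothesis $[\h_1^\perp,\h_1^\perp]\subset\h_1^\perp$ together with the Poisson-subgroup condition on $H_0$ (which makes $\h_0^\perp$ a $\g_1^*$-submodule) to see that $\h^\perp=\h_0^\perp\oplus\h_1^\perp$ is a Lie subalgebra of $\g^*$, so that $\mathrm{pr}_*\Pi_{\mathbb{G}}$ is a well-defined multiplicative Poisson structure on $\mathbb{G}/\mathbb{H}$, and then conclude that multiplicative plus left-invariant is affine. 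The only differences are packaging: the paper verifies projectability directly via the subalgebra condition rather than your two-stage coisotropic descent, and where you unpack the final step into linearity of the affine condition plus the two special cases, the paper simply cites the equivalence from \cite{LLS} (a bivector $\Pi$ is multiplicative if and only if $\Pi+\overleftarrow{\lambda}^{gpd}$ is affine for all $\lambda\in\Gamma(\wedge^2 A_P)$), which subsumes both of your special cases at once.
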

\begin{proof}
 If $H_0\subset \mathbb{G}$ is a Poisson subgroup, we see $\h_0^\perp\subset \g_0^*$ is a $\g_1^*$-submodule. Together with the condition $[\h_1^\perp,\h_1^\perp]\subset \h_1^\perp$, we have $\h^\perp=\h_0^\perp\oplus \h_1^\perp$ is a Lie subalgebra of $\g^*$, which leads to the fact that 
$(\mathbb{G}/\mathbb{H},\mathrm{pr}_*\Pi_{\mathbb{G}})$ is a Poisson manifold and it is obviously a Poisson groupoid. 
 
 It is known from \cite{LLS}  that $\Pi\in \mathfrak{X}^2(P)$ is multiplicative if and only if $\Pi+\overleftarrow{\lambda}^{gpd}$ for all $\lambda\in \Gamma(\wedge^2 A_P)$ is affine, where $A_P$ is the Lie algebroid of the Lie groupoid $P$.  By the proof of Theorem \ref{Main}, the Poisson structure on the Poisson homogeneous space $\mathbb{G}/\mathbb{H}$ is  \[\Pi=\mathrm{pr}_*(\Pi_{\mathbb{G}}+\overleftarrow{r}^{gp})=
 \mathrm{pr}_*\Pi_{\mathbb{G}}+\overleftarrow{\widehat{\mathrm{pr}(r)}}^{gpd},\qquad r\in \wedge^2 \g_1,\] which is affine. 
 \end{proof}
 
By \cite{CSX}, a Poisson 2-group $(G_0\ltimes G_1\rightrightarrows G_0,\Pi_{\mathbb{G}})$ has a mixed product Poisson structure (see \cite{LM}). Namely, both $G_0$ and $G_1$ are quotient Poisson manifolds. Moreover, $(G_1,\Pi_{G_1})$ is a Poisson subgroup of $\mathbb{G}$, where $\Pi_{G_1}=\Pi_{\mathbb{G}}|_{\mathbb{G}_1}$.
\begin{Cor}
Let  $\mathbb{G}/\mathbb{H}$  be a Poisson homogeneous space of the Poisson 2-group $\mathbb{G}$. Then $G_1/H_1$ is a Poisson submanifold of $\mathbb{G}/\mathbb{H}$ and it is a Poisson homogeneous space of the Poisson  group $G_1$. 
\end{Cor}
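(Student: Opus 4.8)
The plan is to exhibit $G_1/H_1$ as the source-fibre of the quotient Lie groupoid $\mathbb{G}/\mathbb{H}$ over the unit $[e]\in G_0/H_0$ and to show that the Poisson structure $\Pi$ is tangent to it. First I would record the embedding $\iota\colon G_1/H_1\to\mathbb{G}/\mathbb{H}$, $[g_1]\mapsto[e,g_1]$, where $G_1\subset\mathbb{G}$ is the closed subgroup $\{(e,g_1)\mid g_1\in G_1\}=\ker s$. Since $(e,g_1)\diamond(e,h_1)=(e,g_1h_1)$, the map $\iota$ is well defined and injective, and from $[h_0,g_1]=[e,h_0\triangleright g_1]$ one sees that $\iota(G_1/H_1)=s^{-1}([e])$ is exactly the source-fibre of $\mathbb{G}/\mathbb{H}$ over $[e]$, hence a closed embedded submanifold. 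Moreover $\mathrm{pr}|_{G_1}=\iota\circ\mathrm{pr}_1$, where $\mathrm{pr}\colon\mathbb{G}\to\mathbb{G}/\mathbb{H}$ and $\mathrm{pr}_1\colon G_1\to G_1/H_1$ are the quotient maps.

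The main step is to prove that $G_1/H_1$ is a Poisson submanifold. By the proof of Theorem \ref{Main}, the Poisson structure is $\Pi=\mathrm{pr}_*(\Pi_{\mathbb{G}}+\overleftarrow{r}^{gp})$ with $r\in\wedge^2\g_1$. The decisive point is that the bivector $\widetilde\Pi:=\Pi_{\mathbb{G}}+\overleftarrow{r}^{gp}$ is tangent to $G_1$ along $G_1$: because $(G_1,\Pi_{G_1})$ is a Poisson subgroup of $\mathbb{G}$ (recalled above from \cite{CSX}), one has $\Pi_{\mathbb{G}}|_{G_1}=\Pi_{G_1}\in\wedge^2TG_1$; and since $\Omega=r$ forces $\Omega^{2,0}=\Omega^{1,1}=0$ in Lemma \ref{important}, the bivector $\overleftarrow{r}^{gp}=\sigma^{kl}\overleftarrow{\xi_k}\wedge\overleftarrow{\xi_l}$ lies in $\wedge^2TG_1$ everywhere. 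Thus $\widetilde\Pi|_{G_1}=\Pi_{G_1}+\overleftarrow{r}^{G_1}$, with $\overleftarrow{r}^{G_1}$ the left translation of $r$ on the Lie group $G_1$. Evaluating $\Pi$ at $[e,g_1]$ on the representative $(e,g_1)$ and using $\mathrm{pr}|_{G_1}=\iota\circ\mathrm{pr}_1$, I obtain $\Pi_{[e,g_1]}=\iota_*(\mathrm{pr}_1)_*\big(\Pi_{G_1}+\overleftarrow{r}^{G_1}\big)_{(e,g_1)}$, so $\Pi$ is tangent to $\iota(G_1/H_1)$ and the induced structure is $\Pi_1:=(\mathrm{pr}_1)_*(\Pi_{G_1}+\overleftarrow{r}^{G_1})$. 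Hence $G_1/H_1$ is a Poisson submanifold, carrying precisely the Drinfeld Poisson structure associated with the Dirac structure $L_1$ of $(\g_1,\g_1^*)$ with characteristic pair $(\h_1,r)$ produced in Theorem \ref{Main}.

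It then remains to check homogeneity. Restricting the $\mathbb{G}$-action $g\triangleright[g']=[g\diamond g']$ to $g=(e,g_1)$ and $[g']=[e,g_1']$ yields $(e,g_1)\triangleright[e,g_1']=[e,g_1g_1']$, the left-translation action of $G_1$ on $G_1/H_1$; it is transitive and preserves the source-fibre $s^{-1}([e])$. Since the action map $\mathbb{G}\times(\mathbb{G}/\mathbb{H})\to\mathbb{G}/\mathbb{H}$ is a Poisson map for the product structure, and $G_1\times(G_1/H_1)$ is a Poisson submanifold of its source (a product of the Poisson submanifolds $G_1$ and $G_1/H_1$), the restricted map is Poisson; as its image lies in the Poisson submanifold $G_1/H_1$, the corestriction $G_1\times(G_1/H_1)\to G_1/H_1$ is Poisson as well. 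This is exactly the assertion that $(G_1/H_1,\Pi_1)$ is a Poisson homogeneous space of the Poisson group $(G_1,\Pi_{G_1})$.

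I expect the only delicate point to be the compatibility of the push-forward $\mathrm{pr}_*$ with restriction to $G_1$ in the second paragraph; this succeeds precisely because, for a Poisson homogeneous space of a Poisson 2-group, the characteristic bivector lives entirely in $\wedge^2\g_1$, so the transverse terms $\Omega^{2,0}$ and $\Omega^{1,1}$ that would otherwise obstruct tangency to $G_1$ vanish. The remaining verifications, namely well-definedness of $\iota$, the submanifold claim, and the two standard facts that a Poisson map restricts to, and corestricts along, Poisson submanifolds, are routine.
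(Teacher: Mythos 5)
Your proof is correct, and it takes a genuinely different route from the paper's. The paper argues top--down through Drinfeld's classification: since $G_1\subset\mathbb{G}$ is a Poisson subgroup and $r\in\wedge^2\g_1$, the pair $(\h_1,r)$ is a Dirac structure of the Lie bialgebra $(\g_1,\g_1^*)$; Drinfeld's theorem then equips $G_1/H_1$ with the homogeneous Poisson structure $\mathrm{pr}_{1*}(\Pi_{G_1}+\overleftarrow{r})$ for the Poisson group $(G_1,\Pi_{G_1})$; and the Poisson-submanifold claim is disposed of in one line by asserting that the inclusion $[g_1]\mapsto[e,g_1]$ is ``obviously'' a Poisson map. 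You reverse the logical order and work bottom--up: you first identify $\iota(G_1/H_1)$ with the source fibre $s^{-1}([e])$ of the quotient groupoid (using $[h_0,g_1]=[e,h_0\triangleright g_1]$, which is a nice structural observation absent from the paper), then verify the tangency of $\Pi_{\mathbb{G}}+\overleftarrow{r}^{gp}$ to $G_1=\ker s$ directly from the formula $\overleftarrow{u}^{gp}(g_0,g_1)=(0,L_{g_1*}u)$ computed in Lemma \ref{important} --- correctly noting that $\Omega=r$ kills the components $\Omega^{2,0}$ and $\Omega^{1,1}$ that would obstruct tangency --- and push forward along $\mathrm{pr}|_{G_1}=\iota\circ\mathrm{pr}_1$ to obtain both the Poisson-submanifold property and the explicit induced bivector $\Pi_1=(\mathrm{pr}_1)_*(\Pi_{G_1}+\overleftarrow{r}^{G_1})$. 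Finally you get homogeneity by restricting and corestricting the Poisson action map along products of Poisson submanifolds, thereby avoiding any appeal to Drinfeld's theorem at the $G_1$ level. Both arguments consume the same inputs ($\Pi=\mathrm{pr}_*(\Pi_{\mathbb{G}}+\overleftarrow{r})$ with $r\in\wedge^2\g_1$ from the proof of Theorem \ref{Main}, and $\Pi_{G_1}=\Pi_{\mathbb{G}}|_{G_1}$ from the Poisson subgroup property); what the paper's route buys is the immediate identification of the associated Dirac structure $(\h_1,r)$ of $(\g_1,\g_1^*)$, while yours buys a self-contained, checkable proof of precisely the step the paper leaves implicit, at the cost of re-deriving rather than quoting the homogeneous structure --- though your formula for $\Pi_1$ shows the two structures agree, so the Dirac-theoretic description is recovered as well.
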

\begin{proof}
Suppose $\mathbb{G}/\mathbb{H}$ with the Poisson structure $\Pi= \mathrm{pr}_*(\Pi_{\mathbb{G}}+\overleftarrow{r})$ for $r\in \wedge^2 \g_1$  is a Poisson homogeneous space of the Poisson 2-group $(\mathbb{G},\Pi_{\mathbb{G}})$. 
Since $G_1$ is a Poisson subgroup of $\mathbb{G}$ and $r\in \wedge^2 \g_1$, we see that $(\h_1,r)$ is a Dirac structure of the Lie bialgebra $(\g_1,\g_1^*)$. By Drinfeld's theorem, 
we have that $(G_1/H_1,\mathrm{pr}_*(\Pi_{G_1}+\overleftarrow{r}))$ is a Poisson homogeneous space of the Poisson group $(G_1,\Pi_{G_1})$. Moreover, it is obvious that 
 the inclusion map 
\[G_1/H_1\hookrightarrow \mathbb{G}/\mathbb{H}, \qquad [g_1]\mapsto [e,g_1]\]
is a Poisson map.  
\end{proof}

Let us recall a class of  Lie 2-bialgebras coming from an $r$-matrix, called a {\bf coboundary Lie 2-bialgebra}. Let $\g$ be a Lie 2-algebra. An element  $\mu\in \wedge^2 \g_1$ is called an {\bf $r$-matrix} of $\g$ if $x\triangleright[\mu,\mu]=0$ for all $x\in \g_0$. Such a $\mu$ gives rise to a Lie 2-bialgebra $(\g,\g^*)$, where the Lie bracket on $\g_1^*$ is defined by
\[\langle [\xi,\eta]_\mu,u\rangle:=\langle\xi\wedge \eta, [\mu,u]\rangle,\qquad \forall \xi,\eta\in \g_1^*, u\in \g_1,\]
and the action of $\g_1^*$ on $\g_0^*$ is  given by
\begin{eqnarray}\label{action10}
\langle \xi\triangleright \alpha, x\rangle:=\langle \xi\wedge \phi^*\alpha,x\triangleright \mu\rangle,\qquad \xi\in \g_1^*,\alpha\in \g_0^*,x\in \g_0.
\end{eqnarray}
 If $[\mu,\mu]=0$, the coboundary Lie 2-bialgebra is called a {\bf triangular Lie 2-bialgebra}. 
 If the Lie 2-algebra $\g$ is integrated to a Lie 2-group $\mathbb{G}$, we further obtain the triangular Poisson 2-group $(\mathbb{G},\overleftarrow{\lambda_\mu}^{gpd}-\overrightarrow{\lambda_\mu}^{gpd})$, where $\lambda_\mu:G_0\to \wedge^2 \g_1$ is a $2$-section of the Lie algebroid $G_0\ltimes \g_1\to G_0$ defined by
\[\lambda_\mu(g_0)=g_0^{-1}\triangleright \mu-\mu,\qquad \forall g_0\in G_0.\]
See \cite[Theorem 3.11]{CSX2} for more details.

\begin{Cor}
Let $(\mathbb{G},\overleftarrow{\lambda_\mu}^{gpd}-\overrightarrow{\lambda_\mu}^{gpd})$ be a triangular Poisson 2-group. 
Then a maximal isotropic subspace $L$ with characteristic pair $(\h_1,r)$ for $r\in \wedge^2 \g_1$ defines a Dirac structure of $(\g_1,\g^*_1)$ if and only if 
the following conditions holds
\begin{itemize}
\item[\rm (1)] $\h_1\subset \g_1$ is a Lie subalgebra;
\item[\rm (2)]  $[\mu+r,\mu+r]\equiv 0$ ($\mathrm{mod}$ $\h_1$);
 \item[\rm (3)] 
 $[u,\mu+r]\equiv 0$ ($\mathrm{mod}$ $\h_1$) for all $u\in \h_1$. 
\end{itemize}
In this case, $H_0\subset \mathbb{G}$ is a Poisson subgroup if and only if $(\phi\wedge 1)(x\triangleright \mu)\in \h_0\wedge \g_{-1}$  for all $x\in \h_0$. On this condition, Poisson homogenous spaces $\mathbb{G}/\mathbb{H}$ correspond to $H_0$-invariant Dirac structures of $(\g_1,\g_1^*)$.
\end{Cor}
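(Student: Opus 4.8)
The plan is to obtain all three assertions by specializing the general characterization of Dirac structures of the Lie bialgebra $(\g_1,\g_1^*)$ to the coboundary structure determined by the $r$-matrix $\mu$, and then invoking the main theorem. First I would recall, as in Lemma \ref{oidcp}, that a maximal isotropic $L$ with characteristic pair $(\h_1,r)$ is a Dirac structure of $(\g_1,\g_1^*)$ precisely when: (i) $\h_1\subset\g_1$ is a Lie subalgebra; (ii) $d_{\g_1^*} r+\frac12[r,r]\equiv0\pmod{\h_1}$; and (iii) $[\xi,\eta]_{\g_1^*}+[\xi,\eta]_r\in\h_1^\perp$ for all $\xi,\eta\in\h_1^\perp$. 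Condition (i) is already condition (1) of the statement.

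For condition (ii), I would use that in the triangular case the cobracket acts as $d_{\g_1^*}=[\mu,\cdot\,]$ as a degree-one derivation of $\wedge^\bullet\g_1$, together with $[\mu,\mu]=0$. By bilinearity and the graded symmetry of the Schouten bracket on bivectors, $d_{\g_1^*} r+\frac12[r,r]=[\mu,r]+\frac12[r,r]=\frac12[\mu+r,\mu+r]$, so (ii) becomes $[\mu+r,\mu+r]\equiv0\pmod{\h_1}$, which is condition (2). For condition (iii) I would exploit that the coboundary bracket is $\langle[\xi,\eta]_{\g_1^*},u\rangle=\langle\xi\wedge\eta,[\mu,u]\rangle$ and that the $r$-correction obeys the same formula with $\mu$ replaced by $r$; hence by linearity $[\xi,\eta]_{\g_1^*}+[\xi,\eta]_r=[\xi,\eta]_{\mu+r}$ with $\langle[\xi,\eta]_{\mu+r},u\rangle=\langle\xi\wedge\eta,[\mu+r,u]\rangle$. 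Then (iii) reads $\langle\xi\wedge\eta,[\mu+r,u]\rangle=0$ for all $\xi,\eta\in\h_1^\perp$ and $u\in\h_1$; since $\h_1^\perp$ annihilates $\h_1$, for fixed $u$ this is equivalent to $[\mu+r,u]\in\h_1\wedge\g_1$, i.e. $[u,\mu+r]\equiv0\pmod{\h_1}$ for all $u\in\h_1$, which is condition (3). This establishes the first assertion.

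For the criterion on $H_0$, I would apply Lemma \ref{quotient}, by which $H_0\subset\mathbb{G}$ is a Poisson subgroup if and only if $\h_0^\perp\oplus\g_1^*$ is an ideal of $\g^*$, equivalently if and only if $\h_0^\perp\subset\g_0^*$ is a $\g_1^*$-submodule, that is $\xi\triangleright\h_0^\perp\subset\h_0^\perp$ for all $\xi\in\g_1^*$. Unwinding this with the explicit action \eqref{action10}, the condition $\xi\triangleright\alpha\in\h_0^\perp$ for all $\xi\in\g_1^*,\alpha\in\h_0^\perp$ means $\langle\xi\wedge\phi^*\alpha,x\triangleright\mu\rangle=0$ for all such $\xi,\alpha$ and all $x\in\h_0$. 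Rewriting $\langle\phi^*\alpha,\cdot\rangle=\langle\alpha,\phi(\cdot)\rangle$ turns this pairing into the pairing of $\alpha\otimes\xi$ against $(\phi\wedge1)(x\triangleright\mu)\in\g_0\otimes\g_1$, and demanding vanishing for all $\xi\in\g_1^*$ and all $\alpha\in\h_0^\perp$ is, via the annihilator duality $\g_0^*/\h_0^\perp\cong\h_0^*$, exactly the statement that $(\phi\wedge1)(x\triangleright\mu)\in\h_0\wedge\g_{-1}$ for every $x\in\h_0$. Once $H_0$ is a Poisson subgroup, the final claim is immediate from Theorem \ref{Main}: Poisson homogeneous $\mathbb{G}$-spaces $\mathbb{G}/\mathbb{H}$ are in bijection with the $H_0$-invariant Dirac structures of $(\g_1,\g_1^*)$, namely those of the above type that additionally satisfy the invariance conditions of Lemma \ref{Hinv}.

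I expect the main obstacle to be the bookkeeping of signs and coadjoint conventions in steps (ii)--(iii): one must verify carefully that $d_{\g_1^*} r=[\mu,r]$ and that the $r$-bracket satisfies $\langle[\xi,\eta]_r,u\rangle=\langle\xi\wedge\eta,[r,u]\rangle$, so that the two brackets genuinely assemble into the single $(\mu+r)$-bracket. The dualization in the $H_0$-criterion is the other delicate point, since it requires identifying the contraction of $(\phi\wedge1)(x\triangleright\mu)$ against $\h_0^\perp\otimes\g_1^*$ with membership in $\h_0\wedge\g_{-1}$ through the canonical isomorphism $\g_0^*/\h_0^\perp\cong\h_0^*$.
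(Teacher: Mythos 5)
Your proposal is correct and follows essentially the same route as the paper: the criterion for $H_0$ to be a Poisson subgroup is obtained exactly as in the paper's proof (Lemma \ref{quotient} plus unwinding the action \eqref{action10} by duality), and the final bijection is the same direct appeal to Theorem \ref{Main} combined with Lemma \ref{Hinv}. The only difference is that for the first assertion the paper simply cites \cite[Corollary 3.2]{L}, whereas you unpack that citation by specializing the characteristic-pair criteria to the coboundary case via $d_{\g_1^*}r=[\mu,r]$ and $\langle[\xi,\eta]_r,u\rangle=\langle\xi\wedge\eta,[r,u]\rangle$ --- a sound, self-contained substitute, with the convention-dependent signs you rightly flag being the only point requiring care.
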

\begin{proof}
The first part of the corollary follows directly from  \cite[Corollary 3.2]{L}. Then $H_0\subset \mathbb{G}$ is a Poisson subgroup if and only if $\h_0^\perp\subset \g_0^*$ is a $\g_{-1}^*$-submodule. By \eqref{action10}, taking $\xi\in \g_{1}^*,\alpha\in \h_0^\perp$, we have $\xi\triangleright \alpha\in \h_0^\perp$ if and only if $\langle \xi\wedge \alpha, (1\wedge \phi)(x\triangleright  \mu)\rangle=0, \forall x\in \h_0$,
which amounts to say $(1\wedge \phi)(x\triangleright \mu)\in \h_0\wedge \g_{1}$  for all $x\in \h_0$. The rest of this result holds by Theorem \ref{Main}.
\end{proof}

\begin{Ex}
When $\mathbb{H}=\{e\}$ in Theorem \ref{Main}, the Poisson homogeneous space of the Poisson 2-group $(\mathbb{G}, \Pi_{\mathbb{G}})$ is  $\mathbb{G}$ itself with the affine Poisson structure  $\Pi_{\mathbb{G}}+\overleftarrow{r}$ (\cite{LLS, W}), where  $r\in \wedge^2 \g_1$ satisfies that $d_{\g_1^*} r+\frac{1}{2}[r,r]=0$. The corresponding Dirac structure of $(\g_1,\g_1^*)$ is $\mathrm{Graph}(r^\sharp)$.
\end{Ex}
\begin{Ex}
Let $(\mathbb{G},\Pi_{\mathbb{G}})$ be a Poisson 2-group with a closed 2-subgroup $\mathbb{H}$ satisfying that $H_0\subset \mathbb{G}$ is a Poisson subgroup. Then $(\mathbb{G}/\mathbb{H},\mathrm{pr}_*\Pi_{\mathbb{G}})$ is a Poisson homogeneous space of the Poisson 2-group $\mathbb{G}$ if and only if $\h_1^\perp \subset \g_1^*$ is a Lie subalgebra. The corresponding Dirac structure of $(\g_1,\g_1^*)$ is $\h_1\oplus \h_1^\perp$. In this case, $(\mathbb{G}/\mathbb{H},\mathrm{pr}_*\Pi_{\mathbb{G}})$ is a quotient Poisson groupoid.
\end{Ex}
\begin{Ex}
Let $\mathbb{G}$ be a Poisson 2-group with trivial Poisson structure and  assume that  $\mathbb{H}\subset \mathbb{G}$ is a closed 2-subgroup. Then there is a one-one correspondence between Poisson homogeneous spaces $(\mathbb{G}/\mathbb{H},\mathrm{pr}_*\overleftarrow{r})$ of $\mathbb{G}$ and $H_0$-invariant Dirac structures $L$ of $(\g_1,\g^*_1)$ with characteristic pairs  $(\h_1,r)$, in which $r\in \wedge^2 \g_1$ satisfies that 
\[[r,r]\equiv 0 \pmod{\h_1}, \qquad [\h_1^\perp, \h_1^\perp]_r\subset \h_1^\perp,\qquad H_0\triangleright r\equiv r \pmod{\h_1}.\]
\end{Ex}

\begin{Ex}
Let $(\g,\g^*)$ be a Lie 2-bialgebra with trivial Lie 2-algebra structure on $\g$. Then the abelian 2-group $\mathbb{G}=\g$ with the linear Poisson structure $\Pi_{KKS}$ is a Poisson 2-group. Let $\h:\h_1\to \h_0$ be a 2-subspace of $\g$ such that $\g_1^*\triangleright \h_0^\perp\subset \h_0^\perp$. Then there is a one-one correspondence between Poisson homogeneous spaces $(\mathrm{\g}/\mathrm{\h}, \mathrm{pr}_*(\Pi_{KKS}+r))$ and  Dirac structures of $(\g_1,\g_1^*)$ with characteristic pair $(\h_1,r)$. Here $r$ is an element in $\wedge^2\g_1$ satisfies that $d_{\g_1^*} r\equiv 0 \pmod{\h_1}$ and $[\h_1^\perp,\h_1^\perp]\subset \h_1^\perp$. 
\end{Ex}


\end{document}